\newtheorem{theorem}{Theorem}[section]
\newtheorem{lemma}[theorem]{Lemma}
\newtheorem{observation}[theorem]{Observation}
\newtheorem{corollary}[theorem]{Corollary}
\newtheorem{definition}[theorem]{Definition}
\crefname{claim}{Claim}{Claims}
\newtheorem{claim}[theorem]{Claim}
\newcommand{\cqed}{\ensuremath{\lhd}}
\newenvironment{claimproof}{\par
	\pushQED{\cqed}%
	\normalfont \topsep6\p@\@plus6\p@\relax
	\trivlist
	\item\relax
	{\itshape
		Proof of the claim\@addpunct{.}}\hspace\labelsep\ignorespaces
}{%
	\hfill\popQED\endtrivlist\@endpefalse
}
\newcommand{\ceil}[1]{\left\lceil #1 \right\rceil}
\newcommand{\bag}{\ensuremath{\mathtt{bag}}}
\newcommand{\Oh}{\mathcal{O}}
\newcommand{\parent}{\mathsf{parent}}
\newcommand{\tw}{\mathrm{tw}}
\newcommand{\td}{\mathrm{td}}
\newcommand{\cc}{\mathrm{cc}}
\newcommand{\D}{\mathbb{D}}
\newcommand{\N}{\mathbb{N}}
\newcommand{\Cc}{\mathscr{C}}
\newcommand{\Nc}{\mathcal{N}}
\newcommand{\R}{\mathbb{R}}
\newcommand{\Cfrak}{\ensuremath{\mathfrak{C}}\xspace}
\renewcommand{\leq}{\leqslant}
\renewcommand{\geq}{\geqslant}
\renewcommand{\le}{\leqslant}
\renewcommand{\ge}{\geqslant}
\newcommand{\eps}{\varepsilon}
\newcommand{\OPT}{\mathsf{OPT}}
\newcommand{\wei}{\mathbf{w}}
\renewcommand{\preceq}{\preccurlyeq}
\newcommand{\anc}{\mathsf{anc}}
\newcommand{\desc}{\mathsf{desc}}
\newcommand{\Reach}{\mathsf{Reach}}
\newcommand{\rev}{\mathsf{rev}}
\newcommand{\cost}{\mathsf{cost}}
\newcommand{\supply}{\mathsf{supply}}
\newcommand{\demand}{\mathsf{demand}}
\newcommand{\Supply}{\mathsf{Supply}}
\newcommand{\Demand}{\mathsf{Demand}}
\newcommand{\interaction}{\mathsf{interaction}}
\newcommand{\problemname}[1]{{\sc #1}\xspace}
\newcommand{\MaxInd}{\problemname{Max Weight Independent Set}}
\newcommand{\CSP}{\problemname{Max Weight Nullary 2CSP}}
\newcommand{\CSPs}{\problemname{2CSP}}
\newcommand{\MinDom}{\problemname{Min Weight Dominating Set}}
\newcommand{\MinGeneralizedDom}{\problemname{Min Weight Generalized Domination}}
\newcommand{\MWIS}{\mathsf{IS}}
\newcommand{\MWDS}{\mathsf{DS}}
\newcommand{\commandname}[1]{{\sf #1}}
\newcommand{\AddVertex}{\commandname{AddVertex}}
\newcommand{\AddEdge}{\commandname{AddEdge}}
\newcommand{\RemoveEdge}{\commandname{RemoveEdge}}
\newcommand{\UpdateRevenue}{\commandname{UpdateRevenue}}
\newcommand{\UpdateWeight}{\commandname{UpdateWeight}}
\newcommand{\QueryMWIS}{\commandname{QueryMWIS}}
\newcommand{\QueryMWDS}{\commandname{QueryMWDS}}
\newcommand{\UpdateCost}{\commandname{UpdateCost}}
\newcommand{\Clear}{\commandname{Clear}}
\newcommand{\Gold}{G^{\mathrm{old}}}
\newcommand{\Ginit}{G^{\mathrm{init}}}
\newcommand{\Gcur}{G^{\mathrm{cur}}}
\newcommand{\Iold}{I^{\mathrm{old}}}
\newcommand{\Iinit}{I^{\mathrm{init}}}
\newcommand{\Icur}{I^{\mathrm{cur}}}
\newcommand{\Vcur}{V^{\mathrm{cur}}}
\newcommand{\shat}{\widehat{s}}
\newcommand{\dhat}{\widehat{d}}
\renewcommand{\setminus}{-}
\begin{document}
\title{Fully dynamic approximation schemes on planar and apex-minor-free graphs%
\thanks{This work is a~part of the project BOBR (WN, MP, MS) that has received funding from the European Research Council (ERC) under the European Union's Horizon 2020 research and innovation programme (grant agreement No.\ 948057). Tuukka Korhonen was supported by the Research Council of Norway via the project
BWCA (grant no. 314528).}}
\author{Tuukka Korhonen\thanks{Department of Informatics, University of Bergen, Norway (\texttt{tuukka.korhonen@uib.no})}  \and
Wojciech Nadara\thanks{Institute of Informatics, University of Warsaw, Poland (\texttt{w.nadara@mimuw.edu.pl})} \and
Michał Pilipczuk\thanks{Institute of Informatics, University of Warsaw, Poland (\texttt{michal.pilipczuk@mimuw.edu.pl})} \and
Marek Sokołowski\thanks{Institute of Informatics, University of Warsaw, Poland (\texttt{marek.sokolowski@mimuw.edu.pl})}}
\date{}
\maketitle
\thispagestyle{empty}

 \begin{textblock}{20}(-1.9, 8.2)
  \includegraphics[width=40px]{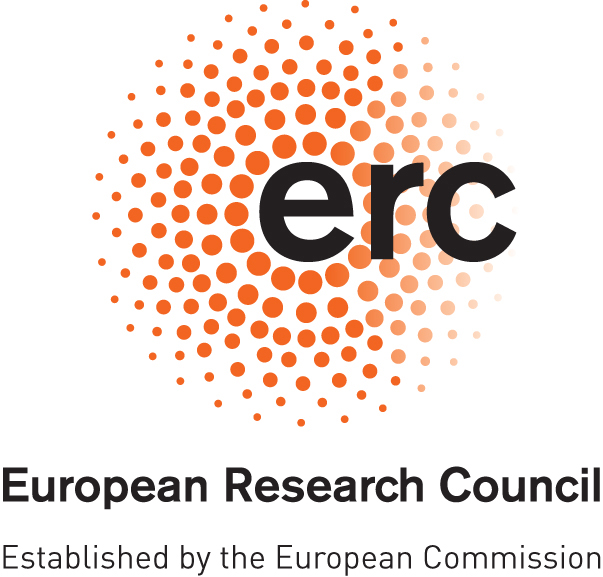}%
 \end{textblock}
 \begin{textblock}{20}(-2.15, 8.5)
  \includegraphics[width=60px]{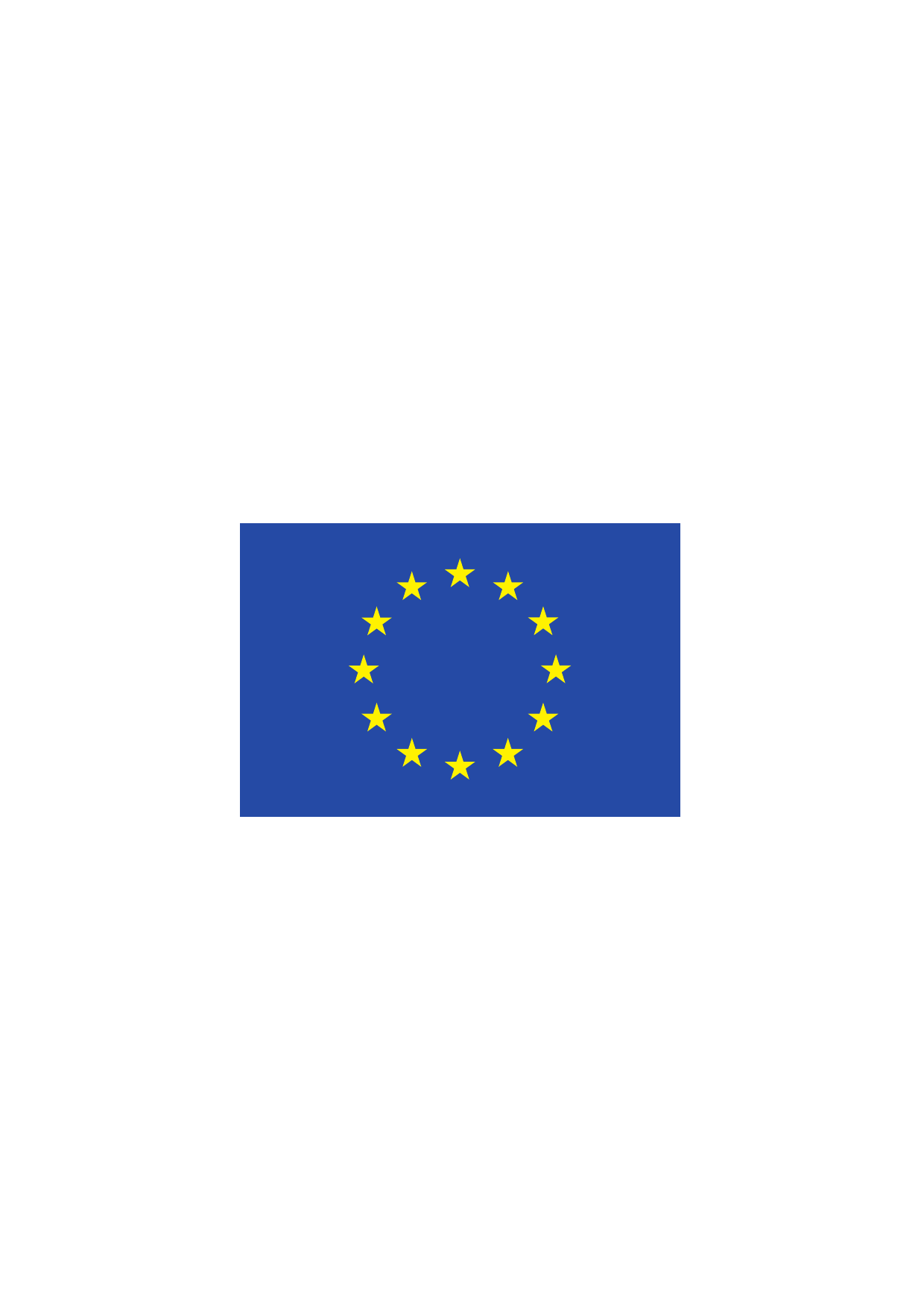}%
 \end{textblock}

\begin{abstract}
The classic technique of Baker~[J.~ACM~'94] is the most fundamental approach for designing approximation schemes on planar, or more generally topologically-constrained graphs, and it has been applied in a myriad of different variants and settings throughout the last 30 years. In this work we propose a dynamic variant of Baker's technique, where instead of finding an approximate solution in a given static graph, the task is to design a data structure for maintaining an approximate solution in a fully dynamic graph, that is, a graph that is changing over time by edge deletions and edge insertions. Specifically, we address the two most basic problems --- {\sc{Maximum Weight Independent Set}} and {\sc{Minimum Weight Dominating Set}} --- and we prove the following: for a fully dynamic $n$-vertex planar graph $G$, one can
\begin{itemize}[nosep]
 \item maintain a $(1-\eps)$-approximation of the maximum weight of an independent set in $G$ with amortized update time $f(\eps)\cdot n^{o(1)}$; and,
 \item under the additional assumption that the maximum degree of the graph is bounded at all times by a constant, also maintain a $(1+\eps)$-approximation of the minimum weight of a dominating set in $G$ with amortized update time $f(\eps)\cdot n^{o(1)}$.
\end{itemize}
In both cases, $f(\eps)$ is doubly-exponential in $\mathrm{poly}(1/\eps)$ and the data structure can be initialized in time $f(\eps)\cdot n^{1+o(1)}$. All our results in fact hold in the larger generality of any graph class that excludes a fixed apex-graph as a minor.

\end{abstract}

\newpage
\pagenumbering{arabic}

\section{Introduction}\label{sec:intro}


{\em{Baker's technique}}, also known as {\em{shifting}} or {\em{layering}} and proposed by Baker in 1994~\cite{Baker94}, is the most fundamental technique for designing  {\em{efficient polynomial-time approximation schemes}} (EPTASes) --- approximation schemes with running time of the form $f(\eps)\cdot n^{\Oh(1)}$ --- for problems on planar graphs, or more generally in topologically restricted graph classes. In terms of restrictions on the input graph, the basic approach works as long as the considered class of graphs $\Cc$ has {\em{bounded local treewidth}}: the treewidth of any connected $G\in\Cc$ is bounded by a function of the radius of $G$. This includes planar graphs (with the function being $3$ times the radius) and, as proved by Eppstein~\cite{Eppstein00}, also all {\em{apex-minor-free}} graph classes: classes that exclude a fixed {\em{apex graph}} --- a graph that can be made planar by removing one vertex --- as a minor. However, further generalizations of the technique apply also in the setting of $H$-minor-free graphs for any fixed $H$~\cite{Grohe03}, and even beyond~\cite{Dvorak18,Dvorak20,Dvorak22}. In terms of versatility, the range of applicability of Baker's technique is surprisingly wide, and can be even captured by meta-theorems concerning optimization problems expressible in first-order logic~\cite{DawarGKS06,Dvorak22}. Importantly, the spectrum of applicability includes maximization problems of packing nature such as {\sc{Maximum Weight Independent Set}} --- in a vertex-weighted graph $G$, find a set $I$ of maximum possible weight consisting of pairwise non-adjacent vertices --- and minimization problems of covering nature such as {\sc{Minimum Weight Dominating Set}} --- find a~set of vertices $D$ of minimum possible weight such that every vertex outside of $D$ has~a~neighbor~in~$D$.

During the last $30$ years, the basic principle behind Baker's technique has inspired countless important developments in the area of approximation schemes, see e.g.~\cite{Cohen-AddadPP19,EisenstatKM14,Fox-EpsteinKS19,KleinPR93} besides the works mentioned above. It has also found multiple important applications in parameterized algorithms, see e.g. the discussion in~\cite[Section~7.7.3]{platypus}. 

\paragraph*{Our contribution.} In this work we investigate the following question asked recently by Korhonen, Majewski, Nadara, Pilipczuk, and Sokołowski~\cite{DBLP:journals/corr/abs-2304-01744}: to what extent Baker's technique can be applied in the context of dynamic algorithms. That is, instead of just looking for an approximate solution in a static input graph, we assume that the graph is fully dynamic --- changes over time by edge insertions and deletions --- and we would like to maintain an approximate value of the optimum under such updates. For the sake of focus, we restrict attention to the aforementioned two basic problems: {\sc{Maximum Weight Independent Set}} and {\sc{Minimum Weight Dominating Set}}.

Consider the following notion. A {\em{fully dynamic graph data structure}} is a data structure that maintains a graph $G$ with nonnegative weights on vertices under the following~updates:
\begin{itemize}
 \item $\AddEdge(u,v)$, $\RemoveEdge(u,v)$: add or remove an edge between vertices $u,v$; and
 \item $\UpdateWeight(u,\alpha)$: change the weight of a vertex $u$ to $\alpha\in \mathbb{R}_{\geq 0}$.
\end{itemize}
We shall say that such a data structure is {\em{$\Cc$-restricted}}, for a graph class $\Cc$, if it works under the promise that at all times, the graph stored in the data structure belongs to $\Cc$.

With this definition in place, we can state our main result.

\begin{theorem}\label{thm:main}
 Let $\Cc$ be a~fixed apex-minor-free class of graphs and let $\eps>0$. Then there exists a $\Cc$-restricted fully dynamic graph data structure that in addition to maintaining a graph $G\in \Cc$, supports the following queries:
 \begin{itemize}
  \item $\QueryMWIS()$: output a nonnegative real $p$ satisfying $(1-\eps)\OPT_{\MWIS}\leq p\leq \OPT_{\MWIS}$, where $\OPT_{\MWIS}$ is the maximum weight of an independent set in $G$; and
  \item $\QueryMWDS()$: output a nonnegative real $p$ satisfying $\OPT_{\MWDS}\leq p\leq (1+\eps)\OPT_{\MWDS}$, where $\OPT_{\MWDS}$ is the minimum weight of a dominating set in $G$. This query is supported only under the additional assumption that at all times, the maximum degree of $G$ is bounded by a constant $\Delta$.
 \end{itemize}
 The initialization time on a given $n$-vertex graph $G\in \Cc$ is $f(\eps)\cdot n^{1+o(1)}$, and each update takes amortized time $f(\eps)\cdot n^{o(1)}$, where $f(\eps)$ is doubly-exponential in $\Oh(1/\eps^2)$. Each query takes $\Oh(1)$ time.
\end{theorem}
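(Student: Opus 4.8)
The plan is to combine a dynamic low-treewidth decomposition with a dynamic data structure for answering weighted optimization queries on bounded-treewidth graphs, gluing the two together through Baker's layering. First I would set up the Baker skeleton in a dynamic-friendly form: maintain a BFS-layering of $G$ (with respect to some fixed root in each connected component), and observe that for each residue $r \in \{0,1,\dots,k-1\}$ with $k = \Theta(1/\eps)$, deleting the vertices in layers congruent to $r$ leaves a graph whose every component has bounded radius, hence --- by bounded local treewidth of the apex-minor-free class $\Cc$ --- bounded treewidth $w = w(\eps)$. Since the graph is apex-minor-free, one of the $k$ residues carries at most an $\eps$-fraction of the optimum (for \MaxInd{} one deletes a cheap layer; for \MinDom{} one works on the union of two consecutive blocks of layers and pays for overlap), so it suffices to (i) maintain, for each $r$, a tree decomposition of width $\Oh(w)$ of $G$ minus the layer-$r$ vertices, and (ii) maintain on each such bounded-treewidth graph the exact optimum of the relevant problem, then take the best (resp.\ worst) over $r$ and over a few shifts. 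The dynamic tree-decomposition component should be black-boxed from the recent dynamic-treewidth machinery (a data structure maintaining an $\Oh(w)$-width decomposition of a graph promised to have treewidth $\le w$ under edge updates, with $n^{o(1)}$ amortized update time), applied to the $k$ auxiliary graphs in parallel; an edge update in $G$ translates to $\Oh(k)$ edge updates among the auxiliary graphs plus possibly a shift of the layering.

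The second ingredient is a dynamic algorithm that, given the maintained bounded-width tree decomposition, answers \QueryMWIS{} / \QueryMWDS{} on that graph. This is standard bounded-treewidth DP turned incremental: attach to each bag a table of partial solutions (for \MaxInd{}, subsets of the bag that are independent, each with the best weight of a compatible partial solution below; for \MinDom{}, the usual three-state ``in $D$ / dominated / not-yet-dominated'' vectors, which is where the degree bound $\Delta$ matters --- actually the degree bound is needed to make the \emph{cheap-layer} argument work, i.e.\ to bound $\OPT_{\MWDS}$ by $\Oh(n)$ and to ensure that a layer removal only inflates the cost additively by an $\eps$-fraction of a feasible solution), and recompute tables along the root path after each update. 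With the tree decomposition maintained to have depth $\Oh(\log n)$ (or via a top-tree / link-cut style balancing that the dynamic-treewidth data structure already provides), each update touches only $\Oh(\log n)$ bags and each bag table has size $f(\eps) = 2^{\Oh(w)}$, giving the claimed $f(\eps)\cdot n^{o(1)}$ amortized bound; the query itself just reads off the root table in $\Oh(1)$ time.

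The main obstacle I expect is the interface between the layering and the dynamic tree-decomposition data structure. An edge insertion can change BFS distances, hence change which layer many vertices lie in, so naively the layering is not stable under updates; one needs either to amortize the cost of periodic rebuilds of the layering (rebuild a BFS tree region only when distances have drifted enough, charging the $\Omega(\text{region size})$ work to the updates that caused the drift) or to use a dynamic SSSP/BFS-structure on apex-minor-free graphs with $n^{o(1)}$ update time and translate each distance change into bounded-amortized churn in the $k$ auxiliary graphs. Getting this amortization to be $n^{o(1)}$ rather than polynomial, while simultaneously feeding a bounded number of changes per step into the dynamic-treewidth structure, is the technically delicate point; everything else (the $\eps$-loss accounting, the DP tables, taking the min/max over $k$ shifts) is routine. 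A secondary subtlety is handling \MinDom{} correctly: a dominating set of $G$ minus a layer need not be feasible for $G$, so one must argue using \emph{two} consecutive blocks of $k$ layers with an overlap of one layer and let the removed middle layer be dominated ``for free'' by adding it wholesale to $D$ --- the degree bound ensures its weight is small relative to $\OPT$, and the $n^{1+o(1)}$ initialization is obtained by building all $k$ auxiliary decompositions and DP tables from scratch via the static $n^{1+o(1)}$-time treewidth algorithm.
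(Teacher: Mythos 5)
There is a genuine gap, and it is exactly at the point you flag as ``technically delicate'': your proposal does not actually solve the maintenance of the Baker layering, and the two fixes you sketch do not obviously work. A single edge insertion or deletion can change the BFS distance (hence the residue class) of $\Omega(n)$ vertices, so the churn fed into the $k$ auxiliary graphs and into the dynamic tree-decomposition black box is not bounded per update, and there is no potential/charging argument given for the ``rebuild when distances have drifted'' idea; no exact dynamic BFS/SSSP structure with $n^{o(1)}$ update time is available even for planar graphs. Conversely, if you freeze the layering to avoid the churn, the promise $\tw\le w$ for the auxiliary graphs is violated as soon as an inserted edge joins far-apart layers. The paper takes a genuinely different route precisely to avoid this: the layering is frozen at the start of an \emph{epoch}, all vertices touched by updates (together with their ancestors in a precomputed shallow elimination forest of each $G\setminus V_i$) are placed in a stash $Z$, and the untouched components of $G\setminus Z$ --- which are $\Oh(k)$-protrusions whose behavior is fully described by DP tables computed at epoch start --- are \emph{compressed} into a small equivalent instance (of a \CSP{} generalization) that is handed to a child data structure; this is iterated over $L=\Theta\bigl(\eps\cdot\tfrac{\log\log n}{\log\log\log n}\bigr)$ levels, each level being reset every $\approx n^{(q-1)/L}$ updates. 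In particular the paper uses neither dynamic distances nor the dynamic-treewidth machinery you want to black-box (it explicitly notes it does not use~\cite{DBLP:journals/corr/abs-2304-01744}), and your proposal contains no substitute for this mechanism.

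A second, independent error is the \MinDom{} accounting. In the \emph{weighted} setting you cannot ``add the removed middle layer wholesale to $D$'' and argue that the degree bound makes its weight small relative to $\OPT_{\MWDS}$: the weight of a BFS layer is unrelated to $\OPT$ (e.g.\ a bounded-degree star whose leaves carry huge weight has $\OPT$ equal to the center's weight), so this step can blow up the cost arbitrarily. The standard static fix is to double-pay the \emph{solution's} weight inside an overlap of slabs (choosing the shift that minimizes it), but the paper found even that hard to dynamize; instead it \emph{under}-approximates: it ``clears'' the chosen layers (relieving them of the domination requirement and deleting edges inside them), shows via a state-monotonicity exchange argument that for some shift the cleared instance has optimum at least $(1-\tfrac1k)\OPT$, and finally rescales by $(1-\delta)^{-1}$. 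The bounded-degree assumption enters there to keep the compressed generalized-domination instances $(s,d)$-decent (bounded interactions across component boundaries), not to make a cheap-layer argument work; your justification of where $\Delta$ is needed is therefore also off.
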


Note that in general graphs, {\sc{Maximum Weight Independent Set}} and {\sc{Minimum Weight Dominating Set}} do not admit EPTASes in the static setting if $\mathsf{P} \neq \mathsf{NP}$, even with a~restriction on the maximum degree of a~vertex of a~graph~\cite{DBLP:journals/iandc/ChlebikC08,DBLP:conf/stoc/Trevisan01}.
Hence a~structural restriction of the class $\Cc$ in \cref{thm:main} is necessary.

To the best of our knowledge, \Cref{thm:main} presents the first non-trivial data structure for approximation schemes on planar graphs in a fully dynamic setting\footnote{We remark that for the {\sc{Maximum Independent Set}} problem, where each vertex carries a unit weight, there is a simple data structure with amortized update time $2^{\Oh(1/\eps)}$ working as follows. Once every $\eps n/8$ updates recompute a $(1-\eps/2)$-approximate solution from scratch using Baker's technique. Between the recomputations, whenever an edge is added to the graph, remove any of its endpoints from the solution provided both were included. This works because by the $4$-Color Theorem, the optimum solution has always size at least $n/4$, and within $\eps n/8$ updates the value of the optimum may change by at most $\eps n/8$. This is why we focus on the weighted variant of the problem, to make it non-trivial.}. Hence, we hope that it may open multiple new avenues for future research. Here are three questions that immediately come to mind:
\begin{itemize}
 \item \Cref{thm:main} applies only to apex-minor-free classes; can it be extended to $H$-minor-free classes for any fixed $H$? The main obstacle here is that we are not aware of any approach that yields a (static) EPTAS for the considered problems in $H$-minor-free graphs with a near-linear running time dependency on graph size; and this would be implied by such an extension.
 \item \Cref{thm:main} tackles only the {\sc{Maximum Weight Independent Set}} and {\sc{Minimum Weight Dominating Set}} problems; can it be extended to other problems amenable to Baker's technique, for instance the first-order expressible optimization problems considered in~\cite{DawarGKS06,Dvorak22}? As the reader will see, the proof of \Cref{thm:main} is considerably more difficult and delicate than that of the basic Baker's technique. At this point, even extending the result for {\sc{Minimum Weight Dominating Set}} beyond the regime of bounded-degree graphs seems unclear.
 \item The $n^{o(1)}$ factor in the bound on the update time of our data structure is actually $n^{\Oh\left(\frac{\log \log \log n}{\sqrt{\log \log n}}\right)}$. This means that the amortized update time is indeed subpolynomial, but barely. It would be interesting to obtain better bounds, ideally polylogarithmic in $n$. Designing data structures with worst-case guarantees is another interesting question.
\end{itemize}  
We remark that while Korhonen et al.~\cite{DBLP:journals/corr/abs-2304-01744} asked their question in the context of their work on a data structure for maintaining dynamic tree decompositions of bounded width in amortized subpolynomial time, in the proof of \Cref{thm:main} we do not use their data structure at all. It would be also interesting to see if the approach presented here can be combined with the advances of~\cite{DBLP:journals/corr/abs-2304-01744}.

\newcommand{\Tt}{\mathcal{T}}

\paragraph*{Our techniques.} The main idea behind the proof of \cref{thm:main} is to maintain a tree of data structures, where each data structure is responsible for some minor $H$ of $G$ and has $\Oh(1/\eps)$ child data structures, responsible for the $\Oh(1/\eps)$ choices of offsets in the Baker scheme applied to $H$. While we eventually create a tree of data structures with $L=\Theta\left(\eps\cdot \frac{\log \log n}{\log \log \log n}\right)$ levels, let us explain the construction for $L=2$ levels; this corresponds to achieving amortized update time $f(\eps)\cdot \widetilde{\Oh}(\sqrt{n})$. Also, let us focus on the {\sc{Maximum Weight Independent Set}} problem.

Let $k\coloneqq \lceil 1/\eps\rceil$.
There will be a parent data structure $\D^{\textrm{main}}$ responsible for the whole graph~$G$. Upon the initialization of $\D^{\textrm{main}}$, we apply Baker's scheme to $G$: we compute a partition $V_0,\ldots,V_{k-1}$ of the vertex set of $G$ so that $V_i$ comprises vertices whose distance from some fixed vertex $s$ is congruent to $i$ modulo $k$. (If $G$ is disconnected, every connected component has its own source vertex $s$.) Standard analysis of Baker's technique yields that the treewidth of the graph $G_i\coloneqq G-V_i$ is at most $3k$, so we compute a tree decomposition $\Tt_i$ of $G_i$ of width $\Oh(k)$ and depth $\Oh(\log n)$ using classic results~\cite{bodlaender-hagerup}. By dynamic programming on $\Tt_i$, we can understand $G-V_i$ completely. In particular, we may compute the optimum weight of an independent set in $G_i$, so that the maximum among the computed values is a $(1-\eps)$-approximation of the maximum weight of an independent set in $G$.

At this point every graph $G_i$ has treewidth at most $3k$, but this may quickly cease to be the case once some updates arrive. Therefore, for every $i\in \{0,1,\ldots,k-1\}$ we create a child data structure $\D_i$ that is responsible for handling the graph $G_i$. In the data structure $\D_i$, vertices of $G_i$ are partitioned into an initially empty {\em{stash}} $Z$, which contains all vertices involved in any updates made so far, and the remaining vertices. We maintain the following invariant $(\bigstar)$: every connected component of $G_i-Z$ has neighborhood of size $\Oh(k)$. Data structure $\D_i$ maintains the {\em{compressed graph}} $H_i$, which is a minor of $G_i$ obtained as follows: (i) contract every connected component of $G_i-Z$ to a single vertex, and (ii) identify all vertices corresponding to contracted components with the same neighborhood into single vertices. Since $Z$ is initially empty, $H_i$ is initially a single-vertex graph. When an update affects $G_i$, say an insertion of an edge $uv$, we add to $Z$ both $u$ and $v$ together with the $\Oh(k\log n)$ vertices contained in all the ancestor bags\footnote{The actual presentation differs here in that we  work with the notion of an {\em{elimination forest}} instead of a tree decomposition. However, this would be the tree decomposition equivalent of this step.} of the top-most bags of $\Tt_i$ that contain $u$ and $v$; this way we preserve invariant~$(\bigstar)$. Every addition of a~vertex to $Z$ results in ``uncompressing'' a part of $H_i$, which can be done efficiently and adds only $\Oh(k)$ new~vertices~to~$H_i$.

The data structure $\D_i$ maintains also an approximation of the maximum weight of an independent set in $G_i$. This is done by maintaining an approximate optimum for an auxiliary maximization problem on the compressed graph $H_i$, which we find convenient to phrase in the language of valued {\sc{2CSP}}s, and which can be considered a ``contracted'' variant of {\sc{Maximum Weight Independent Set}}. More precisely, every vertex of $H_i$ that is also a vertex of $G_i$ has a domain consisting of two values --- taken or not taken --- while every vertex that resulted from a contraction of some connected components of $G_i-Z$ has a larger domain, corresponding to possible interactions between those components and the rest of the graph. The revenues provided by the contracted vertices for different elements of their domains reflect the maximum weights of independent sets that can be achieved within contracted connected components for different interactions with the rest of the graph. And these maximum weights can be read from the dynamic programming tables in $\Tt_i$ computed upon the initialization of $\D^{\textrm{main}}$, because vertices of $G_i-Z$ have not yet been touched by updates. The approximate optimum to the constructed auxiliary instance of {\sc{2CSP}} is computed {\em{from scratch}} upon every update in $H_i$, in time $f(\eps)\cdot |V(H_i)|$ using Baker's~technique~in~$H_i$. 

Thus, every update to $G$ is relayed by $\D^{\textrm{main}}$ to $k$ data structures $\D_i$, and within each $\D_i$ we apply $k^{\Oh(1)} \log n$ updates to $H_i$, each consisting of running Baker's scheme in time $f(\eps)\cdot |V(H_i)|$. The crucial idea is to {\em{reset}} the data structure $\D^{\textrm{main}}$ every $\sqrt{n}$ updates, by reconstructing it from scratch in time $f(\eps)\cdot n$, so that graphs $H_i$ never grow too large. After every reconstruction of $\D^{\textrm{main}}$, all graphs $H_i$ consist of single vertices, so throughout the next $\sqrt{n}$ updates they can grow to size at most $k^{\Oh(1)}\sqrt{n} \log n$, because every $H_i$ grows by at most $k^{\Oh(1)}\log n$ vertices at each update. This means that running Baker's scheme upon every update to any $H_i$ will take worst-case time $f(\eps)\cdot \widetilde{\Oh}(\sqrt{n})$, while the amortized time complexity of resetting the data structure $\D^{\textrm{main}}$ is $f(\eps)\cdot \widetilde{\Oh}(\sqrt{n})$ as well.

Our data structure for the {\sc{Maximum Weight Independent Set}} problem implements the natural generalization of the idea presented above to more than two levels. Specifically, we use $L=\Theta\left(\eps\cdot \frac{\log \log n}{\log \log \log n}\right)$ levels, where every data structure $\D$ at level $i$ gets reset every $n^{1-\frac{i}{L}}$ updates relayed to $\D$. One technical detail that requires attention are the domains in the auxiliary {\sc{2CSP}} instances: every compression step increases the maximum domain size from, say, $\Delta$ to $\Delta^{\Oh(k)}$, so one needs to carefully choose the number of layers $L$ so that it is super-constant in $n$, but small enough so that the domain sizes are kept subpolynomial in $n$.

The data structure for the {\sc{Minimum Weight Dominating Set}} problem is similar in spirit. As a~matter of fact, the standard way of applying Baker's technique to this problem, by allowing double-paying in an $\eps$-fraction of the layers, seems difficult to translate to the dynamic setting. Instead, we design a~different way of applying Baker's technique to {\sc{Minimum Weight Dominating Set}}, which interestingly relies on {\em{under-approximating}} the minimum weight of a dominating set, rather than over-approximating. There are several technical issues that arise when applying the approach presented above in the setting of dominating sets: in particular when a vertex $u$ is touched by an update within any constituent data structure $\D$, we need to update the suitable information about both $u$ and the neighbors of $u$ in the graph. This is why in the result for {\sc{Minimum Weight Dominating Set}} we resort to the regime of graphs with~bounded maximum degree.

Finally, let us remark that in the approach sketched above, the key properties kept by each of the constituent data structures are the invariant $(\bigstar)$ and the invariant that the removal of the stash breaks the graph into connected components of treewidth $\Oh(k)$. This exactly means that each of those components will be an {\em{$\Oh(k)$-protrusion}}. Protrusion-based arguments are by now a standard methodology in the design of parameterized and kernelization algorithms on planar and topologically-restricted graphs, see e.g.~\cite{BodlaenderFLPST16,FominLST20} or~\cite[Chapter~15]{squirrel}. In this work we show how these ideas can be helpful also in the setting of dynamic approximation algorithms.

\section{Preliminaries}
\label{sec:prelims}

Given a~set $X$ and a~family of sets $\{A_x \,\colon\, x \in X\}$, we define $\prod_{x \in X} A_x$ as the Cartesian product of the sets $A_x$.
Then $f \in \prod_{x \in X} A_x$ is a~function mapping each $x \in X$ to an~element $f(x) \in A_x$.

\paragraph*{Graphs.}
All graphs in this work are undirected; moreover, they are simple unless explicitly mentioned.
For a~graph $G$ and a~vertex $u$ of $G$, $N_G(u)$ is the set of neighbors of $u$ in $G$, and $N_G[u] = N_G(u) \cup \{u\}$; if $G$ is known from the context, we write $N(u)$ and $N[u]$.
If $S \subseteq V(G)$, then let $N_G[S] \coloneqq \bigcup_{u \in S} N_G[u]$ and $N_G(S) \coloneqq N_G[S] \setminus S$. For a~graph $G$, we denote by $\cc(G)$ the partitioning of $V(G)$ into the set of connected components.
If $S \subseteq V(G)$, then we denote by $G[S]$ the subgraph of $G$ induced by $S$ and by $G \setminus S$ the subgraph of $G$ induced by $V(G) \setminus S$.
If $G$ is connected and $u,v \in V(G)$, we denote by $\mathsf{dist}_G(u,v)$ the number of edges in a shortest path between $u$ and $v$.
The radius of $G$ is defined as $\min_{s \in V(G)} \max_{v \in V(G)} \mathsf{dist}_G(s, v)$.

A {\em{tree decomposition}} of a graph $G$ consists of a forest $F$ and a function $\bag$ that with each node $x$ of $F$ associates its bag $\bag(x)\subseteq V(G)$ so that: (i) for each edge $uv$ of $G$ there exists $x\in V(F)$ with $\{u,v\}\subseteq \bag(x)$, and (ii) for each vertex $u$ of $G$, the set $\{x\in V(F)~|~u\in \bag(x)\}$ induces in~$F$ a~non-empty tree. The {\em{width}} of $(F,\bag)$ is the maximum bag size minus $1$, and the {\em{treewidth}} of~$G$, denoted $\tw(G)$, is the minimum possible width of a tree decomposition of~$G$.

In this work we will mostly not use tree decompositions as defined above, but we will instead rely on the related notion of {\em{elimination forests}}, which underlie the parameter {\em{treedepth}}. First, we need some definitions on rooted forests.
Whenever $F$ is a rooted forest, by $\preceq_F$ we denote the ancestor relation in $F$: for $u,v\in V(F)$, $u\preceq_F v$ if $u$ is on the unique path in $F$ from $v$ to a root of $F$. For a vertex $u$ of $F$, 
$\parent_F(u)$ is the parent of $u$ in $F$ (or $\bot$ if $u$ is a root), and we denote the ancestors and the descendants of $u$ as $\anc_F[u]\coloneqq \{v\colon v\preceq_F u\}$ and $\desc_F[u]\coloneqq \{v\colon u\preceq_F v\}$, respectively. Note that thus, every vertex is both an ancestor and a descendant of itself, i.e. $\anc_F[u]\cap \desc_F[u]=\{u\}$. We can generalize $\anc_F[u]$ to $\anc_F[A]$ for $A \subseteq V(F)$ by defining $\anc_F[A] = \bigcup_{u \in A} \anc_F[u]$; this set will be called the \emph{ancestor closure} of $A$ in $F$. Sets $A$ closed under the taking ancestors, i.e. with $A=\anc_F[A]$, will be called \emph{prefixes} of $F$.
The {\em{height}} of a forest is the maximum number of vertices on a root-to-leaf path in $F$.
An {\em{elimination forest}} of a graph $G$ is a rooted forest $F$ with $V(F)=V(G)$ such that for every edge $uv$ of $G$, we have $u\preceq_F v$ or $v\preceq_F u$. The {\em{treedepth}} of a graph $G$, denoted $\td(G)$, is the minimum height of an elimination~forest~of~$G$.

If $F$ is an elimination forest of $G$, then with every vertex $u$ we can associate its {\em{reachability set}} defined as
$\Reach_F(u)\coloneqq N(\desc_F[u])$.
Note that as $F$ is an elimination forest, $\Reach_F(u)$ consists of strict ancestors of $u$, that is, $\Reach_F(u)\subseteq \anc_F[u]\setminus \{u\}$. So in other words, $\Reach_F(u)$ comprises all strict ancestors of $u$ that have a neighbor among the descendants of $u$.
It can be easily seen that if $F$ is an elimination forest of $G$, then endowing $F$ with a bag function $u\mapsto \{u\}\cup \Reach_F(u)$ yields a tree decomposition of $G$. As proved in~\cite{BojanczykP22}, for every graph $G$ there is an elimination forest $F$ of $G$ such that a tree decomposition obtained from $F$ in this way has optimum width, that is, width equal to the treewidth~of~$G$.

\paragraph*{Multigraphs.}
In the case of Dominating Set and its generalizations, it will be convenient to work with multigraphs $G = (V, E)$.
We assume that in a~multigraph, there may be multiple edges connecting the same pair of vertices (also called \emph{parallel edges}), but there are no self-loops.
We distinguish different edges connecting the same pair of vertices -- for example, we can assume that the edges of the graph have pairwise different labels.

For a~vertex $v \in V(G)$, let $\delta(v)$ denote the set of edges with one endpoint in $v$.
We define the degree of $v$ as $\deg(v) = |\delta(v)|$.
For two disjoint sets of vertices $A, B \subseteq V(G)$, we use the notation $E(A, B)$ to denote the set of edges with one endpoint in $A$ and the other in $B$.
Given a~set of vertices $S$, to \emph{collapse} $S$ in $V(G)$ is to identify all vertices of $S$ to a~single vertex $v_S$, remove all edges with both endpoints in $S$, and for all other edges with an~endpoint in $S$, reroute the endpoint to $v_S$. Note that this process might produce a~multigraph from a~simple graph.


\paragraph*{Apex graphs.}
A~simple graph $H$ is called an~\emph{apex graph} if removing a~vertex from $H$ causes it to become planar.
We say that a~graph $G$ contains a~graph $H$ as a~minor if $H$ can be constructed from $G$ by a~sequence of vertex removals, edge removals and edge contractions.
We then say that a~class of graphs $\Cc$ is \emph{apex-minor-free} if $\Cc$ is \emph{minor-closed} (i.e., for every graph $G \in \Cc$, all minors of $G$ also belong to $\Cc$), and $\Cc$ excludes a~fixed apex graph $H$.
For instance, the class of planar graphs is apex-minor-free.

We will use the~result of Demaine and Hajiaghayi~\cite{ApexFreeLocBdTw} that apex-minor-free classes of graphs have \emph{linearly locally bounded treewidth}, which improved upon the earlier work of Eppstein~\cite{Eppstein00}.

%

\begin{theorem}[\cite{ApexFreeLocBdTw}]
  \label{thm:linearly-locally-bounded-tw}
  For every apex-minor-free class $\Cc$ of graphs there exists a~constant $\kappa_\Cc > 0$ such that for every integer $r \geq 1$ and graph $G \in \Cc$ of radius at most $r$, we have $\tw(G) \leq \kappa_\Cc \cdot r$.
\end{theorem}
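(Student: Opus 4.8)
The plan is to derive the statement from the structure theory of proper minor-closed classes, specialised to the case of an excluded \emph{apex} minor, together with the classical fact that bounded-genus graphs have linearly bounded local treewidth. Fix an apex graph $H$ that $\Cc$ excludes as a minor, write $a$ for its apex vertex (so $H-a$ is planar), and let $h=|V(H)|$. I would start from the Robertson--Seymour structure theorem in the sharpened form available for excluded apex minors: there is a constant $\lambda=\lambda(H)$ such that every $G\in\Cc$ has a tree decomposition of adhesion at most $\lambda$ in which every torso becomes embeddable into a surface of Euler genus at most $\lambda$, with at most $\lambda$ vortices of width at most $\lambda$, after the deletion of at most $\lambda$ apex vertices. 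It is precisely the assumption that $H$ is an apex graph that lets one bound the genus, the number of apices, and the vortex parameters all simultaneously.

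The second step is a routine ``lifting'' argument: linear local treewidth passes through tree decompositions of bounded adhesion. Given $G\in\Cc$ of radius at most $r$ with center $s$, only the parts of the decomposition whose bags meet the ball $B(s,r)$ are relevant, and each such torso, once its (size $\le\lambda$) adhesion sets are collapsed to single vertices, has radius $\Oh(r)$. If we knew that every $\lambda$-almost-embeddable graph $G'$ of radius $\rho$ satisfies $\tw(G')\le c\rho+c$ for a uniform constant $c=c(\lambda)$, we could take such bounded-width decompositions of the relevant torsos, glue them along the decomposition tree, and insert each adhesion set into the bags along the corresponding tree path; this would give $\tw(G)\le \Oh_\lambda(r)$. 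So it remains to prove linear local treewidth for a single $\lambda$-almost-embeddable graph.

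For such a graph I would argue in three stages. First, bounded-genus graphs have linear local treewidth: this is the Eppstein-style argument, cutting a genus-$g$, radius-$r$ graph along a BFS tree together with $\Oh(g)$ additional edges to reduce to the planar case, and then invoking the classical bound (treewidth at most $3r$ up to lower-order terms) for planar graphs, with $g\le\lambda$ absorbed into the constant. Second, attaching at most $\lambda$ vortices of width at most $\lambda$ changes treewidth by only $\Oh_\lambda(1)$, so it preserves linear local treewidth. Third --- and this is the delicate point, which I expect to be the main obstacle --- one must add back the at most $\lambda$ apex vertices; an apex vertex is adjacent to unboundedly many vertices and can collapse distances, so a naive layering argument fails here. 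This is exactly where excluding an apex minor is used. The key combinatorial fact is that an $s\times s$ grid together with one universal vertex is itself an apex graph that contains $H$ as a minor once $s=\Omega(h)$ (since $H-a$ is planar, hence a minor of a sufficiently large grid, with $a$ routed to the universal vertex), so this ``augmented grid'' is forbidden in $\Cc$. Combined with a linear treewidth-versus-grid-minor bound for $H$-minor-free graphs (Demaine--Hajiaghayi), this forces that, around each apex $v$, the radius-$\Oh(r)$ neighbourhood of $v$ in the apex-free part --- which acquires a near-universal vertex once the BFS layers ``above'' $v$ are contracted to a point --- has treewidth $\Oh_\lambda(1)$, since otherwise it would contain a large grid minor and hence the forbidden augmented grid. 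Iterating this over the at most $\lambda$ apices and combining with the genus/vortex bound for the apex-free part yields $\tw(G')\le c\cdot\rho+c$, as required. The part I would need to execute most carefully is precisely this apex analysis: making the interaction between the apices and the BFS layering explicit, and verifying that the grid minors extracted in the neighbourhood of an apex really do close up into the forbidden augmented grid.
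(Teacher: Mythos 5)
The paper does not actually prove this statement; it is imported as a black box from Demaine and Hajiaghayi, so there is no internal argument to compare you against, and your sketch has to be judged on its own terms. The high-level plan --- Robertson--Seymour decomposition, linear local treewidth for each almost-embeddable torso, lifting through bounded adhesion --- is a recognised route, and you are right that the apex vertices are the only genuinely delicate point (the genus and vortex parts are routine). A small side remark first: the decomposition you quote, with bounded genus, bounded vortices and bounded apices per torso, is the structure theorem for \emph{general} $H$-minor-free graphs; excluding an apex minor does not weaken the hypotheses needed to get it, it buys you extra structure on top (e.g., how the apices may attach). Your phrasing suggests the opposite.

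The real gap is in the apex step, which as written does not close. You conclude that ``the radius-$\Oh(r)$ neighbourhood of $v$ in the apex-free part \ldots\ has treewidth $\Oh_\lambda(1)$''. That cannot be the intended statement: the bounded-genus part of that region already has treewidth $\Theta(r)$ at radius $r$, and a constant bound only appears once you contract away all but $\Oh_\lambda(1)$ BFS layers --- at which point you have also destroyed the part of the graph whose treewidth you are trying to bound. What you actually need is a band statement: every $\Oh_\lambda(1)$ consecutive BFS layers (in a suitable layering of the apex-free part) induce a graph of treewidth $\Oh_\lambda(1)$, from which $\tw = \Oh_\lambda(r)$ follows by Baker-style gluing. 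More seriously, your derivation of the contradiction quietly assumes that contracting ``the BFS layers above $v$'' produces a vertex adjacent to \emph{all} branch sets of the putative grid minor. That adjacency is precisely the nontrivial content and is not automatic: contracting $B(s,\ell)$ gives adjacency only to layer $\ell+1$, not to deeper layers, and the apex $v$ itself need not be adjacent to the region hosting the grid minor at all. The missing lemma is that when the apex-free part has a grid minor of side $\gg r$, one can route a connected tree from $v$ and the center into a constant fraction of its branch sets while a large subgrid of branch sets survives disjoint from that tree; contracting the tree then yields the universal vertex over a subgrid, and the augmented grid contains $H$. This requires a careful combinatorial argument (thinning the grid to a sparse subgrid of representatives, routing BFS paths of length $\le r$ to the representatives, and bounding how many representatives those paths can ruin), and it is exactly where the linearity of the bound --- as opposed to Eppstein's original, much weaker bound --- is earned. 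Your last sentence explicitly defers it (``verifying that the grid minors \ldots\ really do close up into the forbidden augmented grid''), so the sketch asserts the conclusion of this lemma rather than proving it. I will also note that once one is willing to invoke, as you do, the linear grid-minor theorem for $H$-minor-free graphs, the structure-theorem scaffolding becomes unnecessary: Eppstein already gives $\tw(G) \le f_\Cc(r)$; a large grid minor in a radius-$r$ graph plus the augmented-grid lemma forbids $\tw(G) \ge C_\Cc\, r$; and this contradiction proves the theorem directly. In other words the same lemma is the crux on either route, and the two paragraphs of decomposition machinery you open with do not make it any easier.
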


Apex-minor-free classes of graphs (and more generally, minor-free classes) have a~bounded ratio of the number of edges to the number of vertices \cite{DBLP:journals/combinatorica/Kostochka84}; that is, for any apex-minor-free class of graphs $\Cc$, there exists a~constant $\rho_\Cc > 0$ so that for every graph $G \in \Cc$, it holds that $|E(G)| \leq \rho_\Cc \cdot |V(G)|$.

Additionally, the following facts from the theory of sparse graphs will prove useful for us.
Since every apex-minor-free class of graphs (and more generally, every minor-closed class excluding at least one graph) has bounded expansion \cite{DBLP:journals/ejc/NesetrilM08}, it has bounded \emph{neighborhood complexity}:

\begin{theorem}[\cite{NeiComp1}]
  \label{thm:neighborhood-complexity-orig}
  There exists a~constant $\nu_\Cc > 0$ such that for every $G \in \Cc$ and nonempty $X \subseteq V(G)$, the number of different neighborhoods of vertices of $G$ on $X$ is bounded by $\nu_\Cc \cdot |X|$.
  That is,
  \[ \left| \{ N(v) \cap X \,\colon\, v \in V(G) \} \right| \leq \nu_\Cc \cdot |X|. \]
\end{theorem}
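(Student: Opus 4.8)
The statement is the known result of \cite{NeiComp1}; here is how I would prove it. The plan is to trade the class-level sparsity of $\Cc$ for a single well-behaved vertex ordering of $G$, and then to show that, relative to that ordering, every neighbourhood trace $N(v)\cap X$ is determined by a profile of bounded size, of which only $\Oh_\Cc(|X|)$ many can occur.

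First I would pass from bounded expansion to a \emph{weak-colouring ordering}. Since $\Cc$ is minor-closed and excludes a graph, it has bounded expansion, and consequently, for every fixed radius $r$, the weak $r$-colouring number is bounded over $\Cc$: there is a constant $c=c_{\Cc,r}$ such that every $G\in\Cc$ admits a linear ordering $\sigma$ of $V(G)$ with $|\mathrm{WReach}_r[\sigma,v]|\le c$ for all $v$, where $\mathrm{WReach}_r[\sigma,v]$ consists of the vertices $w$ reachable from $v$ by a path of length at most $r$ on which $w$ is $\sigma$-minimal (for $H$-minor-free classes this is classical). For the present statement I would use $r=2$; note $\mathrm{WReach}_1[\sigma,v]\subseteq\mathrm{WReach}_2[\sigma,v]$, so both reach sets have size at most $c$. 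Fix such a $\sigma$ on $G$ and a nonempty $X\subseteq V(G)$.

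Next I would encode a neighbourhood trace. Split $N(v)\cap X=L(v)\cup H(v)$ into the $\sigma$-low part $L(v)=\{x\in N(v)\cap X\colon x<_\sigma v\}$ and the $\sigma$-high part $H(v)=\{x\in N(v)\cap X\colon v<_\sigma x\}$. Each $x\in L(v)$ lies in $\mathrm{WReach}_1[\sigma,v]$, a set of size at most $c$, so $L(v)$ is one of at most $2^c$ subsets of a bounded set and can be recorded outright. For $H(v)$ the roles are reversed: $v\in\mathrm{WReach}_1[\sigma,x]$ for every $x\in H(v)$, and for $x,x'\in H(v)$ the path $x\,v\,x'$ witnesses $v\in\mathrm{WReach}_2[\sigma,x]\cap\mathrm{WReach}_2[\sigma,x']$. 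I would therefore describe $H(v)$ through the at most $c$ vertices $w\in\mathrm{WReach}_2[\sigma,v]$: record, for each such $w$, the pattern in which the $X$-vertices having $w$ in their weak-$2$-reach attach to $v$. This yields a profile $\rho(v)$ that, together with $L(v)$, pins down $N(v)\cap X$ and takes one of at most $2^{\Oh_c(1)}$ values once $\mathrm{WReach}_2[\sigma,v]$ is fixed.

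The final, and hardest, step is the count: showing that only $\Oh_\Cc(|X|)$ distinct traces arise. I would pursue a charging argument — charge each $v$ to the $\Oh_c(1)$ vertices of $X$ naturally associated with its profile (the $X$-vertices of $L(v)$ together with the $\sigma$-minima of the weak-reach sets governing $H(v)$), and argue that the incidence structure this creates is a depth-$\le 1$ shallow (topological) minor of $G$, hence has $\Oh_\Cc(1)$ edges per vertex, which caps the number of surviving traces at $\Oh_\Cc(|X|)$; the resulting constant is $\nu_\Cc$. The open/closed distinction is immaterial, since $N[v]\cap X$ and $N(v)\cap X$ differ only according to whether $v\in X$, changing the number of distinct sets by at most an additive $|X|$. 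The hard part is exactly this last estimate: the trivial bound ``a neighbourhood is some subset of $X$'' is exponential, and the edge bound $|E(G)|\le\rho_\Cc|V(G)|$ only governs the bounded-size low part $L(v)$; the real obstacle is a \emph{large} neighbourhood sitting $\sigma$-above $v$, because one vertex $v$ can be the low-reach vertex of arbitrarily many $X$-vertices with utterly different neighbourhoods, and it is precisely to keep the charging \emph{linear} in $|X|$ (rather than, say, quadratic) that the weak $2$-colouring ordering — equivalently, the bounded density of the depth-$\le 1$ shallow topological minors of $G$ — is needed. This is the technical heart of \cite{NeiComp1}.
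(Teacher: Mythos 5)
You should first note that the paper itself contains no proof of \cref{thm:neighborhood-complexity-orig}: it is imported verbatim from \cite{NeiComp1} (after invoking bounded expansion of minor-closed classes), so the only question is whether your sketch stands on its own. It does not. You correctly identify the toolkit that the literature actually uses --- bounded expansion, hence a bounded weak $2$-colouring number, and the split of $N(v)\cap X$ into the part below and the part above $v$ in the ordering $\sigma$ --- but the argument has a genuine gap exactly where you place the ``technical heart'', and the intermediate claim meant to bridge it is unjustified. For a fixed $w\in\mathrm{WReach}_2[\sigma,v]$, the set of $X$-vertices having $w$ in their weak $2$-reach is in general unbounded, so ``the pattern in which those vertices attach to $v$'' is not an object with $2^{\Oh_c(1)}$ possible values once $\mathrm{WReach}_2[\sigma,v]$ is fixed: recording which of them are adjacent to $v$ is precisely the unbounded information one is trying to compress (indeed $v\in\mathrm{WReach}_2[\sigma,v]$, so already for $w=v$ this ``pattern'' encodes all of $H(v)$). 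Hence the profile $\rho(v)$ either fails to determine $N(v)\cap X$ or fails to be of bounded size, and the concluding step --- that each trace can be charged to $\Oh_c(1)$ vertices of $X$ through a depth-$1$ shallow minor so that only $\Oh_{\Cc}(|X|)$ distinct traces survive --- is asserted rather than proved. As you yourself concede, that linear count \emph{is} the content of the theorem; gesturing at it and citing \cite{NeiComp1} for it means the proposal is a plan, not a proof.

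Two smaller points. For $x,x'\in H(v)$ the edge $xv$ alone already witnesses $v\in\mathrm{WReach}_1[\sigma,x]$; the length-$2$ paths through $v$ are not needed there. Where such paths genuinely help is on the \emph{low} side: for $x,x'\in L(v)$ with $x<_\sigma x'$, the path $x'\,v\,x$ shows $x\in\mathrm{WReach}_2[\sigma,x']$, which is the kind of fact the actual argument in \cite{NeiComp1} exploits; but even with it, bounding the number of distinct low parts by $\Oh_{\Cc}(|X|)$ is not immediate (a subset of a $c$-element set $\mathrm{WReach}_1[\sigma,v]\cap X$ varies with $v$, so ``at most $2^c$ subsets'' does not bound the number of distinct sets globally). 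If you want a self-contained proof, you need to carry out the counting of signatures relative to $\sigma$ in full, in the style of \cite{NeiComp1}; otherwise the honest move --- and the one the paper takes --- is simply to cite the result.
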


Therefore, we have that:

\begin{corollary}
  \label{cor:neighborhood-complexity-components}
  If $\Cc$ is apex-minor-free, $G \in \Cc$ and $X \subseteq V(G)$ is nonempty, then
  \[ \left| \{ N(C) \,\colon\, C \in \cc(G \setminus X) \} \right| \leq \nu_\Cc \cdot |X|. \]
\end{corollary}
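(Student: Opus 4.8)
The plan is to reduce the statement about neighborhoods of connected components of $G \setminus X$ to the statement about neighborhoods of single vertices, which is exactly \Cref{thm:neighborhood-complexity-orig}. First I would observe that for any component $C \in \cc(G \setminus X)$, the set $N(C) = N_G(C) \setminus C$ is contained in $X$: indeed, any neighbor of $C$ that lies outside $X$ would either be in $C$ (impossible, since it is a neighbor, not a member, once we subtract $C$) or in a different component of $G \setminus X$ (impossible, since then there would be an edge between two distinct components). Hence $N(C) \subseteq X$ for every such $C$, so all the sets $N(C)$ appearing on the left-hand side are subsets of $X$.

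The key step is to exhibit, for each component $C$, a single vertex $v_C \in V(G)$ whose neighborhood on $X$ equals $N(C)$; then the map $C \mapsto N(C) = N(v_C) \cap X$ shows that the family $\{N(C) : C \in \cc(G\setminus X)\}$ is a subfamily of $\{N(v)\cap X : v \in V(G)\}$, and \Cref{thm:neighborhood-complexity-orig} finishes the bound. The natural candidate is to pick any vertex $v_C \in C$; but a single vertex of $C$ need not see all of $N(C)$. Instead, I would contract $C$: since $\Cc$ is minor-closed and $C$ is connected, the graph $G'$ obtained from $G$ by contracting $C$ to a single vertex $v_C$ lies in $\Cc$, and in $G'$ we have $N_{G'}(v_C) = N_G(C)$, hence $N_{G'}(v_C) \cap X = N(C)$ (using $C \cap X = \emptyset$, so $X \subseteq V(G')$ unchanged). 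Applying \Cref{thm:neighborhood-complexity-orig} to $G'$ and $X$ would then bound the number of neighborhoods on $X$ among vertices of $G'$, and in particular among the vertices $\{v_C : C \in \cc(G \setminus X)\}$, by $\nu_\Cc \cdot |X|$.

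There is one wrinkle: contracting different components $C$ produces different graphs $G'$, whereas \Cref{thm:neighborhood-complexity-orig} is stated for a fixed graph. To fix this cleanly, I would instead form a single graph $\wh{G}$ obtained from $G$ by contracting \emph{every} component of $G \setminus X$ simultaneously (equivalently, collapsing $V(G) \setminus X$ componentwise); this $\wh{G}$ is still a minor of $G$, hence in $\Cc$, it still contains $X$ as a vertex subset with the induced subgraph on $X$ unchanged, and each contracted vertex $v_C$ satisfies $N_{\wh{G}}(v_C) \cap X = N(C)$. Then a single application of \Cref{thm:neighborhood-complexity-orig} to $\wh{G}$ and $X$ gives $|\{N(v) \cap X : v \in V(\wh{G})\}| \leq \nu_\Cc \cdot |X|$, and restricting to the contracted vertices yields the claim. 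The only mildly delicate point is checking that the contraction does not accidentally merge $X$ with the contracted side or change $N(C)$ — this follows since $X$ and $V(G)\setminus X$ are disjoint and we only contract within $V(G)\setminus X$ — so I expect no real obstacle here; the corollary is essentially a one-line consequence of \Cref{thm:neighborhood-complexity-orig} once the contraction is set up.
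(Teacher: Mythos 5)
Your proposal is correct and uses the same argument as the paper: contract each connected component of $G \setminus X$ to a single vertex to obtain a graph still in $\Cc$, then apply \Cref{thm:neighborhood-complexity-orig} to that graph and $X$. The extra preliminary discussion (contracting one component at a time, then realizing one should contract all simultaneously) is just motivation; the final argument coincides with the paper's one-line proof.
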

\begin{proof}
  Produce a~graph $H \in \Cc$ by contracting each connected component of $G \setminus X$ to a~single vertex and apply \cref{thm:neighborhood-complexity-orig} to $H$ and $X$.
\end{proof}

\paragraph*{Generalizing independent sets.}
In this paper, we shall work with a generalization of the \MaxInd problem that is most conveniently phrased in the language of weighted CSPs. An instance $I$ of \CSP consists of:
\begin{itemize}
 \item a graph $G$, called the {\em{Gaifman graph}};
 \item for every vertex $u\in V(G)$, a finite {\em{domain}} $D_u$ and a {\em{revenue function}} $\rev_u\colon D_u\to \R_{\geq 0}$; and
 \item for every edge $uv\in E(G)$, a {\em{constraint}} $C_{uv}\subseteq D_u\times D_v$. 
\end{itemize}
We require that each domain $D_u$ contains a special value $0$ whose revenue is $0$, that is, $\rev_u(0)=0$. Moreover, the value $0$ is always allowed in constraints: for every edge $uv$, we have $\{0\}\times D_v\subseteq C_{uv}$ and $D_u\times \{0\}\subseteq C_{uv}$. We can assume that at least one domain is of size at least two, as otherwise the problem is trivial. We will concisely denote that $I = (G, D, \rev, C)$, where $D$ is the set of all domains, $\rev$ is the set of revenue functions and $C$ is the set of all constraints.

A {\em{solution}} to a \CSP instance as above is a mapping $\phi$ that with each vertex $u$ of $G$ associates a value $\phi(u)\in D_u$ so that $(\phi(u),\phi(v))\in C_{uv}$ for every edge $uv$ of $G$. Note that such a solution always exists, as one can map every vertex $u$ to $0\in D_u$. The {\em{revenue}} of a solution $\phi$ is defined~as
\[ \rev(\phi)\coloneqq \sum_{u\in V(G)} \rev_u(\phi(u)). \]
In \CSP one is asked to find a solution with the maximum possible revenue. 

For our purposes, the domains $D_u$ can be assumed to be small (i.e., bounded in size by $f(\eps) \cdot n^{o(1)}$ for some function $f$).
With this assumption in mind, in our time complexity analysis we will omit the time required to store and manipulate the domains and the constraints of the CSP.

We may model the \MaxInd problem in the language of \CSP. Let $G$ be a graph and $\wei\colon V(G)\to \R_{\geq 0}$ be a weight function on the vertices of $G$. Then construct a \CSP instance with Gaifman graph $G$ as follows:
\begin{itemize}
 \item For every vertex $u$ of $G$, we set $D_u=\{0,1\}$, $\rev_u(0)=0$, and $\rev_u(1)=\wei(u)$.
 \item For every edge $uv$ of $G$, we set $C_{uv}=\{(0,0),(0,1),(1,0)\}$.
\end{itemize}
It is straightforward to see that the maximum revenue of a solution to the instance described above is equal to the maximum weight of an independent set in $G$.

For convenience, given an~instance $I = (G, D, \rev, C)$ of \CSP and a vertex subset $Y \subseteq V(G)$, we denote by $I[Y]$ the instance \emph{induced} by $Y$. This is the instance obtained from $I$ by removing from $G$ all vertices not in $Y$, and all constraints incident to a~vertex outside of~$Y$. Further, denote $I \setminus Y = I[V(G) \setminus Y]$. Note that any solution to an induced instance can be lifted to a solution to the original instance of the same revenue by mapping all the removed vertices to $0$.
Finally, we denote $V(I) \coloneqq V(G)$ and $E(I) \coloneqq E(G)$.

\paragraph*{Generalizing dominating sets.} Similarly to the case of the {\sc Max Weight Independent Set}, we will work with a generalization of the \MinDom problem defined as follows. An instance of \MinGeneralizedDom consists of:
\begin{itemize}
 \item a multigraph $G$, called the {\em{Gaifman graph}}; and
 \item for every vertex $u$, a finite {\em{domain}} $D_u$, a {\em{cost function}} $\cost_u\colon D_u\to \R_{\geq 0} \cup \{+\infty\}$, a {\em{supply function}} $\supply_u\colon D_u\to 2^{\delta(u)}$ from the domain $D_u$ to the set of all subsets of edges with one endpoint in $u$, and a {\em{demand function}} $\demand_u\colon D_u\to 2^{\delta(u)}$.
\end{itemize}
We additionally require that every domain $D_u$ contains a~state $s_u$ with $\supply_u(s_u) = \delta(u)$ and finite cost.

A {\em{solution}} to an instance as above is a mapping $\phi$ that with each vertex $u$ of $G$ associates a value $\phi(u)\in D_u$, so that the following property holds.
For every edge $e \in E(G)$ with endpoints $u$ and $v$, if $e \in \demand_u(\phi(u))$, then $e \in \supply_v(\phi(v))$; and conversely, if $e \in \demand_v(\phi(v))$, then $e \in \supply_u(\phi(u))$.\footnote{We remark that it might be the case that in a~valid solution, $e$ belongs to both $\demand_u(\phi(u))$ and $\demand_v(\phi(v))$; in this case, we require both $\supply_u(\phi(u))$ and $\supply_v(\phi(v))$ to contain $e$.}

The cost of such a solution is defined as
$$\cost(\phi)=\sum_{u\in V(G)} \cost_u(\phi(u)).$$

The \MinGeneralizedDom problem is to find a solution to the given instance with the minimum possible cost.
Observe that every instance of \MinGeneralizedDom has a~finite-cost solution as a~solution $\phi$ mapping every vertex $u$ to the state $s_u$ is valid.

The \MinDom problem can be modelled using \MinGeneralizedDom as follows. Given a~simple graph $G$ and a weight function $\wei\colon V(G)\to \R_{\geq 0}$, we create an instance $I = (G, D, \cost, \supply, \demand)$, where for every vertex $u$ of $G$ we define the domain $D_u\colon N[u]$ as well as the following cost, supply, and demand functions:
\begin{align*}
\cost_u(v) & = \begin{cases} \wei(u) & \textrm{if }u=v,\\ 0 & \textrm{otherwise;}\end{cases}\\
\supply_u(v) & = \begin{cases} \delta(u)\ & \textrm{if }u=v,\\ \emptyset & \textrm{otherwise;}\end{cases}\\ 
\demand_u(v) & = \begin{cases} \emptyset & \textrm{if }u=v,\\ \{uv\} & \textrm{otherwise.}\end{cases} 
\end{align*}
It is easy to see that the minimum weight of a solution in the instance of \MinGeneralizedDom defined above coincides with the minimum weight of a dominating set in $G$.

Abusing the notation, we will say that $I \in \Cc$ if $G \in \Cc$. Also, we use $V(I)$ and $E(I)$ to mean the set of vertices and the set of edges of $G$.

Next, we will say that a~vertex $u \in V(I)$ is:
\begin{itemize}
  \item $(s, d)$-\emph{meager} for $s, d \geq 1$ if $|\deg(u)| \leq s$ and $|D_u| \leq d$;
  \item \emph{state-monotonous} if for every ordered pair of states $x_1, x_2 \in D_u$, there exists a~state $x$, called the \emph{combination} of $x_1$ with $x_2$, such that
    \begin{align*}
      \cost_u(x) &\leq \cost_u(x_1) + \cost_u(x_2), \\
      \supply_u(x) &= \supply_u(x_1) \cup \supply_u(x_2), \\
      \demand_u(x) &\subseteq \demand_u(x_1).
    \end{align*}
\end{itemize}

Note that the definition of the combination of states above is not commutative due to the fact that the constraint on $\demand_u(x)$ only depends on $x_1$. Thus, the combination of $x_1$ with $x_2$ might be different from the combination of $x_2$ with $x_1$.

We say that an~instance of \MinGeneralizedDom is \emph{$(s,d)$-decent} if every vertex of the instance is $(s,d)$-meager and state-monotonous.
An~easy verification of the definition shows that every instance of \MinGeneralizedDom created from \MinDom on a~graph of maximum degree $\Delta$ is $(\Delta,\Delta+1)$-decent as all vertices of the resulting instance are state-monotonous and $(\Delta, \Delta+1)$-meager.
Our data structure will only operate on $(s, d)$-decent instances, for some small, bounded values of $s$ and $d$.
%


Next, for $A \subseteq V(I)$ and a~valuation $\phi \in \prod_{u \in A} D_u$ of vertices in $A$, we say that the valuation is \emph{locally correct on $A$ to $I$} if, for every edge $e$ with both endpoints $u$, $v$ in $A$, if $e \in \demand_{u}(\phi(u))$, then $e \in \supply_{v}(\phi(v))$.
Note that if $\phi$ is a~correct solution to $I$, then $\phi|_A$ is a~locally correct solution on~$A$~to~$I$.

Given an~instance $I$ of \MinGeneralizedDom and a~vertex $u \in V(I)$, to \emph{relieve} $u$ in $I$ is to produce an~instance $I'$ by setting $\demand_u(x) \gets \emptyset$ for all $x \in D_u$.
In other words, relieving marks $u$ as a~vertex that does not need to be dominated by a~solution to $I'$.
Note that if there exists a~solution to $I$, then a~solution to $I'$ also exists and the minimum cost of such a~solution is less than or equal to the minimum cost of a~solution to $I$.

Finally, given a~set $A \subseteq V(G)$, we define the \emph{$A$-cleared subinstance of $I$}, denoted $\mathsf{Clear}(I; A)$, as the instance produced from $I$ by:
\begin{itemize}
  \item removing all the edges whose both endpoints are in $A$, erasing them also from the corresponding demand and supply sets;
  \item relieving each vertex of $A$ in the resulting instance.
\end{itemize}
The following fact is straightforward.

\begin{observation}
  If $\phi$ is a~correct solution to $I$, then it is also a~correct solution to $\Clear(I; A)$.
\end{observation}

\section{Maximum Weight Independent Set}
In this section we give a data structure that implements the first query, $\QueryMWIS()$, of \cref{thm:main}. 
As mentioned, we will actually solve a~more general problem of \CSP (which we will call \CSPs from now on for brevity).
That is, we will maintain a dynamic instance of \CSPs undergoing the following types of updates:
\begin{itemize}
	\item $\AddVertex(u, D_u, \rev_u)$: adds an isolated~vertex $u$ together with the domain $D_u$ and the revenue function $\rev_u\colon D_u\to \R_{\geq 0}$;
	\item $\AddEdge(u, v, C_{uv})$: adds an edge $uv$ together with the constraint $C_{uv} \subseteq D_u \times D_v$;
	\item $\RemoveEdge(u, v)$: removes an edge $uv$;
	\item $\UpdateRevenue(u, \rev_u)$: changes the revenue function of $u$ to $\rev_u\colon D_u \to \R_{\geq 0}$.
\end{itemize}
After each update, the data structure should maintain an~approximate optimum revenue to the currently stored the instance $I$: i.e., a~nonnegative real $p$ satisfying $(1 - \varepsilon)\OPT \leq p \leq \OPT$, where $\OPT$ denotes the optimum revenue for $I$.
For technical reasons, we do not support removing vertices from the instance; however, observe that the removal of a~vertex can be emulated by removing all edges incident to the vertex and replacing its revenue with the constant-zero function.

Note that the original dynamic instance of \CSPs constructed from {\sc Max Weight Independent Set} has a~fixed $n$-element vertex set, so it will never be updated by $\AddVertex$.
However, the designed data structure will create auxiliary dynamic instances of \CSPs that \emph{will} be maintained using all four types~of~updates.

Henceforth, we assume that a~given apex-minor-free class of graphs $\Cc$ is fixed; that is, we consider all constants depending only on $\Cc$ to be absolute constants.
Moreover, throughout this section we denote by $\eps > 0$ the parameter $\eps$ fixed in the initialization of the data structure.

\subsection{Introductory results}
\label{ssec:indset-static-version}


We begin by summarizing the Baker's technique applied to \CSPs.
In short, we show that if the Gaifman graph has bounded treewidth, then we can solve \CSPs exactly using a standard dynamic programming over the tree decomposition. On the other hand, if the Gaifman graph belongs to $\Cc$, then we can solve the instance approximately by a classical application of the Baker's technique~\cite{Baker94}.

First, the following lemma is obtained by dynamic programming on a tree decomposition.

\begin{lemma}\label{lem:csp-bd-tw}
	Let $I = (G, D, \rev, C)$ be an instance of \CSPs on $n$ vertices and $\Delta \ge 2$ be the maximum size of a domain of any vertex. Then, we can solve $I$ in time $n \cdot \Delta^{\Oh(\tw(G))}$.
\end{lemma}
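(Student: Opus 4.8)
The plan is to run the standard bottom-up dynamic programming over a tree decomposition of $G$, but first we need to get a tree decomposition of near-optimal width in a form convenient for DP. Concretely, I would invoke the fact (mentioned in the preliminaries, via \cite{BojanczykP22}) that $G$ has an elimination forest $F$ of width equal to $\tw(G)$ — meaning the tree decomposition with bags $u\mapsto \{u\}\cup\Reach_F(u)$ has width $\tw(G)$. Working with $F$ is cleaner here because the reachability sets are automatically of size at most $\tw(G)$. (Strictly, one also needs the time to compute such an $F$, or one can just assume a width-$\Oh(\tw(G))$ tree decomposition is given or computed by a known approximation algorithm in time $n\cdot 2^{\Oh(\tw(G))}$ or similar — since the lemma only claims $\Delta^{\Oh(\tw(G))}$ per vertex, any of the classical near-linear-time constant-approximation algorithms for treewidth suffices and is absorbed into the bound.)

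\textbf{Setting up the DP.} For each node $u$ of $F$, let $\desc_F[u]$ be its descendant set and let $B_u \coloneqq \{u\}\cup \Reach_F(u)$ be its bag. The DP table at $u$ is indexed by a partial valuation $\psi \in \prod_{v\in B_u} D_v$ that is \emph{locally consistent}, i.e., satisfies every constraint $C_{vw}$ with $v,w\in B_u$; the value $T_u[\psi]$ is the maximum, over all valuations $\phi$ of $\desc_F[u]$ that extend $\psi$ on $B_u$ and satisfy all constraints among vertices of $\desc_F[u]$, of $\sum_{v\in \desc_F[u]\setminus \Reach_F(u)} \rev_v(\phi(v))$ — i.e., we credit the revenue of the newly "finished" vertices $\desc_F[u]\setminus\Reach_F(u)$ and leave the revenue of the still-open vertices $\Reach_F(u)$ to be counted higher up. The key structural point justifying the recurrence is that any edge of $G$ with one endpoint in $\desc_F[u]$ has its other endpoint in $\anc_F[u]$, and in fact for an edge between two different child-subtrees of $u$ (or between a descendant and a strict ancestor of $u$ that lies below $u$'s parent level... ) the relevant "interface" is fully contained in $B_u$; this is exactly the separator property of elimination-forest tree decompositions, so the subproblems at the children of $u$ interact only through the shared values on $B_u$.

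\textbf{The recurrence.} At a leaf $u$ of $F$, $B_u = \{u\}\cup\Reach_F(u)$ and $\desc_F[u]=\{u\}$, so $T_u[\psi] = \rev_u(\psi(u))$ for each locally consistent $\psi$ (with $u\notin\Reach_F(u)$ always). At an internal node $u$ with children $c_1,\dots,c_t$: for a locally consistent $\psi$ on $B_u$, we combine, over each child $c_i$, the best table entry $T_{c_i}[\psi']$ ranging over locally consistent $\psi'$ on $B_{c_i}$ that agree with $\psi$ on $B_{c_i}\cap B_u$ (note $B_{c_i}\setminus\{c_i\}=\Reach_F(c_i)\subseteq \anc_F[c_i]\setminus\{c_i\} \subseteq B_u \cup \{u\}$... more precisely $\Reach_F(c_i)\subseteq \{u\}\cup\Reach_F(u) = B_u$, since strict ancestors of $c_i$ are $u$ and ancestors of $u$, and any such ancestor reachable from $\desc_F[c_i]$ is also reachable from $\desc_F[u]$ hence in $\Reach_F(u)$), and finally add $\rev_u(\psi(u))$ for the newly finished vertex $u$. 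Formally,
\[
  T_u[\psi] = \rev_u(\psi(u)) + \sum_{i=1}^{t}\ \max_{\substack{\psi'\ \text{loc. consistent on } B_{c_i}\\ \psi'|_{B_{c_i}\cap B_u} = \psi|_{B_{c_i}\cap B_u}}} T_{c_i}[\psi'].
\]
The final answer $\OPT$ is $\max_{\psi} T_r[\psi]$ over roots $r$ of $F$ (summed over the trees of the forest if $G$ is disconnected, handling each component independently), where $\psi$ ranges over locally consistent valuations of $B_r$; note $\Reach_F(r)=\emptyset$, so all of $\desc_F[r]$ is counted.

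\textbf{Correctness and running time.} Correctness is a routine induction on $F$: the extension property of $T_u$ follows because the only interaction between a child's subtree and the rest of the graph passes through $B_{c_i}\cap B_u$, and the revenue bookkeeping is a disjoint-union partition of $V(G)$ into the sets $\desc_F[u]\setminus\Reach_F(u)$ as $u$ ranges over $V(F)$ (each vertex $v$ is "finished" exactly at $v$ itself). For the running time: each bag has size at most $\tw(G)+1$, so each table has at most $\Delta^{\tw(G)+1}$ entries; at node $u$ with children $c_1,\dots,c_t$, computing all entries costs $\Oh(\sum_i \Delta^{|B_u\cup B_{c_i}|}) = \Delta^{\Oh(\tw(G))}$ per child after a suitable per-child aggregation (for each child, bucket its entries by their restriction to $B_{c_i}\cap B_u$, take the max in each bucket, then for each $\psi$ on $B_u$ look up the matching bucket). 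Summing over all nodes, $\sum_{u\in V(F)} (\deg_F(u)+1) = \Oh(n)$, giving total time $n\cdot \Delta^{\Oh(\tw(G))}$, as claimed. I expect the main obstacle — really the only non-mechanical point — to be stating precisely which constraints are "resolved" at which node and verifying that $\Reach_F(c_i)\subseteq B_u$ so that the gluing along $B_{c_i}\cap B_u$ is valid; once the elimination-forest separator properties are pinned down, the rest is the textbook treewidth DP.
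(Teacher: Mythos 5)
Your proposal is correct, and it is essentially the paper's argument: the paper also computes a $2$-approximate tree decomposition with Korhonen's algorithm in time $n\cdot 2^{\Oh(\tw(G))}$ and then runs the textbook dynamic programming whose states are valuations of the bag variables, absorbing the $2^{\Oh(\tw(G))}$ into $\Delta^{\Oh(\tw(G))}$ via $\Delta\ge 2$; the paper merely states that "the transitions follow easily," whereas you spell them out. The only genuine difference is that you route the DP through an elimination forest with bags $\{u\}\cup\Reach_F(u)$ rather than over the tree decomposition itself (a choice the paper makes later anyway, in its dynamic-compression lemma), and your key gluing fact $\Reach_F(c_i)\subseteq\{u\}\cup\Reach_F(u)$ is exactly the observation the paper uses there. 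One small caveat: the result of \cite{BojanczykP22} you cite is an existence statement, so to stay within the claimed time bound you should, as in your parenthetical fallback, first compute the approximate tree decomposition and then either run the DP on it directly or convert it to an elimination forest with $\Oh(\tw(G))$-size reachability sets by the "topmost bag" construction \emph{without} the balancing step (time $\Oh(\tw(G)\cdot n)$); if you instead invoked the paper's balanced conversion (its \cref{lem:td-decomp}) you would incur an $\Oh(\tw(G)\cdot n\log n)$ term that is not dominated by $n\cdot\Delta^{\Oh(\tw(G))}$ when the treewidth is constant.
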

\begin{proof}
First, we can use an algorithm computing a tree decomposition that is a 2-ap\-prox\-i\-ma\-tion of an optimal one in time $n \cdot 2^{\Oh(\tw(G))}$ by Korhonen \cite{Korhonen21}. Once we have that approximation, we use standard dynamic programming over it that could be seen as a generalization of a similar algorithm solving maximum independent set over graphs with bounded treewidth. For each bag of that decomposition it suffices to have dynamic programming states indexed by all possible valuations of variables from that bag and the transitions follow easily. That dynamic programming takes time $n \cdot \Delta^{\Oh(\tw(G))}$, so as $\Delta \ge 2$, in total this algorithm takes time $n \cdot 2^{\Oh(\tw(G))} + n \cdot \Delta^{\Oh(\tw(G))} = n \cdot \Delta^{\Oh(\tw(G))}$.
\end{proof}
	
Next, the following claim shows how subinstances of small treewidth are constructed by Baker's technique:


\begin{lemma} \label{lem:csp-partitioning}
    Let $G \in \Cc$ be a~connected graph and fix $s \in V(G)$. Then:
    \begin{itemize}
    		\item For integers $\ell \ge 0$, $k \ge 1$, we have $\tw(G[\{v \in V(G) \,\colon\, \ell \le \mathsf{dist}(s, v) < \ell + k\}]) \leq \Oh(k)$;
    		\item For integers $k > i \ge 0$, let $V_i = \{v\,\colon\, \mathsf{dist}(s, v) \equiv i \mod{k}\}$.
    		Then $\tw(G \setminus V_i) \leq \Oh(k)$.
	\end{itemize}        
\end{lemma}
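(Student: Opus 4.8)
The plan is to reduce both items to \Cref{thm:linearly-locally-bounded-tw}, which bounds the treewidth of an apex-minor-free graph in terms of its radius. The key observation is that each of the relevant induced subgraphs, after a mild modification, can be covered by a graph of small radius that still lies in $\Cc$ — and since $\Cc$ is minor-closed, taking subgraphs and identifying vertices keeps us inside $\Cc$.

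For the first item, fix $\ell \geq 0$, $k \geq 1$, and let $A = \{v \in V(G) : \ell \leq \mathsf{dist}(s,v) < \ell + k\}$. If $\ell = 0$, then $G[A]$ is contained in the ball of radius $k$ around $s$ (more precisely $A \subseteq \{v : \mathsf{dist}_G(s,v) < k\}$), so $G[A]$ has radius at most $k$ — or, to be careful about distances only decreasing when passing to $G[A]$, note $G[A]$ is a subgraph of a graph of radius less than $k$, hence has radius at most $k-1$; since $G[A] \in \Cc$, \Cref{thm:linearly-locally-bounded-tw} gives $\tw(G[A]) \leq \kappa_\Cc k = \Oh(k)$. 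For general $\ell \geq 1$, I would form a graph $G'$ from $G[\{v : \mathsf{dist}_G(s,v) < \ell + k\}]$ by adding a new vertex $s'$ adjacent to every vertex $v$ with $\mathsf{dist}_G(s,v) = \ell$. Every vertex of $A$ is within distance $k$ of $s'$ in $G'$: a shortest $s$–$v$ path for $v \in A$ passes through some vertex at distance exactly $\ell$, and the suffix from that vertex to $v$ has length $\mathsf{dist}_G(s,v) - \ell < k$, so $\mathsf{dist}_{G'}(s', v) \leq 1 + (k-1) = k$. Thus $G'[A \cup \{s'\}]$ has radius at most $k$; and $G'$ is obtained from an induced subgraph of $G$ by adding one vertex and identifying it with — equivalently, it is a minor-friendly modification that keeps it in $\Cc$ (one can realize $s'$ by contracting the appropriate portion of $G$ that is at distance $\leq \ell$ from $s$, so $G'$ is a minor of $G$ and hence $G' \in \Cc$). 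Since $G[A]$ is a subgraph of $G'[A \cup \{s'\}] \in \Cc$ with radius at most $k$, \Cref{thm:linearly-locally-bounded-tw} yields $\tw(G[A]) \leq \tw(G'[A \cup \{s'\}]) \leq \kappa_\Cc k = \Oh(k)$.

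For the second item, fix $k > i \geq 0$ and set $V_i = \{v : \mathsf{dist}_G(s,v) \equiv i \pmod k\}$. The distance layers $L_0, L_1, L_2, \dots$ around $s$ (where $L_j = \{v : \mathsf{dist}_G(s,v) = j\}$) partition into blocks separated by the removed layers $V_i$: each connected component of $G \setminus V_i$ is contained in a union of at most $k-1$ consecutive layers, namely $L_{a+1} \cup \dots \cup L_{a+k-1}$ for some $a$ with $a \equiv i \pmod k$ (or $L_0 \cup \dots \cup L_{i-1}$ for the first block when $i \geq 1$). By the first item, each such induced subgraph has treewidth $\Oh(k)$. Since $G \setminus V_i$ is the disjoint union of these components (no edge of $G$ joins $L_a$ to $L_b$ with $|a - b| \geq 2$, as $G$ is connected and $s$-distances are being used), and treewidth of a disjoint union is the max over components, we get $\tw(G \setminus V_i) = \Oh(k)$.

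The only point that genuinely requires care — the main obstacle — is the $\ell \geq 1$ case of the first item: one must argue that contracting the "inner ball" $\{v : \mathsf{dist}_G(s,v) < \ell\}$ (or an appropriate subset) to a single vertex $s'$ produces a graph that (a) is still in $\Cc$ (immediate from minor-closedness, but one should note the inner ball is connected so it contracts to one vertex cleanly) and (b) genuinely has $A$ within radius $k$ of $s'$, which relies on the layered structure of BFS distances. Everything else is a routine invocation of \Cref{thm:linearly-locally-bounded-tw} together with the facts that treewidth is monotone under subgraphs and that $\Cc$ is closed under minors.
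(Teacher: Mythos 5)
Your proof is correct and follows essentially the same route as the paper: for the first item, the paper likewise forms a minor $G'$ by discarding vertices at distance $\geq \ell+k$ and contracting the inner ball $\{v : \mathsf{dist}(s,v) < \ell\}$ to a single vertex, observes that $G'$ has radius at most $k$, and invokes \cref{thm:linearly-locally-bounded-tw} together with treewidth monotonicity under taking induced subgraphs; for the second item, the paper also decomposes $G \setminus V_i$ into components each contained in at most $k-1$ consecutive BFS layers and reduces to the first item. The only cosmetic difference is that the paper names the contracted vertex $s$ rather than $s'$.
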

\begin{proof}
  For the first point, let $G \in \Cc$ be connected and fix $s \in V(G)$, $k \ge 1$ and $\ell \ge 0$.
  Let $G'$ be a~minor of $G$ created by:
  \begin{itemize}
    \item discarding from $G'$ all vertices at distance at least $\ell + k$ from $s$;
    \item if $\ell > 0$, contracting all vertices at distance strictly less than $\ell$ to $s$.
  \end{itemize}
  It can be readily verified that the radius of $G'$ is bounded from above by $k$ since $s$ is at distance at most $k$ from every vertex of $G'$.
  Thus by \cref{thm:linearly-locally-bounded-tw}, we have $\tw(G') \leq \Oh(k)$.
  Since $G[A] = G'[A]$, we conclude that $\tw(G[A]) \leq \tw(G')$.
  
  For the second point, choose integers $i, k$ with $0 \le i < k$.
  Observe that $G \setminus V_i$ is a~disjoint union of the subgraphs induced by at most $k-1$ consecutive BFS layers (i.e., for every pair of vertices in a~single component, their distance from $s$ in $G$ differs by at most $k - 1$).
  Hence, the first point of the lemma applies, implying that each of these subgraphs has treewidth at most $\Oh(k)$.
  Therefore, $\tw(G \setminus V_i) \leq \Oh(k)$.
\end{proof}

With these helper lemmas in hand, we can prove that:

\begin{lemma}\label{lem:csp-baker}
Let $\eps' > 0$ be a positive real number and $I = (G, D, \rev, C)$, $G \in \Cc$, be an instance of \CSPs with $n$ vertices, where the maximum size of a domain is $\Delta \ge 2$. One can compute a real $p$ such that $(1 - \eps') \OPT \le p \le \OPT$ in time $n \cdot \Delta^{\Oh\left(1/\eps'\right)}$, where $\OPT$ is the maximum revenue of a solution to $I$.
\end{lemma}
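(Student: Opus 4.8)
The plan is to apply Baker's shifting argument directly to the CSP instance. Fix $k \coloneqq \lceil 1/\eps' \rceil$, so that $k \geq 1/\eps'$. We may assume $G$ is connected; otherwise we handle each connected component separately and sum up the results, noting that the union of optimal solutions over components is optimal for $I$, and the union of disjoint bounded-treewidth decompositions has bounded treewidth. So fix a source vertex $s \in V(G)$ and partition $V(G)$ into $V_0, \ldots, V_{k-1}$, where $V_i = \{v : \mathsf{dist}_G(s,v) \equiv i \pmod k\}$.

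The key observation is an averaging argument: let $\phi^\star$ be an optimal solution to $I$, with revenue $\OPT = \sum_{u} \rev_u(\phi^\star(u))$. Since the sets $V_0, \ldots, V_{k-1}$ partition $V(G)$, we have $\sum_{i=0}^{k-1} \sum_{u \in V_i} \rev_u(\phi^\star(u)) = \OPT$, so there is some index $i^\star$ with $\sum_{u \in V_{i^\star}} \rev_u(\phi^\star(u)) \leq \OPT/k \leq \eps' \cdot \OPT$. Consider the induced subinstance $I \setminus V_{i^\star} = I[V(G) \setminus V_{i^\star}]$. Restricting $\phi^\star$ to $V(G) \setminus V_{i^\star}$ gives a valid solution to $I \setminus V_{i^\star}$ (removing vertices only removes constraints), of revenue at least $\OPT - \eps' \OPT = (1-\eps')\OPT$. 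Conversely, by the remark in the preliminaries, any solution to $I \setminus V_{i^\star}$ lifts to a solution to $I$ of the same revenue (mapping removed vertices to $0$), so $\OPT(I \setminus V_{i^\star}) \leq \OPT$. Hence $(1-\eps')\OPT \leq \OPT(I \setminus V_{i^\star}) \leq \OPT$.

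The algorithm therefore does the following: compute the BFS layers from $s$; for each $i \in \{0, \ldots, k-1\}$, form $I_i \coloneqq I \setminus V_i$ and solve it exactly using \cref{lem:csp-bd-tw}; output the maximum $p$ of the computed optima. By \cref{lem:csp-partitioning} (second point), the Gaifman graph $G \setminus V_i$ of $I_i$ satisfies $\tw(G \setminus V_i) \leq \Oh(k) = \Oh(1/\eps')$, and it still belongs to $\Cc$ (minor-closed), with maximum domain size at most $\Delta$. So \cref{lem:csp-bd-tw} solves each $I_i$ in time $n \cdot \Delta^{\Oh(\tw(G \setminus V_i))} = n \cdot \Delta^{\Oh(1/\eps')}$. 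Summing over the $k \leq \Oh(1/\eps')$ choices of $i$, and absorbing the BFS cost (linear in $|V(G)| + |E(G)| = \Oh(n)$ since $\Cc$ has bounded edge density), the total running time is $k \cdot n \cdot \Delta^{\Oh(1/\eps')} = n \cdot \Delta^{\Oh(1/\eps')}$, as claimed. Correctness is immediate: $p = \max_i \OPT(I_i) \leq \OPT$ since each $\OPT(I_i) \leq \OPT$, and $p \geq \OPT(I_{i^\star}) \geq (1-\eps')\OPT$ by the averaging argument above.

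I do not anticipate a serious obstacle here — this is the textbook Baker argument transplanted to the CSP formalism, and all the heavy lifting (treewidth bound, exact DP) is already packaged in \cref{lem:csp-partitioning} and \cref{lem:csp-bd-tw}. The only points requiring a little care are: (i) checking the two directions of the inequality $(1-\eps')\OPT \leq \OPT(I \setminus V_i) \leq \OPT$ via the lift/restrict correspondence for induced subinstances, which the preliminaries already spell out; (ii) the disconnected case, where one must observe that BFS layering and the whole argument decompose cleanly over connected components (alternatively, add a universal apex-like reasoning — but that would leave $\Cc$, so per-component handling is cleanest); and (iii) bookkeeping that $k = \lceil 1/\eps'\rceil$ makes $1/k \leq \eps'$, giving exactly the claimed approximation ratio rather than something off by a constant.
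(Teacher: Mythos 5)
Your proof is essentially identical to the paper's: both set $k=\lceil 1/\eps'\rceil$, form the BFS-layer residue classes $V_i$, solve each $I\setminus V_i$ exactly via \cref{lem:csp-bd-tw} using the treewidth bound from \cref{lem:csp-partitioning}, take the maximum, and close the argument with the standard averaging bound and the lift/restrict correspondence. No gap; the approach and details match the paper's proof.
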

\begin{proof}[Proof of \cref{lem:csp-baker}]
We solve \CSP by a~classical use of the Baker's technique. Let $k = \ceil{\frac{1}{\eps'}}$. We assume that $G$ is connected (because we can solve each connected component separately), choose an arbitrary root $r$, partition the graph into the BFS layers and group them into groups mod $k$. That is, let $V_i = \{v : \mathsf{dist}_G(r, v) \equiv i \mod k\}$ for $i=0, 1, \ldots, k-1$.
Then let $I_i = I \setminus V_i$.
By \cref{lem:csp-partitioning}, the Gaifman graph of each $I_i$ has treewidth at most $\Oh(k)$.
Now, using \cref{lem:csp-bd-tw}, we compute an optimum solution to each $I_i$; call it $\phi_i$ and set $p_i = \rev(\phi_i)$. We claim that it suffices to set $p \coloneqq \mathrm{max}_{i} p_i$.

 For each $i \in \{0, \dots, k - 1\}$, we extend $\phi_i$ to a~valid solution $\phi_i'$ of $I$ by placing value $0$ on each variable from $V_i$. If $\phi$ is an optimum solution to $I$, so that $\OPT = \rev(\phi)$, then $\phi|_{V(G) \setminus V_i}$ is a valid solution to $I_i$, hence $\rev(\phi|_{V(G) \setminus V_i}) \le \rev(\phi_i)$. As $V_0, V_1, \ldots, V_{k-1}$ form a partitioning of $V(G)$, we have that $\rev(\phi|_{V_0}) + \ldots + \rev(\phi|_{V_{k-1}}) = \rev(\phi)$, so there exists some $i$ such that $\rev(\phi|_{V_i}) \le \frac{\rev(\phi)}{k} \le \eps' \OPT$. Therefore, we have that $p \ge \rev(\phi_i) \ge \rev(\phi|_{V(G) \setminus V_i}) = \rev(\phi) - \rev(\phi|_{V_i}) \ge (1-\eps') \OPT$. As each $\phi_i'$ is a valid solution to $I$ and we have that $\rev(\phi_i) = \rev(\phi_i')$, we obviously have that $p \le \OPT$, which completes the proof of the claim.
 
 Solving each $I_i$ requires time $n \cdot \Delta^{\Oh(k)}$. As we solve $k$ auxiliary instances, the total time complexity is $n \cdot k \cdot \Delta^{\Oh(k)} = n \cdot \Delta^{\Oh\left(1/\eps'\right)}$.
\end{proof}

We also present a~useful technical tool allowing us to understand graphs of small treewidth through elimination forests.
The proof follows closely the folklore argument demonstrating, for every graph~$G$, the inequality ${\td(G) \le (\tw(G) + 1) \cdot \log n}$, but we have to additionally ensure the satisfaction of some structural properties of the resulting elimination forest.

\begin{lemma}\label{lem:td-decomp}
	Let $G$ be a graph on $n$ vertices together with a~tree decomposition of width $w$. Then, in time $\Oh(wn \log n)$, one can compute an elimination forest $F$ of $G$ with the following properties:
	\begin{enumerate}
		\item $F$ has height $\Oh(w \cdot \log n)$;
		\item for each $u \in V(F)$, we have $|\Reach_F(u)| = \Oh(w)$;
		\item \label{item:td-decomp-connected} for each $u \in V(F)$, the graph $G[\desc_F[u]]$ is connected.
	\end{enumerate}
\end{lemma}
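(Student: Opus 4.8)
The plan is to mimic the standard recursive construction that proves $\td(G) \le (\tw(G)+1)\log n$, but to be careful about \emph{which} vertices we pick as separators so that connectivity of descendant-induced subgraphs (property~\ref{item:td-decomp-connected}) holds, and so that the reachability sets stay of size $\Oh(w)$. First I would preprocess: given the tree decomposition $(T,\bag)$ of width $w$, it is folklore that one can in time $\Oh(wn)$ transform it into a \emph{nice} or at least \emph{balanced} form; more precisely, I will use the classical fact that a tree decomposition of width $w$ admits, for any vertex subset, a bag whose removal splits the tree into parts each containing at most half of the (remaining) vertices. The recursion will be: process a connected subgraph $G[S]$ where $S$ is a union of connected components produced so far; find a bag $B$ of the decomposition (restricted to $S$) that is a balanced separator of $G[S]$, i.e. every connected component of $G[S] \setminus (B\cap S)$ has at most $|S|/2$ vertices; place the vertices of $B\cap S$ as a path of ancestors (in any order) at the top of the elimination forest for $S$; then recurse independently on each connected component of $G[S\setminus B]$, attaching the returned forests below this path.

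The key steps, in order: (1)~State and invoke the balanced-separator lemma for tree decompositions: in a width-$w$ tree decomposition there is a bag $B$ such that every component of $G \setminus B$ has $\le |V(G)|/2$ vertices; and crucially the restriction of the tree decomposition to any vertex subset $S$ is still a tree decomposition of $G[S]$ of width $\le w$, so we may recurse. (2)~Argue that, since each recursive call decreases $|S|$ by a factor of at least $2$ after at most one level (the chosen bag is removed, and components are strictly smaller), the recursion depth is $\Oh(\log n)$, and each level contributes at most $w+1$ vertices to any root-to-leaf path, giving height $\Oh(w\log n)$; this is property~1. (3)~For property~3, observe that we always recurse on \emph{connected} components of $G[S\setminus B]$, so every set $\desc_F[u]$ that arises is exactly one such connected component (the whole "block" handed to a recursive call) or, for $u$ in the middle of an inserted path, a union of the component containing $u$'s subtree — here one must be slightly careful and order the path-vertices of $B\cap S$ so that connectivity is preserved; the cleanest fix is: after choosing $B$, let $S' = S \setminus B$ and note $G[\desc_F[u]] = G[S]$ for the topmost $u$ of the block (connected by assumption), and for the $j$-th vertex $b_j$ on the path, $\desc_F[b_j] = \{b_j,\dots\} \cup (\text{components below})$; to make $G[\desc_F[b_j]]$ connected, first restrict attention to the subgraph and re-pick the separator bag \emph{inside} the currently-connected piece at each step, i.e. make the construction fully recursive on connected graphs only and insert one separator-bag-vertex at a time only if that keeps things connected — alternatively, insert the whole bag $B\cap S$ but then note the blocks below a partial prefix of the bag need not be connected, so instead I will recurse as: \texttt{Build}$(H)$ for connected $H$ returns a forest; pick balanced separator bag $B$ of $H$; the root path is $B\cap V(H)$ in arbitrary order; children of the bottom bag-vertex are the roots of \texttt{Build}$(H[C])$ over components $C$ of $H\setminus B$; since each such $H[C]$ is connected and $\desc_F[\text{root of }H[C]] = C$, and for a bag vertex $b_j$ we have $\desc_F[b_j] = (B\cap V(H)) \cup V(H\setminus B) \supseteq$ ... hmm, this still includes $B$-vertices below $b_j$ plus all components, whose union is $V(H)$, hence connected. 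So in fact taking the \emph{entire} bag as a contiguous ancestor-path works: $\desc_F[b_j]$ is a superset of $V(H)$ minus the strict-ancestor bag-vertices $b_1,\dots,b_{j-1}$, i.e. $V(H)\setminus\{b_1,\dots,b_{j-1}\}$; this is connected because $G[V(H)]=H$ is connected and removing a prefix of the bag... is not obviously connected. The honest resolution: recurse one bag-vertex at a time — \texttt{Build}$(H)$ for connected $H$: if $|V(H)|=1$ done; else pick a balanced separator bag $B$, pick any $b\in B\cap V(H)$, make $b$ the root, and for each component $C$ of $H\setminus\{b\}$ recursively call \texttt{Build}$(H[C])$ and hang it under $b$; to get the depth bound, note that after removing all $\le w+1$ vertices of $B$ one-by-one (that is, within $w+1$ consecutive levels) the remaining components each have $\le |V(H)|/2$ vertices, so depth is $\Oh(w\log n)$. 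Now $\desc_F[u]$ is always the full connected vertex set of some recursive call, giving property~3 directly. (4)~For property~2, bound $|\Reach_F(u)|$: all neighbors of $\desc_F[u]$ lie among the strict ancestors of $u$; I claim they all lie in the $\Oh(w)$ bag-vertices removed in the last "block" of at most $w+1$ levels above $u$ — because when \texttt{Build}$(H)$ was called, $\Reach_F$ of its root is contained in $N(V(H))$, which by the tree-decomposition property is contained in the at-most-$w+1$ vertices of the one bag connecting $H$ to the rest; pushing this invariant down the recursion, $\Reach_F(u)$ is contained in the union of that interface bag and the separator bag of the current call, so $|\Reach_F(u)| \le 2(w+1) = \Oh(w)$. (5)~Running time: finding a balanced separator bag in a given tree decomposition takes $\Oh(w\cdot(\text{size}))$ per call by a standard linear scan, recomputing the induced decomposition is cheap, and summing the geometric recursion gives $\Oh(wn\log n)$.

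The main obstacle I anticipate is reconciling properties~2 and~3 with the recursion structure: the naive "remove a whole bag at once" construction gives good depth and reachability bounds but breaks connectivity of $\desc_F[u]$ for $u$ sitting inside a bag-path, whereas the "remove one separator-bag vertex at a time" construction restores connectivity but one must then re-argue that the height stays $\Oh(w\log n)$ (the factor $w$ coming from spending up to $w+1$ levels to dismantle one balanced-separator bag) and that $\Reach_F$ stays $\Oh(w)$ (the interface bag of the current subproblem plus the current separator bag). I would therefore phrase the whole construction as a recursion on \emph{connected} graphs equipped with their induced tree decomposition, carry the invariant "$N(V(H)) \subseteq$ one bag of the ambient decomposition, of size $\le w+1$" explicitly, and track $\Reach_F$ against that invariant; the depth and running-time bounds then follow from the standard balanced-separator amortization. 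A minor additional point to handle is that $G$ (and its components in the recursion) may be disconnected at the top level — we simply run the connected-case construction on each connected component of $G$ and take the disjoint union, which is why the output is an elimination \emph{forest} rather than a tree.
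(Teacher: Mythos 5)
Your construction is a genuinely different route from the paper's. The paper first applies the Bodlaender--Hagerup lemma to obtain a tree decomposition of width $\Oh(w)$ and height $\Oh(\log n)$, then ``straightens'' the newly introduced vertices of each bag into a chain to obtain an elimination forest satisfying properties 1 and 2, and finally repairs property 3 by re-parenting each vertex $u$ to the deepest element of $\Reach_F(u)$. You instead recurse from scratch on connected subgraphs using balanced-separator bags removed one vertex at a time, which delivers property 3 by construction and then requires you to re-derive the height and reachability bounds. Both routes are viable; the paper's is arguably tidier since the balancing work is delegated to a known black box, whereas yours is self-contained but needs the interface-bag invariant carried explicitly through the recursion, as you yourself note.

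There is, however, a concrete bug in the pseudocode you settle on, which your surrounding prose hints at but does not repair. As written, $\texttt{Build}(H)$ computes a \emph{fresh} balanced-separator bag $B$ at every level and removes only one of its vertices before recursing; the recursive call $\texttt{Build}(H[C])$ then chooses a brand-new separator bag of its own. Your depth argument (``after removing all $\le w+1$ vertices of $B$ one-by-one \dots the remaining components each have $\le |V(H)|/2$ vertices'') tacitly assumes that the next several levels keep dismantling the \emph{same} $B$, which the pseudocode does not enforce. This is not cosmetic: take $G$ to be a path $p_1,\dots,p_n$ together with a universal apex $u$ (treewidth $2$, natural decomposition with bags $\{p_i,p_{i+1},u\}$). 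Every balanced-separator bag consists of $u$ and two consecutive path vertices; if the ``pick any $b$'' step adversarially always picks a path vertex rather than $u$, the graph stays connected through $u$ after every removal and the recursion depth is $\Omega(n)$, not $\Oh(w\log n)$. The fix is exactly what you gesture at: pass the leftover separator $B_{\mathrm{rem}}$ into the recursive call and compute a new balanced-separator bag only once $B_{\mathrm{rem}} \cap V(H)=\emptyset$. With that change, the ``each block of at most $w+1$ levels halves the size'' argument is sound, the invariant $N_G(V(H)) \subseteq X \cup B$ (where $X$ is the one-bag interface inherited at the start of the current block and $B$ is the block's partially removed separator) yields $|\Reach_F(u)| \le 2(w+1)$, and the rest of your plan goes through.
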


The proof of \cref{lem:td-decomp} involves a well-known lemma by Bodlaender and Hagerup:
\begin{lemma}[\cite{bodlaender-hagerup}]\label{lem:Hagerup}
	Let $G$ be a graph on $n$ vertices given on input with a~tree decomposition of width at most $w$. Then there exists a tree decomposition of $G$ of width $3w+2$ and height $\Oh(\log n)$.
	Moreover, such a~decomposition can be computed in time $\Oh(wn)$.
\end{lemma}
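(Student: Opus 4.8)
The plan is to follow the classical recursive rebalancing argument for tree decompositions, in two stages: a cheap normalization of the input decomposition, followed by a recursive splitting step along balanced separators of the underlying tree, which is the heart of the proof.

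\paragraph*{Stage 1: normalization.} Starting from the given width-$w$ tree decomposition, I would first bring it to a convenient normal form in time $\Oh(wn)$: repeatedly contract an edge $xy$ of the decomposition tree whenever $\bag(x)\subseteq\bag(y)$, and split every node of degree more than three by introducing copies of its bag along a short caterpillar. Neither operation increases the width, and afterwards the decomposition tree $T$ is binary, has $\Oh(n)$ nodes (each internal node now "introduces" a fresh vertex of $G$, which bounds $|V(T)|$ linearly), and no bag contains an adjacent bag. All of this is routine.

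\paragraph*{Stage 2: recursive rebalancing.} I would design a recursive procedure $\textsc{Balance}(T',S)$ taking a connected subtree $T'$ of $T$ together with a set $S\subseteq V(G)$ of at most $2(w+1)$ \emph{portal} vertices, under the invariant that every vertex of $G$ occurring both in a bag of $T'$ and in a bag of $T\setminus T'$ lies in $S$. It returns a tree decomposition of the part of $G$ covered by $T'$ in which $S$ is contained in every bag, the root bag equals $S\cup\bag(t)$ for a single node $t$ of $T'$, every bag has size at most $3(w+1)$, and the height is $\Oh(\log|V(T')|)$. Running $\textsc{Balance}(T,\emptyset)$ and gluing the decompositions returned by all recursive calls along the obvious tree structure then yields the lemma: the tree-decomposition axioms survive because each portal set sits in the relevant root bag, so occurrences of every vertex stay connected and every edge stays covered, while $|V(T')|\le|V(T)|=\Oh(n)$ turns the height into $\Oh(\log n)$. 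For the recursive step, pick a node $c$ that is a balanced separator of $T'$, so that each component of $T'-c$ has at most $|V(T')|/2$ nodes; since $T$ is binary there are at most three components $C_1,C_2,C_3$. Place a constant-length path of new nodes at the top, all with bag $S\cup\bag(c)$ (of size at most $2(w+1)+(w+1)=3(w+1)$), and attach below it the outputs of $\textsc{Balance}(C_j,S_j)$, where $S_j$ is the set of vertices $C_j$ shares with the rest of $G$. The structural facts needed are that $S_j\subseteq S\cup\bag(c)$ --- a shared vertex either reaches outside $T'$ and hence lies in $S$, or reaches another component of $T'-c$ and hence passes through $c$ and lies in $\bag(c)$ --- and that each $|V(C_j)|\le|V(T')|/2$, which bounds the recursion depth; a bottom-up implementation that precomputes subtree sizes then gives total time $\Oh(wn)$.

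\paragraph*{The main obstacle.} The delicate point --- and the genuine content of the Bodlaender--Hagerup argument --- is that the naive estimate $|S_j|\le|S|+|\bag(c)|$ only yields $3(w+1)$, and iterating it over $\Theta(\log n)$ levels would degrade the portal set to size $\Theta(w\log n)$ rather than the promised $2(w+1)$, blowing up the width. To hold the portal set at $2(w+1)$, one maintains it as a \emph{two-sided} boundary $S=S^{\mathrm L}\sqcup S^{\mathrm R}$ with $|S^{\mathrm L}|,|S^{\mathrm R}|\le w+1$, reflecting that $T'$ attaches to the rest of $T$ at two nodes, and one chooses the separator to lie "between" these two attachment points so that each child piece again attaches to the rest at two nodes with two bag-sized sides. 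Producing such a balanced separator --- in particular handling the case where the bulk of $T'$ hangs off the path joining the two attachment nodes, which a single balanced node-separator need not respect --- is exactly what makes this a theorem rather than a one-line corollary of centroid decomposition; here I would invoke the construction of~\cite{bodlaender-hagerup}.
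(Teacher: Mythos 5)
The paper itself contains no proof of this lemma: it is imported verbatim from~\cite{bodlaender-hagerup} and used as a black box inside the proof of the elimination-forest lemma, so there is no internal argument to compare yours against. Your two-stage outline (normalization to a binary decomposition with $\Oh(n)$ nodes, then recursive splitting with a portal set of at most $2(w+1)$ vertices and root bags of size at most $3(w+1)$) is the standard rebalancing argument and is sound as far as it goes, and you correctly pinpoint the one genuinely delicate step — choosing a splitter that simultaneously halves the subtree and keeps every piece attached to the rest along at most two boundary bags, so the portal set does not grow over $\Theta(\log n)$ levels. However, at exactly that step you invoke the construction of~\cite{bodlaender-hagerup} rather than supplying it, so your proposal is not an independent proof but the same appeal to the literature that the paper makes, padded with correct (and harmless) scaffolding; to make it self-contained you would need to spell out the split of a subtree with two boundary edges into balanced pieces each again having at most two boundary edges (or switch to the standard weighted/centroid variant of that argument), together with the bookkeeping that yields the overall $\Oh(wn)$ running time.
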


We now prove \cref{lem:td-decomp}.

\begin{proof}[Proof of \cref{lem:td-decomp}]
	Recall we are given a~tree decomposition, say $(T_0, \bag_0)$, of $G$, of width $w$.
	Using \cref{lem:Hagerup}, we turn it into a~tree decomposition $(T, \bag)$ of $G$ of width at most $3w+2$ and height $\Oh(\log n)$, and root it at an arbitrary node $r$. Now, let us turn this tree decomposition into an elimination forest by ``straightening'' each bag. Let $\gamma : V(T) \to 2^{V(G)}$ be the function that given a~node $t$ of $T$, returns the subset of vertices $v \in \bag(t)$ such that $v \not\in \bag(t')$ for any strict ancestor $t'$ of $t$. In other words, $v \in \gamma(t)$ if and only if $t$ is the shallowest node whose bag contains $v$.
		
	 Now, let us define an~elimination forest $F$ of $G$ by defining the parent relation. Let us order the vertices of each $\gamma(t)$ arbitrarily; say that $\gamma(t)$ is ordered $v^t_1, v^t_2, \ldots, v^t_{|\gamma(t)|}$. For $2 \le i \le |\gamma(t)|$ we let $v^t_{i-1}$ be the parent of $v^t_{i}$ in $F$. If $t$ is the root of $T$, then we let $v^t_1$ to be the root of $F$. Otherwise, we set the parent of $v^t_1$ to $v^{t'}_{|\gamma(t')|}$, where $t'$ is the closest ancestor of $t$ such that $\gamma(t')$ is nonempty. One can readily check that it is in fact a valid elimination forest and that its height is no larger than $\Oh(w \cdot \log n)$.
	 
	As for the property that $|\Reach_F(u)| = \Oh(w)$, observe that if $u \in \gamma(t)$ for some $t \in V(T)$ (as noted before, $t$ is determined uniquely), then $\desc_F[u] \subseteq \gamma(\desc_T[t])$, i.e., for every $v \in \desc_F[u]$, $t$ is an ancestor of all nodes whose bags contain $v$.
	Therefore, $N_G[\desc_F[u]] \subseteq \bag(\desc_T[t])$.
	By the construction of $F$ we have that $\bag(\desc_T[t]) - \bag(t) \subseteq \desc_F[u]$, which implies $\Reach_F(u) = N_G(\desc_F[u]) \subseteq \bag(t)$.
	We conclude that $|\Reach_F(u)| \le 3w+3$, as required.

	However, such $F$ does not have to necessarily fulfill the third condition. Fortunately, the problem is easy to fix in a standard way while maintaining the satisfaction of properties $1$ and $2$. We create a~forest $F'$, where $\parent_{F'}(u)$ is defined as the deepest vertex of $\Reach_F(u)$ in $F$. Note that $\Reach_F(u) \subseteq \anc_F[u]$ and $\anc_F(u)$ forms a path in $F$, so such vertex is well defined; as a~special case, if $\Reach_F(u)$ is empty, then we let $u$ be a root of $F$. One can readily check that $F'$ defined in this way is a valid elimination forest of $G$, still satisfies properties $1$ and $2$, and additionally satisfies that for each $u \in V(F')$ we have that $G[\desc_{F'}[u]]$ is connected.
	
	Moreover, as the proof above is algorithmic, we can compute the required elimination forest $F'$ in time $\Oh(n \log n \cdot w)$.
\end{proof}

The following observation shows how property (\ref{item:td-decomp-connected}) of \cref{lem:td-decomp} can be exploited in algorithms working with elimination forests.

\begin{lemma} \label{cl:diff-reaches}
  Assume $F$ is an~elimination forest of $G$ such that for each $u \in V(F)$, the graph $G[\desc_F[u]]$ is connected.
  If $u, v \in V(F)$ are non-root vertices of $F$ and $\parent_F(u) \neq \parent_F(v)$, then it holds that $\Reach_F(u) \neq \Reach_F(v)$.
\end{lemma}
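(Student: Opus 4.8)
The plan is to argue by contradiction: assume $\Reach_F(u) = \Reach_F(v)$ and deduce $\parent_F(u) = \parent_F(v)$, contradicting the hypothesis. The heart of the argument is an auxiliary claim that I would establish first: \emph{for every non-root vertex $w$ of $F$, the parent $\parent_F(w)$ belongs to $\Reach_F(w) = N(\desc_F[w])$.} To prove it, put $p = \parent_F(w)$ and look at the graph $G[\desc_F[p]]$, which is connected by hypothesis. Decompose $\desc_F[p] = \{p\} \cup \bigcup_{c} \desc_F[c]$ over the children $c$ of $p$. For two distinct children $c, c'$, there is no edge of $G$ between $\desc_F[c]$ and $\desc_F[c']$: any edge of $G$ joins two $\preceq_F$-comparable vertices, whereas every vertex of $\desc_F[c]$ is $\preceq_F$-incomparable with every vertex of $\desc_F[c']$. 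Hence in $G[\desc_F[p]]$ the only vertex that can connect the set $\desc_F[c]$ to the remainder is $p$ itself, so connectivity of $G[\desc_F[p]]$ forces $p$ to have a neighbour in each $\desc_F[c]$, in particular in $\desc_F[w]$ (since $w$ is one of the children of $p$). Thus $p \in N(\desc_F[w]) = \Reach_F(w)$.

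Next I would use the fact, already recorded in the preliminaries, that $\Reach_F(w) \subseteq \anc_F[w] \setminus \{w\}$ and that the strict ancestors of $w$ form a chain under $\preceq_F$, with $\parent_F(w)$ being the $\preceq_F$-maximal (deepest) strict ancestor of $w$. Combining this with the auxiliary claim, $\parent_F(w)$ is the $\preceq_F$-maximal element of the nonempty set $\Reach_F(w)$; in particular $\parent_F(w)$ is a function of the set $\Reach_F(w)$ alone.

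Applying this to $u$ and to $v$: if $\Reach_F(u) = \Reach_F(v)$, then $\parent_F(u)$ and $\parent_F(v)$ are each the $\preceq_F$-maximal element of one and the same set, so $\parent_F(u) = \parent_F(v)$, contradicting the assumption. Therefore $\Reach_F(u) \neq \Reach_F(v)$.

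I expect the main obstacle to be the auxiliary claim — concretely, the step asserting that descendant sets of distinct children of $p$ are pairwise non-adjacent in $G$ and that, consequently, connectivity of $G[\desc_F[p]]$ pins $p$ to each child subtree. Once that is in place, the rest is routine bookkeeping about the ancestor chain; no treewidth bounds or properties of the specific construction in \cref{lem:td-decomp} are needed, only the defining property of an elimination forest and the hypothesis that each $G[\desc_F[u]]$ is connected.
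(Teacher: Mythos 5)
Your proof is correct and follows essentially the same route as the paper's: establish that $\parent_F(w) \in \Reach_F(w)$ via connectivity of $G[\desc_F[\parent_F(w)]]$, then observe that $\parent_F(w)$ is the $\preceq_F$-maximal element of $\Reach_F(w)$, so the parent is determined by the reach set. You merely spell out the connectivity step (non-adjacency of distinct child subtrees) that the paper leaves implicit.
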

\begin{proof}
	It is always the case that $\parent_F(u) \in \Reach_F(u)$ as otherwise $\desc_F[\parent_F(u)]$ would not be connected. Also, $\Reach_F(u) \subseteq \anc_F[\parent_F(u)]$, hence $\parent_F(u)$ is the (unique) deepest vertex of $\Reach_F(u)$.
	This finishes the proof.
\end{proof}

\subsection{Compressed instances}

We now define the crucial notion of a \emph{compressed instance} of \CSPs. Intuitively, compression of instances provides us with a~means of producing significantly smaller instances of \CSPs with the same maximum revenue.
We begin with the~definition of a~compressed graph:

\begin{definition}
	\label{def:compressed-graph}
	Let $G$ be a graph and $Y$ be a subset of $V(H)$. By $G\{Y\}$ we denote the \emph{compressed graph} for $G$ and $Y$, defined as follows:
	\begin{align*}
	V(G\{Y\}) &= Y \cup \{N(C) \,\colon\, C \in \cc(G \setminus Y)\} \subseteq Y \cup 2^Y, \\
	E(G\{Y\}) &= E(G[Y]) \cup \{vS: v \in S,\, S \in V(G\{Y\}) \setminus Y\}.
	\end{align*}
	In other words, we group the connected components of $G \setminus Y$ by their neighborhoods and for each such group we create one vertex connected to all vertices from the neighborhood of this group.
\end{definition}

Next we say what it means to compress an instance.

\begin{definition}
	Let $I = (G, D, \rev, C)$ be a \CSPs instance and $Y$ a subset of $V(G)$. We define the compressed instance $I\{Y\} = (G\{Y\}, D', \rev', C')$ as follows.
	\begin{itemize}
	  \item \emph{(Domains.)} If $v \in Y$, then $D'_v = D_v$. If $S \in V(G\{Y\}) \setminus Y$, then $D'_S = \{0\} \cup \prod_{v \in S} D_v$.
	  \item \emph{(Revenues.)} If $v \in Y$, then $\rev'_v = \rev_v$. Now assume $S \in V(G\{Y\}) \setminus Y$. For convenience, let $S = \{u_1, \ldots, u_s\}$ and let $R$ be the union of all connected components of $G \setminus Y$ whose neighborhood is $S$.
	  Let $I_S = I[R \cup S]$.
	  Then $\rev'_S(0) = 0$, and $\rev'_S(d_1, \ldots, d_s)$ for $d_i \in D_{u_i}$ is the maximum revenue of $\rev(\phi|_R)$, where $\phi$ ranges over all valid solutions to $I_S$ such that $\phi(u_i) = d_i$ for each $1 \le i \le s$.
	  If no such solutions exist, then we set $\rev'_S(d_1, \ldots, d_s) = 0$.
	  \item \emph{(Constraints.)} If $u,v \in Y$, then $C'_{uv} = C_{uv}$. If $S \in V(G\{Y\}) \setminus Y$ and $u \in S$, then
	  \[ C'_{uS} = \{(d, e) \,\colon\, d \in D'_u\textrm{ and } e \in D'_S,\, d = 0 \vee e = 0 \vee e(u) = d\}. \]
	  In other words, the constraint $C'_{uS}$ permits the valuations $\phi$ of $G\{Y\}$ for which $\phi(u) = 0$ or $\phi(S) = 0$, or for which $\phi(S) \in \prod_{v \in S} D_v$ and $\phi(S)(u) = \phi(u)$.
	\end{itemize}
\end{definition}

It can be easily verified that for every $I$ and every $Y$, $I\{Y\}$ is a~correct instance of \CSP. Also, note that in the case of defining $\rev'_S(d_1, \ldots, d_s)$, no feasible $\phi$ exists if and only if setting $\phi(u_i) = d_i$ for each $1 \le i \le s$ already violates some constraint, as we can always put $\phi$ as a~zero~function~on~$R$.

In other words, given a~\CSPs instance $I$ and a~set $Y \subseteq V(I)$, we define the compression $I\{Y\}$ by contracting the connected components of $G \setminus Y$ to single vertices, and identifying the contracted vertices with the same neighborhood $S$ in $Y$ to the same vertex $x$.
Then, for each possible valuation $d \in \prod_{v \in S} D_v$ of vertices in $S$, we encode in $x$ the maximum revenue of a~partial solution for the connected components assigned to $x$ that is consistent with $d$.
Note that after a~valuation of vertices in $S$ is fixed, the instances of \CSPs induced by each connected component assigned to $x$ are pairwise independent: there are no constraints between the connected components of $G \setminus Y$.
Therefore:
\begin{observation}
  \label{obs:sum-over-cc}
  Let $I = (G, D, \rev, C)$ be a~\CSPs instance and let $Y \subseteq V(G)$.
  Let $S \in V(G\{Y\}) \setminus Y$, $S = \{u_1, \dots, u_s\}$, and $C_1, C_2, \dots, C_\ell$ be the connected components of $G \setminus Y$ whose neighborhood is $S$.
  For each $1 \le i \le \ell$, let $I_i = I[C_i \cup S]$.
  Let also $d_1 \in D_{u_1}, \dots, d_s \in D_{u_s}$.
  Then, $\rev'_S(d_1, \dots, d_s) = \sum_{i=1}^{\ell} \rev(\phi_i|_{C_i})$, where $\phi_i$ is a solution to $I_i$ maximizing $\rev(\phi_i|_{C_i})$ under the assumption that $\phi_i(u_j) = d_j$ for all $1 \le j \le s$, or $\rev'_S(d_1, \dots, d_s) = 0$ if setting $\phi(u_j) = d_j$ for all $1 \le j \le s$ violates some constraint.
\end{observation}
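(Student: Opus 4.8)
The plan is to unfold the definition of $\rev'_S$ and to exploit the fact that $G \setminus Y$ has no edges between distinct connected components, so once the values on $S$ are fixed, the components in $R \coloneqq C_1 \cup \dots \cup C_\ell$ can be optimized independently. I would first dispose of the degenerate case: if setting $\phi(u_j) = d_j$ for all $j$ already violates some constraint $C_{u_j u_{j'}}$ with $u_j u_{j'} \in E(G[S])$, then no valid solution to $I_S = I[R \cup S]$ agrees with these values, so $\rev'_S(d_1,\dots,d_s) = 0$ by definition; and likewise no valid solution to any $I_i = I[C_i \cup S]$ agrees with them, since $I_i$ retains the constraint $C_{u_j u_{j'}}$ (as $S \subseteq C_i \cup S$). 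This matches the second alternative of the statement. In the complementary (non-degenerate) case, for each $i$ the valuation putting $d_j$ on each $u_j$ and $0$ on all of $C_i$ is a valid solution to $I_i$ — here I would invoke the CSP axioms $\{0\}\times D_v \subseteq C_{uv}$ and $D_u \times \{0\}\subseteq C_{uv}$ — so, the domains being finite, a revenue-maximizing $\phi_i$ exists; similarly $\rev'_S(d_1,\dots,d_s)$ is attained by some valid solution $\phi$ to $I_S$ with $\phi(u_j) = d_j$.

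Next I would prove the two inequalities. For the ``$\le$'' direction, let $\phi$ be such a maximizer for $I_S$, so $\rev'_S(d_1,\dots,d_s) = \rev(\phi|_R)$. Since $R$ is the disjoint union of $C_1,\dots,C_\ell$, we have $\rev(\phi|_R) = \sum_{i=1}^\ell \rev(\phi|_{C_i})$. Restricting $\phi$ to $C_i \cup S$ yields a valid solution to $I_i$ with the prescribed values on $S$ — every constraint of $I_i$ is a constraint of $I_S$ — hence $\rev(\phi|_{C_i}) \le \rev(\phi_i|_{C_i})$ by maximality of $\phi_i$, and summing gives the bound. For the ``$\ge$'' direction, I would glue the $\phi_i$ together: define $\psi$ on $R \cup S$ by $\psi(u_j) = d_j$ for all $j$ and $\psi(v) = \phi_i(v)$ for $v \in C_i$. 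This is well-defined, and I would check it satisfies every constraint of $I_S$: constraints within $S$ hold by the non-degenerate assumption; constraints inside some $C_i$ or between $C_i$ and $S$ hold because $\psi$ agrees there with the valid solution $\phi_i$ of $I_i$; and there are no constraints between distinct $C_i$ and $C_{i'}$, since they are different connected components of $G \setminus Y$. Thus $\psi$ is a valid solution to $I_S$ agreeing with $(d_1,\dots,d_s)$ on $S$, so $\rev'_S(d_1,\dots,d_s) \ge \rev(\psi|_R) = \sum_{i=1}^\ell \rev(\phi_i|_{C_i})$. Combining the two directions yields the claim.

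I do not expect a genuine obstacle here: the statement is essentially a repackaging of the definition of $I\{Y\}$, and the one structural fact that makes it work — the absence of edges across connected components of $G \setminus Y$ — is immediate. The only point deserving care is handling the degenerate branch consistently, i.e.\ checking that ``some constraint is violated'' means the same thing for $I_S$ and for each $I_i$; this holds because all the constraints relevant to that branch lie within $S$, which is contained in every one of these induced instances.
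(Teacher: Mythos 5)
Your proof is correct and is exactly the formalization the paper has in mind: the paper states this as an unproven Observation, justified by the single remark preceding it (``there are no constraints between the connected components of $G \setminus Y$''), and your two-inequality argument plus the degenerate-case check is precisely how one fills that in.
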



Having defined compressed graphs and instances, we now present a series of their properties. Namely: $\Cc$ is closed under graph compressions and compressed instances have the same optimum revenues as the original instances.

\begin{lemma}
  \label{lem:compression-stays-in-class}
	Let $G$ be a graph and $Y$ be a subset of $V(H)$. Then $G \in \Cc$ implies $G\{Y\} \in \Cc$.
\end{lemma}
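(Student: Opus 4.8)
The goal is to show that the compressed graph $G\{Y\}$ is a minor of $G$; since $\Cc$ is minor-closed, this immediately gives $G\{Y\}\in\Cc$. The plan is to exhibit an explicit minor model of $G\{Y\}$ in $G$. Each vertex $v\in Y$ is modeled by the singleton branch set $\{v\}$. For each vertex $S\in V(G\{Y\})\setminus Y$, recall that $S=N(C)$ for some nonempty collection of connected components of $G\setminus Y$ sharing that common neighborhood; let $R_S$ be the union of all such components, and model $S$ by the branch set $R_S$. These branch sets are pairwise disjoint: distinct components of $G\setminus Y$ are disjoint, components with different neighborhoods land in different $R_S$'s, and the $R_S$'s are disjoint from $Y$.

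Next I would verify connectivity and adjacency. Each $R_S$ need not be connected in $G\setminus Y$ on its own, but every component $C\subseteq R_S$ has $N(C)=S\neq\emptyset$ (note $S$ is nonempty because $G$ — or at least the relevant part — has the component attached to $Y$; if some component had empty neighborhood it would be its own connected component of $G$, a minor-irrelevant case one can handle by contracting it into any branch set or discarding it). So to make $R_S$ into a connected branch set I pick one fixed vertex $y_S\in S\subseteq Y$, and instead use the branch set $R_S\cup\{y_S\}$ — wait, that collides with $\{y_S\}$. The cleaner route: contract each component $C$ of $G\setminus Y$ individually to a single vertex $x_C$ (a valid minor operation, as each $C$ is connected), obtaining a minor $G'$ of $G$ with vertex set $Y\cup\{x_C : C\in\cc(G\setminus Y)\}$ and $x_C$ adjacent exactly to $N(C)\subseteq Y$. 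Then, within $G'$, for each group of components with common neighborhood $S$, the vertices $\{x_C : N(C)=S\}$ form an independent set all with identical neighborhood $S$; identifying them (i.e., repeatedly contracting edges? no — identifying non-adjacent vertices with the same neighborhood is a minor operation via deleting all but one and that does not preserve adjacency). The correct statement is: identifying twins with the same neighborhood is realized as a minor by deleting all but one of them, which yields exactly the desired vertex $S$ with neighborhood $S$; the deleted ones are redundant. After these operations the resulting graph has vertex set $Y\cup\{N(C):C\in\cc(G\setminus Y)\}$, edge set $E(G[Y])$ together with edges $\{vS : v\in S\}$, which is precisely $G\{Y\}$ by \cref{def:compressed-graph}.

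Thus $G\{Y\}$ is obtained from $G$ by edge contractions (collapsing each component of $G\setminus Y$) followed by vertex deletions (removing duplicate twin-vertices), so $G\{Y\}$ is a minor of $G$, and minor-closedness of $\Cc$ finishes the proof. The main thing to be careful about — the only place where the argument is not completely mechanical — is the handling of components $C$ with $N(C)=\emptyset$ (which happens only if $C$ is a whole connected component of $G$ disjoint from $Y$): per \cref{def:compressed-graph} these contribute a vertex $\emptyset\in V(G\{Y\})$ which is isolated, and it is realized by contracting $C$ to a single vertex and keeping it; alternatively one checks the definition is only ever applied with $Y$ meeting every component, in which case the issue does not arise. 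Either way the minor argument goes through.
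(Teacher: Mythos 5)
Your proof is correct and, after the initial detour with branch sets, lands on exactly the argument the paper uses: contract each connected component of $G \setminus Y$ to a single vertex and then delete all but one representative from each class of components with the same neighborhood in $Y$, so $G\{Y\}$ is a minor of $G$ and minor-closedness of $\Cc$ finishes the job. Your extra attention to the edge case $N(C)=\emptyset$ is a fine addition but not essential here.
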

\begin{proof}
	$H\{Y\}$ can be obtained from $H$ through a series of contractions (where we contract each connected component of $H \setminus Y$ into a single vertex) and vertex removals (we remove all but one vertices from each group that has a particular neighborhood in $Y$). Hence, $H\{Y\}$ is a minor of $H$. Since $\Cc$ is assumed to be minor-closed, the claim follows. 
\end{proof}

\begin{lemma}
  \label{lem:compression-preserves-revenue}
	Let $I = (G, D, \rev, C)$ be a \CSPs instance, $Y \subseteq V(G)$ and $I\{Y\} = (G\{Y\}, D', \rev', C')$ be the compressed instance. Then, the optimum solutions of $I$ and $I\{Y\}$ have equal revenues.
\end{lemma}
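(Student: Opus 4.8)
The plan is to prove the equality $\OPT(I)=\OPT(I\{Y\})$ by establishing two inequalities, each by a construction that transforms a solution of one instance into a solution of the other of at least the same revenue.

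\textbf{Direction $\OPT(I)\le \OPT(I\{Y\})$.} First I would take an optimum solution $\phi$ to $I$. I define a solution $\psi$ to $I\{Y\}$ as follows: for $v\in Y$, set $\psi(v)=\phi(v)$; for a vertex $S\in V(G\{Y\})\setminus Y$, writing $S=\{u_1,\dots,u_s\}$, set $\psi(S)=(\phi(u_1),\dots,\phi(u_s))\in \prod_{v\in S}D_v$. I must check that $\psi$ is a valid solution to $I\{Y\}$. The constraints inside $G[Y]$ are satisfied because $\phi$ satisfies them in $I$; the constraint $C'_{uS}$ for $u\in S$ is satisfied because $\psi(S)(u)=\phi(u)=\psi(u)$. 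Then I compare revenues: $\rev(\psi)=\sum_{v\in Y}\rev_v(\phi(v))+\sum_{S}\rev'_S(\psi(S))$. By definition of $\rev'_S$, the value $\rev'_S(\phi(u_1),\dots,\phi(u_s))$ is the \emph{maximum} of $\rev(\phi'|_R)$ over valid solutions $\phi'$ to $I_S=I[R\cup S]$ agreeing with $\phi$ on $S$; the restriction $\phi|_{R\cup S}$ is one such solution (it is valid because $\phi$ is valid in $I$ and $R\cup S$ induces a subinstance), so $\rev'_S(\psi(S))\ge \rev(\phi|_R)$. Here $R$ ranges over the union of components of $G\setminus Y$ with neighborhood $S$, and these sets $R$ together with $Y$ partition $V(G)$, so summing gives $\rev(\psi)\ge \sum_{v\in Y}\rev_v(\phi(v))+\sum_R \rev(\phi|_R)=\rev(\phi)=\OPT(I)$.

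\textbf{Direction $\OPT(I\{Y\})\le \OPT(I)$.} Conversely, take an optimum solution $\psi$ to $I\{Y\}$. I build $\phi$ on $I$ by: $\phi(v)=\psi(v)$ for $v\in Y$; and for each vertex $S\in V(G\{Y\})\setminus Y$ with $\psi(S)\ne 0$ — say $\psi(S)=(d_1,\dots,d_s)$ — and each connected component $C$ of $G\setminus Y$ with $N(C)=S$, let $\phi|_C$ be a solution to $I[C\cup S]$ that attains $\rev'$-optimum as in \cref{obs:sum-over-cc} subject to matching $d_1,\dots,d_s$ on $S$ (such a solution exists since $\psi$ being valid forces the $C'_{uS}$ constraints, which are exactly the statement that the $d_i$ don't immediately violate a constraint; and we already assign $\phi(u_i)=\psi(u_i)=d_i$, consistently, again by $C'_{uS}$). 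If $\psi(S)=0$, put $\phi|_C=0$ on every such component $C$ and revenue $0$ there. I then verify $\phi$ is a valid solution to $I$: constraints within $G[Y]$ hold since $\psi$ satisfies $C'_{uv}=C_{uv}$; constraints between $Y$ and a component $C$ with $N(C)=S$ hold because $\phi(u)=d_i$ on $S\subseteq Y$ and $\phi|_C$ was chosen as a valid solution of $I[C\cup S]$; there are no constraints between different components of $G\setminus Y$. Finally $\rev(\phi)=\sum_{v\in Y}\rev_v(\psi(v))+\sum_{S}\big(\sum_{C:\,N(C)=S}\rev(\phi|_C)\big)$, and by \cref{obs:sum-over-cc} the inner sum equals $\rev'_S(\psi(S))$ (also in the degenerate $\psi(S)=0$ case, where it is $0=\rev'_S(0)$), so $\rev(\phi)=\rev(\psi)=\OPT(I\{Y\})$.

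Combining the two inequalities yields $\OPT(I)=\OPT(I\{Y\})$. The only mildly delicate point — and the one I would write most carefully — is the second direction: one needs to observe that a solution $\psi$ to $I\{Y\}$ can assign $0$ to some contracted vertex $S$ even when the corresponding $Y$-valuation is nonzero, and that this is harmless because we are free to zero out the components and still respect all constraints (the zero value is always allowed). Everything else is a matter of unwinding the definitions of $D'$, $\rev'$, $C'$ and checking that the partition of $V(G)$ into $Y$ and the component-unions $R$ makes the revenue sums match termwise; I expect no real obstacle there.
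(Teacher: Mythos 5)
Your overall strategy is the same as the paper's: both directions are proved by the same explicit translations of solutions, and the first direction (from an optimum $\phi$ of $I$ to a solution of $I\{Y\}$ of at least the same revenue) is argued exactly as in the paper and is fine. In the second direction, however, there is a concrete gap. You case-split on whether $\psi(S)=0$, and in the case $\psi(S)=(d_1,\dots,d_s)\neq 0$ you justify both the existence of a solution to $I[C\cup S]$ matching $d_1,\dots,d_s$ and the identity $\phi(u_i)=\psi(u_i)=d_i$ by appealing to the constraints $C'_{u_iS}$. But $C'_{u_iS}$ only requires $\psi(u_i)=0$ \emph{or} $\psi(S)(u_i)=\psi(u_i)$; it is satisfied with $\psi(u_i)=0$ regardless of $\psi(S)(u_i)$. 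So it can happen that $\psi(S)\neq 0$ while $\psi(u_i)=0\neq d_i$ for some $i$. Then $\phi(u_i)\neq d_i$, and worse, the tuple $(d_1,\dots,d_s)$ need not satisfy the constraints inside $G[S]$ (validity of $\psi$ only checks $C_{u_iu_j}=C'_{u_iu_j}$ against the values $\psi(u_i)$, not against the entries of $\psi(S)$); in that situation $\rev'_S(\psi(S))=0$ and the solution of $I[C\cup S]$ you want to restrict simply does not exist, so your construction is undefined there.

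The paper sidesteps this by splitting on whether $\rev'_S(\psi(S))=0$ or $>0$: in the zero case one puts the value $0$ on all components assigned to $S$ (contribution $0$ on both sides), and in the positive case a solution with the prescribed values on $S$ exists by the very definition of $\rev'_S$. With that case split your argument goes through, provided you also repair the validity check: for an edge between $c\in C$ and $u_i\in S$ with $\psi(u_i)=0$, the pair $(\phi(c),0)$ is allowed because $0$ is always permitted in constraints --- not because $\phi(u_i)=d_i$. After these two local repairs, the revenue bookkeeping via \cref{obs:sum-over-cc} is exactly as you wrote it, and the two inequalities combine as in the paper.
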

\begin{proof}
	Let $\OPT$ and $\OPT'$ denote the revenues of optimum solutions to $I$ and $I\{Y\}$.
	
	Let $\phi$ be any solution to $I$. We shall define $\phi'$ which will be a~valid solution to $I\{Y\}$. We define $\phi'$ to agree with $\phi$ on $Y$.
	Then $\rev'(\phi'|_Y) = \rev(\phi|_Y)$.
	Now choose $S \in \{N(E) \,\colon\, E \in \cc(H \setminus Y)\}$. Let $S = \{u_1, \ldots, u_s\}$ and $\phi(u_1)=d_1, \ldots, \phi(u_s) = d_s$. We set $\phi'(S) \coloneqq (d_1, \ldots, d_s)$.	
	It can be readily verified that $\phi'$ complies with all the constraints in $C'$, so $\phi'$ is valid.
	Now, if $R$ is the union of all the connected components of $H \setminus Y$ whose neighborhood is equal to $S$, then $\rev'(\phi'(S)) \ge \rev(\phi|_R)$ from the definition of $\rev'$.
	By adding such inequalities for all possible pairs of $S$ and $R$ and the equality $\rev'(\phi'|_Y) = \rev(\phi|_Y)$, we conclude that $\rev'(\phi') \ge \rev(\phi)$, hence $\OPT' \ge \OPT$.
	
	On the other hand, let $\psi'$ be any solution to $I\{Y\}$. We shall define $\psi$ -- a valid solution to~$I$. We define $\psi$ to agree with $\psi'$ on $Y$, hence $\rev(\psi|_Y) = \rev'(\psi'|_Y)$.
	Now choose $S \in V(H\{Y\}) \setminus Y$ and assume that $S = \{u_1, \ldots, u_s\}$. 
	Let $R$ be the union of all the connected components of $H \setminus Y$ whose neighborhood is equal to $S$.
	If $\rev'(\psi'(S)) = 0$, then we set $\psi(v) \coloneqq 0$ for all $v \in R$.
	Otherwise, $\rev'(\psi'(S)) > 0$, so $\psi'(S) \in \prod_{v \in S} D_v$.
	In this case, from the definition of $C'_{u_i S}$, it has to be that $\psi'(S)(u_1) = \psi'(u_1) = \psi(u_1), \ldots, \psi'(S)(u_s) = \psi'(u_s) = \psi(u_s)$, and let $I_S = I[R \cup S]$.
	Since $\rev'(\psi'(S)) > 0$, there exists a~solution $\zeta$ to $I_S$ such that $\zeta(u_i) = \psi'(u_i) = \psi(u_i)$ for all $1 \le i \le s$.
	Let $\zeta$ be any such maximum-revenue solution, and set $\psi|_R \coloneqq \zeta|_R$.
	Then $\rev(\psi|_R) = \rev'(\psi'(S))$ by the definition of $\rev'$.
	It is also easy to verify that $\psi$ is valid.
	It follows that $\rev(\psi) = \rev'(\psi')$, so $\OPT \ge \OPT'$.
\end{proof}

\subsection{Dynamic maintenance of compressed instances}
\label{ssec:indset-dynamic-compressions}
We proceed to showing how the notion of compressed instances can be used in the setting of the dynamic maintenance of 2CSPs.
Consider an instance $I = (G, D, \rev, C)$ of \CSPs, where $G \in \Cc$ and $\tw(G) \leq w$ for some $w \in \N$.
The instance $I$ undergoes a~sequence of updates, so that at every point of time we have that $G \in \Cc$ (but it might not necessarily be the case that $\tw(G) \leq w$).
Our goal is to maintain a~much smaller \CSPs instance $I^\star$ with the same value of optimum solution as $I$, so that at all points of time, the size of $I^\star$ is roughly proportional to the number of updates $I$ has undergone so far.
This is formalized by the following lemma.

\begin{lemma}
  \label{lem:indset-dynamic-compression}
  Let $w, n, \Delta \in \N$, $\Delta \ge 2$.
  There is a~data structure that supports the following operations:
  \begin{itemize}
    \item \textbf{Initialize} the data structure with an~$n$-vertex \CSPs instance $I = (G, D, \rev, C)$, where $G \in \Cc$, $\tw(G) \leq w$, and all vertices in $I$ have domains not larger than $\Delta$.
    \item \textbf{Update} the instance $I$ using one of the following update types: $\AddVertex$, $\AddEdge$, $\RemoveEdge$, $\UpdateRevenue$.
    It is guaranteed that after the update, we have that $G \in \Cc$ and all vertices in $I$ have domains not larger than $\Delta$.
  \end{itemize}
  The initialization of the data structure is performed in time $\Delta^{\Oh(w)} \cdot n \log n$.
  Afterwards, the data structure additionally maintains a~\CSPs instance $I^\star = (G^\star, D^\star, \rev^\star, C^\star)$, updated by $\AddVertex$, $\AddEdge$, $\RemoveEdge$ and $\UpdateRevenue$, with the following properties:
  \begin{enumerate}[label=(\alph*)]
    \item \label{item:compressreq-in-class} $G^\star \in \Cc$;
    \item \label{item:compressreq-revenue} the maximum revenue of a~solution to $I^\star$ is equal to that of $I$;
    \item \label{item:compressreq-domain} each vertex of $I^\star$ has domain bounded in size by $\Delta^{\Oh(w)}$;
    \item \label{item:compressreq-size} after a~sequence of $t \geq 0$ updates to $I$, we have $|V(I^\star)| \leq t \cdot w^{\Oh(1)} \log n$.
    Moreover, on each update to $I$, the instance $I^\star$ can be updated in time $\Delta^{\Oh(w)} \log n$ and this causes at most $w^{\Oh(1)} \log n$ updates to $I^\star$.
  \end{enumerate}
\end{lemma}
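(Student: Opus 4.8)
The plan is to maintain $I^\star$ as the compression $\Icur\{Z\}$ of the current instance with respect to a growing \emph{stash} $Z$ that collects every vertex ever touched by an update, closed under taking $F$-ancestors for a suitable elimination forest $F$ of the initial graph.

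\paragraph*{Initialization.} First I would compute, via Korhonen's algorithm~\cite{Korhonen21}, a tree decomposition of $G$ of width $\Oh(w)$ in time $2^{\Oh(w)}n$, and feed it to \cref{lem:td-decomp} to obtain, in time $\Oh(wn\log n)$, an elimination forest $F$ of $\Ginit\coloneqq G$ of height $\Oh(w\log n)$ with $|\Reach_F(u)|=\Oh(w)$ for all $u$ and with $\Ginit[\desc_F[u]]$ connected for all $u$. Then I would run the standard bottom-up dynamic programming along $F$: for every $v\in V(\Ginit)$ it yields a table $\mathrm{DP}[v]$ that, for each valuation of $\Reach_F(v)$, stores the maximum revenue of a partial solution on $\desc_F[v]$ consistent with that valuation. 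Since $|\{v\}\cup\Reach_F(v)|=\Oh(w)$, each table has $\Delta^{\Oh(w)}$ entries and (processing children one at a time, so that the running table stays indexed by $\{v\}\cup\Reach_F(v)\supseteq\Reach_F(c)$) the whole computation costs $\Delta^{\Oh(w)}\,n$; this dominates the initialization time $\Delta^{\Oh(w)} n\log n$.

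\paragraph*{Invariant and properties \ref{item:compressreq-in-class}--\ref{item:compressreq-domain}.} Throughout, $Z$ is the union of all vertices inserted by $\AddVertex$ with a prefix of $F$ inside $V(\Ginit)$, chosen so that $Z$ contains every endpoint of an edge ever passed to $\AddEdge$/$\RemoveEdge$ and every vertex ever passed to $\UpdateRevenue$; initially $Z=\emptyset$, so $I^\star$ is a single vertex of revenue $\OPT$. The points I would verify are: (i) since every edge insertion/deletion is incident to a vertex of $Z$, no edge of $\Gcur$ with both endpoints outside $Z$ has ever been modified, hence $\Gcur\setminus Z=\Ginit[V(\Ginit)\setminus Z]$; (ii) because $Z\cap V(\Ginit)$ is a prefix of $F$ and $\Ginit[\desc_F[v]]$ is connected, the connected components of $\Gcur\setminus Z$ are exactly the sets $\desc_F[v]$ with $\parent_F(v)\in Z$ or $v$ a root, each with $N_{\Gcur}(\desc_F[v])=\Reach_F(v)$ of size $\Oh(w)$; (iii) those subtrees have never been touched by an update, so by \cref{obs:sum-over-cc} the revenue that the compression assigns to the contracted vertex $x_S$ for a neighborhood $S$ equals $\sum\{\mathrm{DP}[v](\cdot)\,:\,\parent_F(v)\in Z,\ \Reach_F(v)=S\}$ and is thus read off the precomputed tables. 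Granting this, $I^\star=\Icur\{Z\}$ is genuinely maintained, so properties \ref{item:compressreq-in-class} and \ref{item:compressreq-revenue} are exactly \cref{lem:compression-stays-in-class} and \cref{lem:compression-preserves-revenue}, while \ref{item:compressreq-domain} follows from $|D^\star_{x_S}|\le 1+\Delta^{|S|}=\Delta^{\Oh(w)}$.

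\paragraph*{Updates and property \ref{item:compressreq-size}.} To process an update I would first extend $Z$ by the touched vertices together with their $F$-ancestors (keeping $Z$ a prefix); as $F$ has height $\Oh(w\log n)$ this adds $\Oh(w\log n)$ vertices, inserted one at a time in top-down order. Inserting a vertex $z$ all of whose strict ancestors already lie in $Z$ ``uncompresses'' the subtree $\desc_F[z]$: add $z$ to $I^\star$ as a genuine vertex with its (still original) revenue function and its (untouched) edges to vertices of $Z$; subtract $\mathrm{DP}[z](\cdot)$ from the contracted vertex representing $\Reach_F(z)$, deleting it if it becomes empty; and for each distinct value $S'$ of $\Reach_F(c)$ over the children $c$ of $z$ in $F$, add $\sum\{\mathrm{DP}[c](\cdot)\,:\,\parent_F(c)=z,\ \Reach_F(c)=S'\}$ to the contracted vertex for $S'$, creating it (with its $\Oh(w)$ incident edges) if necessary. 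The key efficiency point is that the subtrees $\desc_F[c]$ over children $c$ of $z$ are connected with neighborhoods inside the $\Oh(w)$-element set $\{z\}\cup\Reach_F(z)$, hence are precisely the components of $\Ginit\setminus(\{z\}\cup\Reach_F(z))$; by \cref{cor:neighborhood-complexity-components} there are only $\Oh(w)$ distinct values $S'$. So one insertion into $Z$ triggers $w^{\Oh(1)}$ updates to $I^\star$ and takes $\Delta^{\Oh(w)}$ time (dominated by arithmetic on revenue tables of size $\Delta^{\Oh(w)}$); once $Z$ is extended, the original update is applied directly to $I^\star$, whose relevant endpoint(s)/vertex now lie in $Z$ and are genuine vertices. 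Summing over the $\Oh(w\log n)$ insertions per update gives $w^{\Oh(1)}\log n$ updates to $I^\star$ in time $\Delta^{\Oh(w)}\log n$. Finally, after $t$ updates $|Z|=\Oh(t\cdot w\log n)$, and (for $Z\neq\emptyset$) the number of contracted vertices is at most $\nu_\Cc|Z|$ by \cref{cor:neighborhood-complexity-components}, so $|V(I^\star)|=\Oh(t\cdot w\log n)$, giving \ref{item:compressreq-size}.

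\paragraph*{Main obstacle.} I expect the delicate part to be the uncompression step and its bookkeeping: arguing that each update preserves the invariant \emph{exactly} — in particular that the graph outside $Z$ is never altered and that the DP tables of untouched subtrees remain valid — and that splitting a contracted vertex spawns only $w^{\Oh(1)}$ new objects. The latter hinges on the connectivity property (item \ref{item:td-decomp-connected} of \cref{lem:td-decomp}) to recast the child-subtrees as connected components of a graph $\Ginit\setminus X$ with $|X|=\Oh(w)$ before invoking neighborhood complexity, and on the inclusion $\Reach_F(c)\subseteq\{z\}\cup\Reach_F(z)$ which also keeps the DP combination local.
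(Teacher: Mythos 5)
Your construction is essentially the paper's: the same elimination forest from \cref{lem:td-decomp}, the same stash $Z$ (newly added vertices plus the ancestor closure of touched vertices of the initial graph), the same identification of the components of $\Gcur \setminus Z$ with the subtrees $\desc_F[v]$ hanging below $Z$, and the same uncompression step bounded via \cref{cor:neighborhood-complexity-components}. The correctness part of your argument (properties (a)--(c), the invariant that the graph outside $Z$ is never altered, the size bound in (d)) goes through as in the paper; the only cosmetic difference is that you allow a new component's neighborhood $S'$ to coincide with that of an existing contracted vertex and merge revenues, whereas the paper uses \cref{cl:diff-reaches} to argue such collisions never occur, so fresh vertices are always created.

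There is, however, a genuine gap in your per-update time bound. When you uncompress a vertex $z$, you propose to compute, for each distinct $S'$, the sum $\sum\{\mathrm{DP}[c](\cdot) : \parent_F(c)=z,\ \Reach_F(c)=S'\}$ on the fly, and you claim this costs $\Delta^{\Oh(w)}$. But in the \MaxInd setting there is no bound on the number of children of a node of $F$: for a star $K_{1,n-1}$ the root has $n-1$ children, all with the same reachability set, so a single update touching one leaf forces you to iterate over $\Theta(n)$ children when the root enters $Z$, which already exceeds the required worst-case update time $\Delta^{\Oh(w)}\log n$ (and amortization does not save you, since this can happen on the very first update). The paper avoids this by precomputing, at initialization, the sets $\Nc_u = \{\Reach_F(c) : c \text{ child of } u\}$ and the aggregated table $W[u][R][\phi] = \sum\{T[c][\phi] : \parent_F(c)=u,\ \Reach_F(c)=R\}$ (computable in total time $n\cdot\Delta^{\Oh(w)}$, since each entry $T[c][\phi]$ contributes to exactly one entry of $W$); then each uncompression only performs $\Oh(w)$ lookups into $W$, giving the claimed $\Delta^{\Oh(w)}$ per insertion into $Z$. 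Your proof needs this (or an equivalent grouping of children by reachability set with precomputed sums) added to the initialization; with that amendment the rest of your argument matches the paper's.
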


\begin{proof}
  On initialization, we compute a~tree decomposition of $G$ of width at most $\Oh(w)$ in time \mbox{$2^{\Oh(w)} \cdot n$}, for instance using the fixed-parameter algorithm of Korhonen~\cite{Korhonen21}.
  We then transform this tree decomposition into an elimination forest $F$ of $G$ with $V(F) = V(G)$ such that: (i) $F$ has height $\Oh(w \log n)$; (ii) for each $u \in V(F)$, we have $|\Reach_F(u)| = \Oh(w)$; and (iii) for each $u \in V(F)$, the graph $G[\desc_F[u]]$ is connected.
  This can be done in time $\Oh(w \cdot n \log n)$ by \cref{lem:td-decomp}.
  
  Moreover, we precompute multiple dynamic programming tables on $F$:
  \begin{itemize}
    \item For each $u \in V(F)$ and for each possible valuation $\phi \in \prod_{s \in \Reach_F(u)} D_s$ of vertices from $\Reach_F(u)$, compute $T[u][\phi]$: the maximum possible revenue $\rev(\psi|_{\desc_F[u]})$ over all solutions $\psi$ to $I[\desc_F[u] \cup \Reach_F(u)]$ agreeing with $\phi$ on $\Reach_F(u)$, or $0$ if no such solution exists.
    Note that $|\Reach_F(u)| = \Oh(w)$, hence there are at most $\Delta^{\Oh(w)}$ possible valuations of $\phi$ on $\Reach_F(u)$ for a given $u$.
    Also note that for a vertex $u$ with children $c_1, \ldots, c_s$ we have that $\Reach_F(c_i) \subseteq \Reach_F(u) \cup \{u\}$.
    Hence, we can compute all values $T[\cdot][\cdot]$ using a simple bottom-up dynamic programming in time~${n \cdot \Delta^{\Oh(w)}}$.
    
    \item For each $u \in V(F)$, compute the set $\Nc_u = \{\Reach_F(c) \,\colon\, c\text{ is a child of }u\text{ in F}\}$.
    This can be performed easily in time $\Oh(nw)$.
    Note that $R \subseteq \Reach_F(u) \cup \{u\}$ for all $R \in \Nc_u$.
    
    \item For each $u \in V(F)$, $R \in \Nc_u$ and valuation $\phi  \in \prod_{s \in R} D_s$ of vertices from~$R$, compute $W[u][R][\phi]$: the sum of $T[c][\phi]$ such that $c$ is a~child of $u$ and $\Reach_F(c) = R$.
    This table can be computed in time $n \cdot \Delta^{\Oh(w)}$ as well: for each non-root vertex $c$ and each valuation $\phi$ of $\Reach_F(c)$, the value $T[c][\phi]$ is added to exactly one entry of~$W$.
  \end{itemize}
  
  We remark that the dynamic programming tables above suffice to compute the maximum-revenue solution for $I$ at the time of the initialization: it is simply the sum over $T[r][\emptyset]$ ranging over all roots $r$ of trees of the elimination forest $F$.
  
  Also initialize sets $A = B = Z = \emptyset$ and $I^\star = I\{\emptyset\}$.
  We remark that $I\{\emptyset\}$ is a~one-vertex instance of 2CSP with the same revenue of the optimum solution as $I$.
  
  Let $\Iinit = (\Ginit, D^{\mathrm{init}}, \rev^{\mathrm{init}}, C^{\mathrm{init}})$ be the initial instance of \CSPs. In the sequel, by $\Icur = (\Gcur, D^{\mathrm{cur}}, \rev^{\mathrm{cur}}, C^{\mathrm{cur}})$ we denote the current snapshot of the instance  in the data structure; initially, $\Icur \coloneqq \Iinit$.
  Note that $V(\Iinit) \subseteq V(\Icur)$ at all times since vertices cannot be explicitly removed from $I$.
  Instead, we emulate the removals of vertices from \CSPs by removing all the edges incident to the vertex and setting the revenue of the vertex to the zero function. If two instances of \CSPs are isomorphic after removing from both of them the sets of isolated vertices with zero revenues, we say that these instances are \emph{equivalent}. 
  
  Throughout the life of the data structure, we maintain the following invariants:
  \begin{itemize}
    \item $A = V(\Icur) \setminus V(\Iinit)$ is the set of vertices added to $I$ since the instantiation of the structure.
    \item $B \subseteq V(\Icur)$ is the set of vertices that were part of any update to $I$ so far (i.e., $v \in B$ if $v$ was added to $I$, the revenue of $v$ was changed, or an~edge incident to $v$ was added or removed).
    \item $Z = A \cup \anc_F[B \setminus A]$, i.e., $Z$ contains $A$ and all the ancestors in $F$ of all vertices of $B \setminus A$.
    \item $I^\star$ is equivalent to $\Icur\{Z\}$.
  \end{itemize}
  Note that it follows that $A \subseteq B$, $V(\Icur) = V(\Iinit) \cup A$ and $B \setminus A \subseteq V(\Iinit)$. Also, the invariants are satisfied at the time of initialization.
  Moreover, the required properties \ref{item:compressreq-in-class}, \ref{item:compressreq-revenue} and \ref{item:compressreq-domain} follow from the invariant: since $I^\star$ is equivalent to $\Icur\{Z\}$, property \ref{item:compressreq-in-class} follows from \cref{lem:compression-stays-in-class}, property \ref{item:compressreq-revenue} follows from \cref{lem:compression-preserves-revenue}, and property \ref{item:compressreq-domain} follows from the definition of a~compressed instance and the fact that the domain of each vertex of $I$ has size bounded by $\Delta$.
  
  We envision $Z$ as consisting of two parts --- an ``unordered cloud'' of newly added vertices (which is $A$) and a well-structured part that is the smallest prefix of $F$ containing all vertices from $V(\Iinit)$ that were part of any update. Note that for any vertex $v \notin Z$, we have $v \notin B$. Therefore, any $v \notin Z$ is not adjacent to any vertex of $A$, and is adjacent to some vertex of $\Icur$ if and only if it is adjacent to this vertex in $\Iinit$.
  In other words, the neighborhood of $v$ has stayed unchanged; that is, $v \notin Z \Rightarrow N_{\Gcur}(v) = N_{\Ginit}(v)$. The main idea of our approach is that we treat $Z$ as a possibly complicated part, whereas the behavior of the vertices outside of $Z$ can be well-understood through the preprocessed dynamic programming tables.
  Thus, we can effectively compress the parts of the graph outside of $Z$ and construct a~concise instance of \CSPs without changing the revenue of the optimum solution.
  We will prove that the size of the compressed instance depends mainly on the size of $Z$ -- intuitively, we do not compress $Z$ in any way, but the parts of the graph outside of $Z$ will be compressed heavily. Fortunately, the size of $Z$ cannot increase too much with a single update:

\begin{claim} \label{cl:growing-z}
	The set $Z$ expands by at most $\Oh(w \log n)$ vertices with each update to $I$.
\end{claim}
\begin{claimproof}
	Naturally, all sets $A$, $B$, $Z$ can only expand after each update.
	
	If the update to $I$ is of type $\AddVertex(u,D_u,\rev_u)$, then sets $A$ and $B$ grow by one vertex $u$, and $\anc_F[B \setminus A]$ does not change. Hence $Z$ grows by exactly one vertex.
	
	For our convenience, let $P_u$ for a vertex $u \in V(\Gcur)$ be the empty set if $u \in A$ and $\anc_F[u]$ otherwise.
	If the update is of type $\AddEdge(u, v, c)$ or $\RemoveEdge(u, v)$, then $A$ does not change, $B$ expands by $u$ and $v$ (if these vertices were not part of $B$ yet), and $\anc_F[B \setminus A]$ grows by vertices that are contained within $P_u \cup P_v$. Since $F$ is of height $\Oh(w \log n)$, it follows that $Z$ grows by $\Oh(w \log n)$ vertices.
	
	Finally, if the update is of type $\UpdateRevenue(v, r_v)$, then $A$ does not change, $B$ additionally includes $v$ (if not in $B$ yet), and $\anc_F[B \setminus A]$ grows by a set of vertices that is contained within $P_v$, which again is of size $\Oh(w \log n)$.
\end{claimproof}

  Let us now understand the structure of the compressed instance $\Icur\{Z\}$.
  Recall that for each $u \in V(F)$, it holds that $\desc_F[u]$ induces a connected graph in $\Ginit$. Therefore, if $u \notin Z$, then $\desc_F[u]$ induces a connected subgraph of $\Gcur$ as well. If additionally $\parent_F(u) \in Z$, then $N_{\Gcur}(\desc_F[u]) = N_{\Ginit}(\desc_F[u]) = \Reach_F(u) \subseteq \anc_F(u) \setminus \{u\} \subseteq Z$, so we conclude that $\desc_F[u]$ is actually a~connected component of $\Gcur \setminus Z$.
   Hence, there exists a natural bijection between the connected components of $\Gcur \setminus Z$ and the vertices $u$ such that $u \notin Z$, but $\parent_F(u) \in Z$ (or $u \notin Z$ is the root of some tree in $F$).
   We will call all such vertices $u$ \emph{appendices} of $Z$ in $F$.

   Observe that the revenues of the vertices in the compressed instance can be inferred from the entries of the precomputed dynamic programming table $T[\cdot][\cdot]$.
   In the instance $\Icur\{Z\}$, consider $S \in V(G\{Z\}) \setminus Z$ and let $\phi \in \prod_{s \in S} D_s$ be a~valuation of vertices in $S$.
   Then, $\rev^{\mathrm{cur}}(S, \phi)$ is by \cref{obs:sum-over-cc} exactly the sum of $T[u][\phi]$ over all appendices $u$ of $Z$ in $F$ such that $\Reach_F(u) = S$.

It turns out that it is possible to maintain $\Gcur\{Z\}$ efficiently given the precomputed tables $T$ and $W$.
To this end, we claim that $G\{Z\}$ can be updated efficiently when an~appendix of $Z$ is added to $Z$.


\begin{claim} \label{cl:add-vtx-compr}
	Let two sets $Z_1, Z_2 \subseteq V(\Gcur)$ be such that $A \subseteq Z_1 \subseteq Z_2 \subseteq V(\Gcur)$, $|Z_2| = |Z_1| + 1$ and $Z_1 \setminus A$ and $Z_2 \setminus A$ are prefixes of $F$ with $Z_2 - Z_1 \subseteq V(F)$. Then, an instance equivalent to $\Icur\{Z_2\}$ can be obtained from an instance equivalent to $\Icur\{Z_1\}$ through a sequence of $w^{\Oh(1)}$ updates.
	Moreover, this sequence can be computed in time $\Delta^{\Oh(w)}$.
\end{claim}
\begin{claimproof}
	Let $z \in V(F)$ be such that $Z_2 = Z_1 \cup \{z\}$.
	Let $\Cfrak_1$ be the set of connected components of $\Gcur \setminus Z_1$, $\Cfrak_2$ be the set of connected components of $\Gcur \setminus Z_2$ and $c_1, \ldots, c_t$ be the set of children of $z$ in $F$. We have that $\Cfrak_1 \setminus \Cfrak_2 = \{\Gcur[\desc_F[z]]\}$ and $\Cfrak_2 \setminus \Cfrak_1 = \{\Gcur[\desc_F[c_1]], \ldots, \Gcur[\desc_F[c_t]]\}$.
	
	Let $S = \Reach_F(z)$. The compressed instance $\Icur\{Z_1\}$ contains a vertex $v_S$ representing the union of all connected components of $\Gcur \setminus Z_1$ whose neighborhoods are exactly $S$; and $\Gcur[\desc_F[z]]$ is one of such components. To obtain $\Icur\{Z_2\}$ from $\Icur\{Z_1\}$, we need to:
	\begin{enumerate}
	  \item Remove the contribution of $\Gcur[\desc_F[z]] \in \Cfrak_1 \setminus \Cfrak_2$ from the compressed instance.
	  If $\desc_F[z]$ is the only connected component of $\Gcur \setminus Z_1$ with neighborhood $S$, then the vertex $v_S$ should be removed from the instance; this is emulated by removing all edges incident to $v_S$ and replacing the revenue of $v_S$ with the zero function.
	  Otherwise, the revenue of $v_S$ is updated by subtracting, for each valuation $\phi \in \prod_{s \in S} D_s$ of $S$, the entry $T[v][\phi]$ from the revenue of vertex $v_S$ in state $\phi$.
	  In both cases, the compressed vertex $v_S$ has degree $|\Reach_F(z)| \leq \Oh(w)$, so in either case we apply at most $\Oh(w)$ updates to the compressed instance.
	  \item Add the vertex $z$ to the compressed instance.
	  Since the set of neighbors of $z$ in $Z_1$ is exactly $\Reach_F(z)$, this requires one vertex addition and $\Oh(w)$ constraint additions.
	  \item Include the contribution of the connected components $\Gcur[\desc_F[c_1]], \dots, \Gcur[\desc_F[c_t]] \in \Cfrak_2 \setminus \Cfrak_1$ in the compressed instance.
	  Even though $t$ might possibly be large, the number of different neighborhoods of the new connected components is bounded -- this follows from $N_{\Gcur}(\desc_F[c_i]) = \Reach_F(c_i) \subseteq \Reach_F(z) \cup \{z\}$.
	  Since $|\Reach_F(z)| \leq \Oh(w)$, by \cref{cor:neighborhood-complexity-components} there exist at most $\Oh(w)$ distinct neighborhoods of the new components; in other words, $|\Nc_z| \leq \Oh(w)$.
	  By \cref{cl:diff-reaches}, all these neighborhoods actually correspond to new vertices in the compressed instance.
	  Thus, for each $S' \in \Nc_z$ (equivalently, for each $S' \subseteq \Reach_F(z) \cup \{z\}$ with at least one component $\desc_F[c_i]$ with the neighborhood equal to $S'$), we add to the compressed instance a~new vertex $v_{S'}$.
	  For each valuation $\phi \in \prod_{s \in S'} D_s$ of vertices in $S'$, the revenue of $v_{S'}$ in state $\phi$ is the sum over all $T[c_i][\phi]$ for all $1 \le i \le t$ such that $\Reach_F(c_i) = S'$; this is exactly the preprocessed value $W[z][S'][\phi]$.
	  Each of the new $\Oh(w)$ vertices has degree at most $\Oh(w)$, so the number of updates to the compressed instance is $\Oh(w^2)$.
	  For each fresh vertex $v_{S'}$, we iterate over all $\Delta^{\Oh(w)}$ valuations of $S'$ to produce the revenues for each state of $v_{S'}$.
	  This can be done in total time $\Oh(w) \cdot \Delta^{\Oh(w)} = \Delta^{\Oh(w)}$. \hfill\qedhere
	\end{enumerate}
\end{claimproof}

Now the proof of the lemma follows easily from \cref{cl:growing-z} and \cref{cl:add-vtx-compr}. If the update to $\Icur$ is of type $\AddVertex(r_u)$, then $Z$ grows by one isolated vertex and we just pass that update to $I^\star$. Otherwise, the set $A$ does not change and $Z$ grows by $\Oh(w \log n)$ vertices from $F$ that can be easily determined. We process the additions to $Z$ vertex by vertex, applying \cref{cl:add-vtx-compr} on each addition.
Note that the vertices should be added to $Z$ in the order of the increasing distance from the root of $F$, so as to ensure that $Z \setminus A$ remains a~prefix of $F$ at all times.
Finally, after all the required vertices are included in $Z$, we relay the queried update of $\Icur$ to $I^\star$. Since the vertices involved in the update (the vertex whose revenue is changed or both of the endpoints of an~updated edge) are now in $Z$, this update can be passed verbatim to $I^\star$.
It is now easy to verify that all the stated invariants are preserved by the update.

We invoke \cref{cl:add-vtx-compr} at most $\Oh(w \log n)$ times, so the total number of updates performed on $I^\star$ is $(w \log n) \cdot w^{\Oh(1)} = w^{\Oh(1)} \log n$.
Also, the total update time is $\Delta^{\Oh(w)} \log n$.
This satisfies the required property \ref{item:compressreq-size} of the described data structure and concludes the proof.
\end{proof}

\subsection{Full algorithm}
We now describe the implementation of the data structure in detail.
%
Recall that our aim is to maintain an~$n$-vertex dynamic instance $I^{\mathrm{main}} = (G^{\mathrm{main}}, D^{\mathrm{main}}, \rev^{\mathrm{main}}, C^{\mathrm{main}})$ of \CSPs in a~data structure, updated by $\AddEdge$, $\RemoveEdge$, and $\UpdateRevenue$, that can be queried for an~approximate optimum revenue in the current snapshot of the instance: for a~parameter $\eps > 0$ fixed at the initialization, the data structure should, when queried, return a~nonnegative real $p$ such that $(1 - \eps)\OPT \le p \le \OPT$.
The initialization of the data structure should take time $f(\eps) \cdot n^{1 + o(1)}$ and each update should take amortized time $f(\eps) \cdot n^{o(1)}$, for some function $f$ that is doubly-exponential in $\Oh(1/\eps^2)$.

We fix an~integer $L \in \N$, whose value will be determined later, and set $k \coloneqq \left\lceil L/\eps \right\rceil$.
We construct a~recursive, $L$-level data structure. That is, we will maintain a~collection of auxiliary data structures, each maintaining an~instance of \CSPs and an~approximate optimum revenue to the maintained instance. Each auxiliary data structure will be assigned to one of the levels $1, \dots, L$.
At level $L$, we have a~single auxiliary data structure $\D^{\mathrm{main}}$ maintaining $I^{\mathrm{main}}$. Next, consider an~auxiliary data structure $\D$ at level $q \in \{1, \dots, L\}$, and assume $\D$ maintains an~instance $I = (G, D, \rev, C)$ of \CSPs.
If $q \ge 2$, then $\D$ maintains a~collection of $k$ data structures $\D_0, \dots, \D_{k-1}$ at level $q - 1$, called \emph{children} of $\D$, with each child maintaining an~instance derived from $I$.
If $q \le L - 1$, then $\D$ is maintained by exactly one data structure at level $q + 1$, called the \emph{parent} of $\D$.
Note that this way, the entire collection of auxiliary data structures forms a~rooted tree of height $L$ and branching $k$, where the root is $\D^{\mathrm{main}}$ and the leaves are the data structures at level $1$. In particular, the total number of maintained data structures is $\Oh(k^L)$.

Moreover, each auxiliary data structure $\D$ at level $q$ preserves the following invariants:

\begin{enumerate}[label=(I\arabic*)]
  \item \label{item:indset-struct-size} $G \in \Cc$ and $|V(G)| \leq n^{q / L}$;
  \item \label{item:indset-struct-domain} for each $v \in V(G)$, we have $|D_v| \leq g(q)$ for some function $g$ to be specified in \cref{ssec:indset-parameters};
  \item \label{item:indset-struct-sol} $\D$ maintains a~nonnegative real $p$ satisfying $(1 - \eps \cdot \frac{q}{L})\OPT \le p \le \OPT$, where $\OPT$ is the maximum revenue of a~solution to $I$.
\end{enumerate}

Note that \ref{item:indset-struct-size} and \ref{item:indset-struct-domain} are satisfied by $\D^{\mathrm{main}}$ with $g(L) = 2$. By \ref{item:indset-struct-sol}, upon query $\QueryMWIS()$, the data structure can just return the real $p$ stored in $\D^{\mathrm{main}}$.

Now, we explain the implementation of each data structure.
First, let $\D$ be a~data structure at level $1$ maintaining an~instance $I$ of \CSPs.
On each update to $I$, we update $I$ by brute force: we recompute the approximate solution from scratch using \cref{lem:csp-baker} with $\eps'$ supplied to it equal to $\frac{\eps}{L}$.
Thus, processing each update to $I$ takes time $|V(I)| \cdot g(1)^{\Oh(L/\varepsilon)} \leq n^{1/L} \cdot g(1)^{\Oh(L/\varepsilon)}$.
Thus each level-$1$ data structure satisfies invariant \ref{item:indset-struct-sol}.

From that point on, let us fix some $2 \le q \le L$ and describe the implementation of a~data structure~$\D$ of level $q$, which maintains a \CSP instance $I = (G, D, \rev, C)$, assuming invariants \ref{item:indset-struct-size} and \ref{item:indset-struct-domain}.
Let $\Icur = (\Gcur, D^{\mathrm{cur}}, \rev^{\mathrm{cur}}, C^{\mathrm{cur}})$ denote the current snapshot of $I$.
The lifetime of $\D$ is partitioned into \emph{epochs}: sequences of $\tau_q$ updates to $I$, with $\tau_q$ to be specified later.
The first epoch begins when $\D$ is initialized; and a~new epoch begins each time $\D$ processes $\tau_q$ updates to $I$.
At the start of each epoch, we let $\Iold \coloneqq \Icur$ and we apply the Baker's scheme to $\Iold$.
That is, let $\Iold = (\Gold, D^{\mathrm{old}}, \rev^{\mathrm{old}}, C^{\mathrm{old}})$.
We produce a~partitioning of $V(\Iold)$ into layers $V_0, \dots, V_{k - 1}$ as follows.
Assume that $\Gold$ is connected; otherwise apply the scheme to each connected component of $\Gold$, and let the $i$th layer $V_i$ be the union of the $i$th layers for each connected component of $\Gold$.
Now choose an arbitrary vertex $s$ as a~root, run the breadth-first search (BFS) on $\Gold$ from $s$, partition the graph into the BFS layers, and group them based on modulo $k$, getting sets $V_i = \{v : \mathsf{dist}_{\Gold}(s, v) \equiv i \mod k\}$ for $i=0, 1, \ldots, k-1$.

Now, given sets $V_0, \dots, V_{k-1}$, we define $k$ dynamic instances of \CSPs: $\Icur_i = \Icur \setminus V_i$, called \emph{universes}.
At the start of the epoch, $\Icur_i = I^{\mathrm{old}} \setminus V_i$.
Note that by \cref{lem:csp-partitioning}, we have that $\tw(\Gold_i) \leq \Oh(k)$.
Thus, for each $0 \le i \le k - 1$, we can instantiate a~data structure of \cref{lem:indset-dynamic-compression} on each $\Icur_i$, with $w = \Oh(k) = \Oh(L / \eps)$ and $\Delta = g(q)$.
Since $|V(\Iold)| \leq n^{q/L}$, the initialization of each structure takes time $g(q)^{\Oh(L/\eps)} \cdot n^{q/L} \log n$.
After the initialization, the $i$th data structure maintains a~compressed instance $I^\star_i$ of \CSPs with the same revenue of the optimum solution as $\Icur_i$, each initially containing one vertex.
Therefore, to finish the initialization, we recursively spawn a~collection of $k$ children auxiliary data structures $\D_0, \dots, \D_{k-1}$ at level $q-1$.
Each $\D_i$ stores the compressed instance $I^\star_i$ and maintains a~$(1 - \varepsilon \frac{q-1}{L})$-approximate optimum revenue to $I^\star_i$ (which is also the $(1 - \varepsilon \frac{q-1}{L})$-approximate optimum revenue to $\Icur_i$).
Note that each of the children data structures recursively spawn $k$ additional children data structures at level $q-2$, etc., until constructing $k^{q-1}$ data structures at level $1$ in total.
Each of these data structures is initialized with a one-vertex \CSPs instance.

Now, assuming we can maintain a~good approximation $p_i$ of the maximum-revenue solution to each $\Icur_i$, we can also maintain such a~fairly good approximation to $I$ by simply keeping the maximum $p_i$:

\begin{lemma}
	\label{lem:chaining-approximations}
	Let $\OPT_i$ be the optimum revenue of a solution in the instance $\Icur_i$, and let $p_i$ be such that $(1 - \eps \frac{q-1}{L})\OPT_i \le p_i \le \OPT_i$. Let $p = \max(p_0, \ldots, p_{k-1})$.
	Then $(1- \eps \frac{q}{L})\OPT \le p \le \OPT$, where $\OPT$ is the revenue of the optimum solution to $\Icur$.
\end{lemma}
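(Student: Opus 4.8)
The plan is to mirror the argument already carried out inside the proof of \cref{lem:csp-baker}, but phrased in terms of the approximation guarantee on the subinstances $\Icur_i$ instead of exact optima. The key structural fact to reuse is that $V_0,\dots,V_{k-1}$ is a partition of $V(\Gcur)$ and that each $\Icur_i = \Icur \setminus V_i$, so a solution to $\Icur$ restricts to a solution of $\Icur_i$ on the vertices outside $V_i$, and conversely any solution to $\Icur_i$ lifts (by placing $0$ on $V_i$) to a solution to $\Icur$ of the same revenue. The latter immediately gives $p_i \le \OPT_i \le \OPT$ for every $i$, hence $p = \max_i p_i \le \OPT$; this is the easy direction.

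For the lower bound, first I would fix an optimum solution $\phi$ to $\Icur$, so $\rev(\phi) = \OPT$. Since $V_0,\dots,V_{k-1}$ partition $V(\Gcur)$, we have $\sum_{i=0}^{k-1} \rev(\phi|_{V_i}) = \rev(\phi) = \OPT$, so by averaging there is an index $i^\star$ with $\rev(\phi|_{V_{i^\star}}) \le \OPT/k$. Now $\phi|_{V(\Gcur)\setminus V_{i^\star}}$ is a valid solution to $\Icur_{i^\star}$, so $\OPT_{i^\star} \ge \rev(\phi|_{V(\Gcur)\setminus V_{i^\star}}) = \OPT - \rev(\phi|_{V_{i^\star}}) \ge \OPT - \OPT/k = (1 - 1/k)\OPT$. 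Combining with the hypothesis $p_{i^\star} \ge (1 - \eps\frac{q-1}{L})\OPT_{i^\star}$, I would conclude
\[
  p \ge p_{i^\star} \ge \left(1 - \eps\tfrac{q-1}{L}\right)\left(1 - \tfrac{1}{k}\right)\OPT.
\]
Then it remains to check that $\left(1 - \eps\frac{q-1}{L}\right)\left(1 - \frac{1}{k}\right) \ge 1 - \eps\frac{q}{L}$. Expanding the product, the left side is $1 - \eps\frac{q-1}{L} - \frac{1}{k} + \eps\frac{q-1}{Lk} \ge 1 - \eps\frac{q-1}{L} - \frac{1}{k}$, so it suffices that $\frac{1}{k} \le \frac{\eps}{L}$, i.e. $k \ge L/\eps$; this holds since $k = \lceil L/\eps\rceil$. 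Hence $p \ge (1 - \eps\frac{q}{L})\OPT$, as required.

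There is essentially no hard step here — the whole lemma is a bookkeeping exercise on the telescoping of approximation ratios across one level of recursion, and the only thing that needs genuine care is verifying the inequality between $(1-\eps\frac{q-1}{L})(1-1/k)$ and $1-\eps\frac{q}{L}$ using the chosen value $k = \lceil L/\eps\rceil$; I would make sure to discard only the small positive cross term $\eps\frac{q-1}{Lk}$ so that the estimate is clearly valid. A minor point worth stating explicitly is why $\phi|_{V(\Gcur)\setminus V_{i^\star}}$ is a feasible solution to $\Icur_{i^\star}$: every constraint of $\Icur_{i^\star}$ is a constraint of $\Icur$ between two vertices both surviving in $V(\Gcur)\setminus V_{i^\star}$, and $\phi$ satisfied all constraints of $\Icur$. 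With these remarks the proof is complete.
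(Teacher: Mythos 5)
Your proof follows the same route as the paper's: the easy direction from lifting solutions of $\Icur_i$ by placing zeros on $V_i$, and the averaging argument over the layers to find an $i^\star$ with small restricted revenue, followed by the numerical check that $\bigl(1-\eps\tfrac{q-1}{L}\bigr)\bigl(1-\tfrac1k\bigr)\ge 1-\eps\tfrac{q}{L}$ using $k\ge L/\eps$. The argument is correct in substance. One imprecision worth fixing: you assert that $V_0,\dots,V_{k-1}$ partition $V(\Gcur)$ and hence $\sum_i \rev(\phi|_{V_i})=\OPT$, but this is false in the dynamic setting --- the sets $V_i$ are computed from the snapshot $\Iold$ taken at the start of the epoch, so they partition $V(\Gold)$, while $\Icur$ may have gained vertices since (via $\AddVertex$ updates) that lie in no $V_i$; the paper handles this by only claiming the $V_i$ are pairwise disjoint subsets of $V(\Gcur)$ and writing $\sum_i \rev(\phi|_{V_i})\le \rev(\phi)$. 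The inequality is all the averaging step needs, so your proof survives the correction, but as written the equality claim is wrong.
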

\begin{proof}
%
	Recall that $\Gold$ is the Gaifman graph of $\Iold$ and $\Gcur$ is the Gaifman graph of $\Icur$.
	Also, let $A$ be the set of vertices that were added to $\Icur$ since the beginning of the current epoch. Then, we have that $V(\Gcur) = V(\Gold) \cup A$. Let $\Gcur_i$ be the Gaifman graph of the universe $\Icur_i$. Then $V_i \subseteq V(\Gcur)$ and $V(\Gcur_i) = V(\Gcur) \setminus V_i$.
	Let $\phi, \phi_0, \ldots, \phi_{k-1}$ be optimum solutions to the instances $\Icur, \Icur_0, \ldots, \Icur_{k-1}$. As $V_0, \ldots, V_{k-1}$ are disjoint, we have that $\rev(\phi|_{V_0}) + \ldots + \rev(\phi|_{V_{k-1}}) \le \rev(\phi)$. Therefore, there exists $0 \le i \le k-1$ such that $\rev(\phi|_{V_i}) \le \frac{\rev(\phi)}{k} \le \rev(\phi) \cdot \frac{\eps}{L}$.

	As $\phi|_{V(\Gcur_i)}$ is a valid solution to $\Icur_i$, we have that $\rev(\phi|_{V(\Gcur_i)}) \le \rev(\phi_i) = \OPT_i$. In turn, we have $(1-\frac{\eps}{L}) \rev(\phi) \le \rev(\phi) - \rev(\phi|_{V_i}) = \rev(\phi|_{V(\Gcur) \setminus V_i}) = \rev(\phi|_{V(\Gcur_i)}) \le \OPT_i$. By multiplying both sides by $1-\eps \frac{q-1}{L}$, we get $(1-\frac{\eps}{L})(1-\eps \frac{q-1}{L}) \OPT \le (1-\eps \frac{q-1}{L}) \OPT_i \le p_i \le p$, which implies that $p \ge \OPT (1-\frac{\eps}{L})(1-\eps \frac{q-1}{L}) \ge \OPT (1-\eps \frac{q}{L})$, as required. On the other hand, as any $\phi_j$ can be extended to the solution of the full instance by putting zeros on $V_j$, we obviously get $p_j \le \OPT_j \le \OPT$ for each $j=0, \ldots, k-1$, which in turn implies that $p \le \OPT$, as desired.
\end{proof}

Each update to $\Icur$ is processed by $\D$ as follows: for each $i \in \{0, \dots, k-1\}$, if the update involves a~vertex of $V_i$, then $\Icur_i$ remains unchanged by the update and no further action is required.
Otherwise, the update is relayed to $\Icur_i$.
The data structure from \cref{lem:indset-dynamic-compression} produces a~sequence of at most $k^{\Oh(1)} \log n$ updates to $I^\star_i$, which are then relayed to $\D_i$.
After all children data structures process the updates, $\D$ recomputes its approximate solution to $\Icur$ by querying each $\D_i$ for the approximation of the maximum revenue of a~solution to $I^\star_i$ and returning the maximum value.
Note that during one epoch, the size of each $I^\star_i$ does not grow above $\tau_q \cdot k^{\Oh(1)} \log n$ (\cref{lem:indset-dynamic-compression}\ref{item:compressreq-size}).
By choosing $\tau_q$ so that this value is significantly less than $n^{(q-1)/L}$, we ensure the satisfaction of invariant \ref{item:indset-struct-size} by the children data structures (the fact that the Gaifman graph of $I^\star_i$ belongs to $\Cc$ follows from \cref{lem:indset-dynamic-compression}\ref{item:compressreq-in-class}).
The invariant \ref{item:indset-struct-domain} is satisfied due to \cref{lem:indset-dynamic-compression}\ref{item:compressreq-domain}, provided we set $g$ so that $g(q-1) \geq g(q)^{\Oh(k)}$. 
Also, since the revenue of an optimum solution to $I^\star_i$ is equal to that of $\Icur_i$ (\cref{lem:indset-dynamic-compression}\ref{item:compressreq-revenue}), each child data structure $\D_i$ maintains a~nonnegative real $p_i$ satisfying the preconditions of \cref{lem:chaining-approximations}.
Therefore, the value $p \coloneqq \max(p_0, \dots, p_{k-1})$ is an~$(1 - \eps \frac{q}{L})$-approximation of the optimum revenue in $\Icur$, which proves the satisfaction of invariant \ref{item:indset-struct-sol} by~$\D$.

Finally, when an~epoch in $\D$ ends, $\D$ is reinitialized with $\Iold \coloneqq \Icur$ and a~new epoch starts.
We remark that this causes the destruction and the recursive reinitialization of the children data structures~$\D_i$.

\subsection{Setting the parameters and time complexity analysis}
\label{ssec:indset-parameters}
In this section, we are going to discuss the parameters whose specification was postponed: $L$, function $g$, and epoch lengths. Finally, we analyze the amortized time complexity of the updates.

At first, we are going to bound the domain sizes.


\begin{lemma} \label{lem:g-bound}
    One can set function $g$ so that invariant \ref{item:indset-struct-domain} is satisfied for all the constructed data structures and
	$g(q)\in  2^{k^{\Oh(L)}}$ for all $q\in \{1,\ldots,L\}$.
\end{lemma}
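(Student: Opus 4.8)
The plan is to define $g$ by downward recursion on $q$, starting from the base value $g(L) = 2$ forced by the {\sc Max Weight Independent Set} instance. The constraint we must respect is the one extracted in the full-algorithm section: whenever a level-$q$ data structure (for $q \ge 2$) spawns its children at level $q-1$, the domain sizes of the compressed instances $I^\star_i$ are bounded by $\Delta^{\Oh(w)}$ with $\Delta = g(q)$ and $w = \Oh(k)$ (invoking \cref{lem:indset-dynamic-compression}\ref{item:compressreq-domain}). Hence invariant \ref{item:indset-struct-domain} propagates downward provided $g(q-1) \ge g(q)^{c k}$ for a suitable absolute constant $c$ (absorbing the $\Oh(w) = \Oh(k)$ into $\Oh(k)$ and folding in the constant from the $\Oh(\cdot)$). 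So I would simply \emph{define} $g(q-1) \coloneqq g(q)^{c k}$, with $g(L) \coloneqq 2$.

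\textbf{Solving the recurrence.} Unfolding this from $q = L$ down to an arbitrary $q$ gives
\[
  g(q) = 2^{(ck)^{\,L-q}}.
\]
Since $L - q \le L$, we have $(ck)^{L-q} \le (ck)^L = c^L k^L \le k^{\Oh(L)}$ (here $c^L$ is absorbed: $c^L k^L = (ck)^L$ and, treating $c$ as an absolute constant, $c^L \le k^{\Oh(L)}$ whenever $k \ge 2$, which holds since $k = \lceil L/\eps\rceil \ge 2$). Therefore $g(q) \le 2^{k^{\Oh(L)}}$ for every $q \in \{1,\dots,L\}$, which is exactly the claimed bound. The only thing left is to confirm that invariant \ref{item:indset-struct-domain} does hold with this $g$: for $q = L$ it holds because $\D^{\mathrm{main}}$ has binary domains, and the inductive step from $q$ to $q-1$ is precisely the chain $|D^\star_v| \le g(q)^{\Oh(k)} \le g(q)^{ck} = g(q-1)$ established above, together with the fact that the children of a level-$q$ structure are exactly level-$(q-1)$ structures maintaining these compressed instances.

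\textbf{Where the care is needed.} There is no real obstacle here — it is a bookkeeping lemma — but the one point that deserves attention is making the constants line up honestly: the $\Oh(w)$ in \cref{lem:indset-dynamic-compression}\ref{item:compressreq-domain} hides some constant, say $c_1$, and $w \le c_2 k$ by \cref{lem:csp-partitioning} together with the choice $w = \Oh(k)$ made when instantiating that lemma; so the exponent is at most $c_1 c_2 k$, and I would take $c = c_1 c_2$ in the definition of the recurrence so that the inequality $g(q-1) \ge g(q)^{\Oh(w)}$ required by the full-algorithm argument is genuinely an equality-or-better. One should also note this defines $g$ as a function of the two already-fixed quantities $L$ and $k$ only (not of $n$), so $g$ is well-defined independently of the input size, and the bound $g(q) \in 2^{k^{\Oh(L)}}$ is uniform over all $q$ as required. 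With $L$ and $k$ still to be pinned down later, this $2^{k^{\Oh(L)}}$ bound is the form in which it will be fed into the subsequent parameter-balancing step.
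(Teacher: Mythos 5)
Your proof is correct and follows essentially the same route as the paper: define $g$ by the downward recursion $g(L)=2$, $g(q-1)=g(q)^{Ck}$ justified by the domain bound $\Delta^{\Oh(w)}$ with $w=\Oh(k)$ from \cref{lem:indset-dynamic-compression}\ref{item:compressreq-domain}, and unfold it to $g(q)=2^{(Ck)^{L-q}}\le 2^{k^{\Oh(L)}}$. The extra care you take with the constants and with $g$ depending only on $k$ and $L$ matches the paper's (more terse) argument.
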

\begin{proof}
	Recall that invariant \ref{item:indset-struct-domain} states that the domain sizes in instances stored by data structures at level $q$ are bounded by $g(q)$. For $q=L$ it suffices to set $g(L) = 2$. Next, for an instance within some data structure at level $q>1$ we create some number of universes, and instances created for those universes have the same domains as the original one, say of size at most $\Delta$. However, when compressing an instance and recursing to the deeper level, the domain sizes increase to at most $\Delta^{Ck}$ for some constant $C$. Indeed, recall that for a compressed vertex $S$ we had $D'_S = \{0\} \cup \prod_{v \in S} D_v$, and due to the structure of $F$ and $Z$ we know that such $S$ is always of size $\Oh(k)$. It follows that we may set $g(q-1)=g(q)^{Ck}$.
	A straightforward induction now shows that $g(q) = 2^{(Ck)^{L - q}}$, so in particular $g(q) \le 2^{k^{\Oh(L)}}$ for all~$q\in \{1,\ldots,L\}$.
\end{proof}

We know that each instance at level $q$ spawns $k$ universes and each universe spawns one instance at level $q-1$; unless $q=1$, in which case the instance in question is a leaf. Hence, there are at most $k^L$ instances at level $1$. We can proceed further with this reasoning to show the following.

\begin{lemma} \label{lem:updates-count}
	Each update to $\D^{\mathrm{main}}$ causes at most $(k \log n)^{\Oh(L)}$ updates throughout all data structures.
\end{lemma}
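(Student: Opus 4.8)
The plan is to track how a single update to $\D^{\mathrm{main}}$ propagates down the tree of auxiliary data structures, level by level, and bound the multiplicative blow-up incurred at each level. First I would fix notation: let $U_q$ denote an upper bound on the number of updates that any single data structure at level $q$ can receive over its lifetime \emph{per update relayed to its parent}; more precisely, I want to bound the number of updates caused at level $q-1$ (summed over the $k$ children) by a single update received at level $q$. By the construction in \cref{ssec:indset-dynamic-compressions}, when a data structure $\D$ at level $q \ge 2$ receives an update to $\Icur$, it relays that update to each of the $k$ universes $\Icur_i$; the data structure of \cref{lem:indset-dynamic-compression} then turns each such relayed update into at most $w^{\Oh(1)} \log n = k^{\Oh(1)} \log n$ updates to the compressed instance $I^\star_i$ (here using $w = \Oh(k)$), which are passed to the child $\D_i$. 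So one update at level $q$ generates at most $k \cdot k^{\Oh(1)} \log n = k^{\Oh(1)} \log n$ updates distributed among the level-$(q-1)$ children.

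Next I would handle the contribution of epoch resets. When an epoch of $\D$ ends (every $\tau_q$ updates), $\D$ destroys and recursively reinitializes all its children. However — and this is the subtle point to get right — reinitialization is \emph{not} counted as a sequence of updates; the children are rebuilt from scratch with one-vertex instances, and their own (recursively spawned) descendants likewise. So the only updates flowing downward are the genuine relayed updates described above, and the reset cost is accounted separately in the time complexity analysis (it does not inflate the update count). Thus the per-update blow-up factor from level $q$ to level $q-1$ is simply $k^{\Oh(1)} \log n$, uniformly over $q \in \{2, \dots, L\}$, with no additional factor from epochs.

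Composing these bounds over the $L$ levels, a single update to $\D^{\mathrm{main}}$ (at level $L$) generates at most $(k^{\Oh(1)} \log n)^{L-1}$ updates at level $1$, and at each intermediate level $q$ the number of updates is at most $(k^{\Oh(1)} \log n)^{L-q}$; summing the geometric-like series over all $L$ levels and over all $\Oh(k^L)$ data structures still gives a total of $(k^{\Oh(1)} \log n)^{\Oh(L)} = (k \log n)^{\Oh(L)}$ updates throughout the entire collection. (The $k^{\Oh(L)}$ factor counting the data structures is absorbed into the exponent.) The main obstacle I anticipate is bookkeeping the distinction between \emph{relayed updates} and \emph{reinitializations} cleanly — one must be careful that the recursive reinitialization triggered by epoch boundaries is genuinely a rebuild and not a cascade of updates, since otherwise the recurrence would pick up an extra multiplicative factor of $\tau_q$ per level and the bound would blow up. Once that is pinned down, the rest is a routine telescoping of the per-level factor $k^{\Oh(1)} \log n$ across $L$ levels.
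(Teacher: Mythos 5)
Your proposal is correct and follows essentially the same route as the paper: the per-update blow-up of $k\cdot(k\log n)^{\Oh(1)}=(k\log n)^{\Oh(1)}$ from level $q$ to level $q-1$ via \cref{lem:indset-dynamic-compression}\ref{item:compressreq-size}, composed over the $L$ levels. Your explicit remark that epoch-triggered reinitializations are rebuilds rather than cascades of updates (and hence do not inflate the count) is a correct reading of the construction that the paper leaves implicit, deferring that cost to the separate reinitialization analysis.
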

\begin{proof}
Recall that by \cref{lem:indset-dynamic-compression}\ref{item:compressreq-size}, each update to a universe at level $q>1$ causes at most $(k \log n)^{\Oh(1)}$ updates propagated to instances at level $q-1$. As each instance at level $q>1$ has $k$ universes associated with it, each update to an instance at level $q$ causes at most $k \cdot (k \log n)^{\Oh(1)} = (k \log n)^{\Oh(1)}$ updates propagated to level $q-1$. Hence, there are at most $(k \log n)^{\Oh(L)}$ updates throughout all data structures per update to $\D^{\mathrm{main}}$.
\end{proof}


\begin{lemma} \label{lem:time1}
	The total time of updating all data structures for a single update to $\D^{\mathrm{main}}$, excluding the time of all reinitializations, can be bounded by $n^{\frac1L} \cdot 2^{(\frac{L}{\eps})^{\Oh(L)} + \Oh(L \log \log n)}$.
\end{lemma}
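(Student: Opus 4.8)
The plan is to bound the work done at a single update to $\D^{\mathrm{main}}$ by tracing the update as it propagates down the $L$-level tree of data structures, counting (i) the number of updates generated at each level and (ii) the cost of processing one update at each level, while excluding the cost of epoch reinitializations (those are charged separately later, amortized). The key quantities are already in hand: \Cref{lem:updates-count} tells us there are at most $(k\log n)^{\Oh(L)}$ updates in total across all data structures per update to $\D^{\mathrm{main}}$, and \Cref{lem:g-bound} tells us every domain appearing anywhere has size at most $g(1)\le 2^{k^{\Oh(L)}}$.

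\textbf{First I would} localize the cost of a single update at a single data structure. For a level-$q$ data structure with $q\ge 2$, processing one update means: deciding for each of the $k$ universes whether the update touches the removed layer $V_i$, and if not, invoking the data structure of \Cref{lem:indset-dynamic-compression} on $\Icur_i$. That invocation costs $\Delta^{\Oh(w)}\log n$ with $w=\Oh(k)$ and $\Delta=g(q)\le g(1)$, i.e.\ at most $g(1)^{\Oh(k)}\log n=2^{k^{\Oh(L)}}\log n$, and it spawns at most $k^{\Oh(1)}\log n$ updates that are relayed to the child; recomputing the approximation afterwards is just taking a maximum over $k$ stored values, which is cheap. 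For a level-$1$ data structure, processing one update means recomputing from scratch with \Cref{lem:csp-baker} using $\eps'=\eps/L$, which costs $|V(I)|\cdot g(1)^{\Oh(L/\eps)}\le n^{1/L}\cdot 2^{k^{\Oh(L)}\cdot\Oh(L/\eps)}=n^{1/L}\cdot 2^{(L/\eps)^{\Oh(L)}}$, using $k=\lceil L/\eps\rceil$ and \Cref{lem:g-bound}. This is the only place where a genuine $n^{1/L}$ factor enters; at all higher levels the per-update cost is subpolynomial ($2^{k^{\Oh(L)}}\log n$), because the instances there never exceed size $n^{(q-1)/L}$ yet each individual update only does $\Delta^{\Oh(w)}\log n$ work irrespective of instance size.

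\textbf{Then I would} aggregate. By \Cref{lem:updates-count} there are at most $(k\log n)^{\Oh(L)}$ updates in total, and in the worst case all of them could in principle be the "expensive" kind; but the expensive per-update cost is the level-$1$ cost $n^{1/L}\cdot 2^{(L/\eps)^{\Oh(L)}}$, while every other per-update cost is bounded by $2^{k^{\Oh(L)}}\log n\le 2^{(L/\eps)^{\Oh(L)}}\cdot\log n$. Multiplying the total update count $(k\log n)^{\Oh(L)}=2^{\Oh(L\log k)}\cdot 2^{\Oh(L\log\log n)}$ by the largest per-update cost $n^{1/L}\cdot 2^{(L/\eps)^{\Oh(L)}}$ gives
\[
  n^{1/L}\cdot 2^{(L/\eps)^{\Oh(L)}}\cdot 2^{\Oh(L\log\log n)},
\]
since $\log n\le 2^{(L/\eps)^{\Oh(L)}}$ absorbs into the first factor and $2^{\Oh(L\log k)}=2^{\Oh(L\log(L/\eps))}$ absorbs into $2^{(L/\eps)^{\Oh(L)}}$. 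This is exactly the claimed bound $n^{\frac1L}\cdot 2^{(\frac{L}{\eps})^{\Oh(L)}+\Oh(L\log\log n)}$.

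\textbf{The part that needs the most care} is making sure the combinatorial bookkeeping is tight rather than merely finite: one must verify that the per-update cost of the compression data structure (\Cref{lem:indset-dynamic-compression}\ref{item:compressreq-size}) is genuinely independent of the current size of $I^\star_i$ — it is, being $\Delta^{\Oh(w)}\log n$ — so that only the $k^{\Oh(1)}\log n$ propagated updates, and not the growing instances, drive the count, and that the $n^{1/L}$ factor truly appears only at the leaves. One also has to confirm that the exponent $(L/\eps)^{\Oh(L)}$ dominates all the polylogarithmic and $2^{\Oh(L\log k)}$ contributions when multiplied together, which is routine once the substitution $k=\Theta(L/\eps)$ and the domain bound from \Cref{lem:g-bound} are plugged in; the $\Oh(L\log\log n)$ term is the unavoidable residue of the $(\log n)^{\Oh(L)}$ update-count blow-up and cannot be absorbed into the first factor, which is why it is stated separately.
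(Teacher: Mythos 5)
Your proof is essentially the same as the paper's: both decompose the cost into the dominant leaf-level Baker recomputations ($n^{1/L}\cdot 2^{(L/\eps)^{\Oh(L)}}$ each) and subsume the intermediate-level compression costs, then multiply by the $(k\log n)^{\Oh(L)}$ bound from \cref{lem:updates-count}. One minor inaccuracy: you justify absorbing the extra $\log n$ factor from the intermediate-level per-update cost by claiming $\log n \le 2^{(L/\eps)^{\Oh(L)}}$, which is false in general (the right-hand side can be a constant while $\log n$ grows); the absorption is nonetheless valid, but the correct reason is that $\log n = 2^{\log\log n} \le 2^{\Oh(L\log\log n)}$, so it is swallowed by the second term of the exponent rather than the first.
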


\begin{proof}
Let us focus first on the time required to recompute the solutions to the instances at level $1$ as explained in \cref{lem:csp-baker}, where $\eps', n$ and $\Delta$ from the statement of this lemma are equal to $\frac{\eps}{L}, n^{\frac{1}{L}}$ and $g(1) \in 2^{k^{\Oh(L)}}$, respectively.
Each such recomputation uses time $n^{\frac1L} \cdot \left(2^{k^{\Oh(L)}}\right)^{\Oh\left(\ceil{\frac{L}{\eps}}\right)} = n^{\frac1L} \cdot 2^{k^{\Oh(L)} \cdot \ceil{\frac{L}{\eps}}}$. As there are at most $(k \log n)^{\Oh(L)}$ updates by \cref{lem:updates-count}, the total time used for all such recomputations can be bounded by $(k \log n)^{\Oh(L)} \cdot n^{\frac1L} \cdot 2^{k^{\Oh(L)} \cdot \ceil{\frac{L}{\eps}}}$.

All overheads coming from processing a single update to some $\D$ like indexing the tables or navigating in $F$ are of the form $(k \log n)^{\Oh(1)}$. Hence, based on that and \cref{lem:updates-count}, excluding the time required for all hypothetical reinitializations, the total time needed for updating all the necessary information is $(k \log n)^{\Oh(L)} + (k \log n)^{\Oh(L)} \cdot n^{\frac1L} \cdot 2^{k^{\Oh(L)} \cdot \ceil{\frac{L}{\eps}}}$. Recall that we actually set $k = \ceil{\frac{L}{\eps}}$, so we can do the following simplifications: $(k \log n)^{\Oh(L)} = (\ceil{\frac{L}{\eps}} \log n)^{\Oh(L)} = 2^{\Oh(L(\log \log n + \log \frac{L}{\eps}))}$, hence $(k \log n)^{\Oh(L)} \cdot n^{\frac1L} \cdot 2^{k^{\Oh(L)} \cdot \ceil{\frac{L}{\eps}}} = 2^{\Oh(L(\log \log n + \log \frac{L}{\eps}))} \cdot n^{\frac1L} \cdot 2^{(\frac{L}{\eps})^{\Oh(L)}} = n^{\frac1L} \cdot 2^{(\frac{L}{\eps})^{\Oh(L)} + \Oh(L \log \log n)}$.
\end{proof}

As the next step, we are going to set epoch lengths and bound the amortized time of all reinitializations per single update to $\D^{\mathrm{main}}$.
Let $\tau_q$ denote the epoch length for data structures on level $q$ for some $2 \le q \le L$. Recall that the epoch length for a data structure is measured in the number of updates to this particular instance (as opposed to $\D^{\mathrm{main}}$). Also recall \cref{item:compressreq-size} from \cref{lem:indset-dynamic-compression}, which asserts that each update to $\D$ at level $q$ generates $(k \log n)^{\Oh(1)}$ updates to its children structures. Let us be more specific and let $c$ be such a constant that this number is at most $(k \log n)^c$. Then, we set $\tau_q = n^{\frac{q - 1}{L}} / (k \log n)^c$. For such a choice, it is indeed the case that the children data structures are passed at most $n^{\frac{q-1}{L}}$ updates before they are rebuilt, hence we maintain invariant \ref{item:indset-struct-size}: instances at level $q-1$ are of size at most $n^{\frac{q-1}{L}}$ at all times.


Thus, we can bound the amortized time complexity of initializations and reinitializations.

\begin{lemma} \label{lem:time2}
	The amortized time of all initializations and reinitializations per single update to $\D^{\mathrm{main}}$ is $n^{\frac1L} \cdot 2^{(\frac{L}{\eps})^{\Oh(L)} + \Oh(L \log \log n)}$.
\end{lemma}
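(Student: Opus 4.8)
The plan is to bound, for each level $q \in \{2,\dots,L\}$, the cost of a single (re)initialization of a level-$q$ data structure, then amortize that cost over the $\tau_q$ updates that one epoch of a level-$q$ structure comprises, and finally sum the resulting per-update amortized costs over all levels and over all level-$q$ structures that can be affected by one update to $\D^{\mathrm{main}}$. The key quantities are already in place: by \cref{lem:updates-count} one update to $\D^{\mathrm{main}}$ triggers at most $(k\log n)^{\Oh(L)}$ updates across all structures, and a reinitialization of a level-$q$ structure $\D$ consists of (i) running the Baker partition on $\Iold$, (ii) instantiating $k$ copies of the data structure of \cref{lem:indset-dynamic-compression} with $w=\Oh(k)$ and $\Delta = g(q) \in 2^{k^{\Oh(L)}}$, each costing $g(q)^{\Oh(k)}\cdot n^{q/L}\log n = 2^{(L/\eps)^{\Oh(L)}}\cdot n^{q/L}\log n$ since $k=\ceil{L/\eps}$, and (iii) recursively reinitializing the $k$ child structures at level $q-1$.

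First I would argue that the cost of a single reinitialization of a level-$q$ structure, \emph{counting the recursive reinitializations of all its descendants}, is dominated by the top-level work: there are $\Oh(k^{L-q})$ descendants, each requiring at most $2^{(L/\eps)^{\Oh(L)}}\cdot n^{q/L}\log n$ time for its own BFS-and-compression setup (domain sizes and vertex counts only shrink going down), so the whole subtree reinitialization costs at most $k^{\Oh(L)}\cdot 2^{(L/\eps)^{\Oh(L)}}\cdot n^{q/L}\log n = n^{q/L}\cdot 2^{(L/\eps)^{\Oh(L)} + \Oh(L\log\log n)}$, absorbing $k^{\Oh(L)}$ and $\log n$ into the exponent exactly as in the proof of \cref{lem:time1}. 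Next, such a reinitialization happens once every $\tau_q = n^{(q-1)/L}/(k\log n)^c$ updates \emph{to that particular structure}, so its amortized cost per update to the structure is at most
\[
  \frac{n^{q/L}\cdot 2^{(L/\eps)^{\Oh(L)} + \Oh(L\log\log n)}}{n^{(q-1)/L}/(k\log n)^c}
  = n^{1/L}\cdot (k\log n)^{c}\cdot 2^{(L/\eps)^{\Oh(L)} + \Oh(L\log\log n)}
  = n^{1/L}\cdot 2^{(L/\eps)^{\Oh(L)} + \Oh(L\log\log n)},
\]
where the final simplification uses $(k\log n)^c = 2^{\Oh(\log\log n + \log(L/\eps))}$ as in \cref{lem:time1}. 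Finally, a single update to $\D^{\mathrm{main}}$ generates at most $(k\log n)^{\Oh(L)}$ updates in total spread over all levels (\cref{lem:updates-count}), and each of those updates contributes the amortized reinitialization charge computed above at its own level; summing, the amortized reinitialization cost per update to $\D^{\mathrm{main}}$ is at most $(k\log n)^{\Oh(L)}\cdot n^{1/L}\cdot 2^{(L/\eps)^{\Oh(L)} + \Oh(L\log\log n)}$, which again collapses to $n^{1/L}\cdot 2^{(L/\eps)^{\Oh(L)} + \Oh(L\log\log n)}$.

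The main thing requiring care is the accounting of how a reinitialization at level $q$ interacts with the amortization: a level-$q$ reinitialization destroys and rebuilds its entire subtree, so one must be careful not to double-charge the rebuilds of, say, a level-$2$ structure both against its own epoch and against the epoch of its level-$5$ ancestor. The clean way is to charge each reinitialization of a level-$q$ structure (including its full recursive subtree rebuild) solely to the $\tau_q$ updates of that structure's own epoch, as done above; then the per-level amortized bounds are independent and summing over the $\Oh(L)$ levels only costs a factor $L$, harmless inside the exponent. A secondary point to state explicitly is that every epoch of a level-$q$ structure does receive its full quota of $\Theta(\tau_q)$ updates before the reinitialization is charged — except possibly a final incomplete epoch, which only helps — so the amortization is legitimate. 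Combining with \cref{lem:time1} (the non-reinitialization update cost, which is of the same form), the total amortized update time of $\D^{\mathrm{main}}$ is $n^{1/L}\cdot 2^{(L/\eps)^{\Oh(L)} + \Oh(L\log\log n)}$, matching the claimed bound.
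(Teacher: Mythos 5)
Your proposal is correct and follows essentially the same route as the paper: bound a single level-$q$ (re)initialization by $n^{q/L}\log n\cdot 2^{k^{\Oh(L)}}$ (using \cref{lem:indset-dynamic-compression} and \cref{lem:g-bound}), amortize it over the $\tau_q = n^{(q-1)/L}/(k\log n)^c$ updates of that structure's own epoch, and then multiply by the $(k\log n)^{\Oh(L)}$ updates per update to $\D^{\mathrm{main}}$ from \cref{lem:updates-count}; the paper handles the subtree-rebuild accounting by observing that freshly spawned descendants start as single-vertex instances (so their setup is essentially free), which is the same charging scheme you use, just without your generous over-estimate. One small slip: domain sizes \emph{grow} going down the levels ($g(q-1)=g(q)^{Ck}$), not shrink, but this does not affect your bound since you only use the uniform estimate $g(q)\in 2^{k^{\Oh(L)}}$ from \cref{lem:g-bound}.
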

\begin{proof}
We view that a lifespan of a particular data structure $\D$ corresponds to one epoch of the parent data structure (unless it is $\D^{\mathrm{main}}$). If the parent data structure of $\D$ gets rebuilt, we trash $\D$. Let $t$ be the number of started epochs of $\D$ throughout its whole lifetime. We note that when it is initialized, it is of a form $I \{\emptyset\}$ and consists of a single vertex (unless it is the initialization of $\D^{\mathrm{main}}$, which takes $n \cdot 2^{\Oh(k)}$ time). Hence, the first initialization takes constant time. As guaranteed by \cref{lem:indset-dynamic-compression} and \cref{lem:g-bound}, each reinitialization of $\D = (G, D, \rev, C)$ takes $(|V(G)| + |E(G)|) \log |V(G)| \cdot g(q)^{\Oh(k)} = (|V(G)| + |E(G)|) \log |V(G)| \cdot 2^{k^{\Oh(L)}}$ time. Thanks to our invariants, we are guaranteed that $|V(G)| \le n^{\frac{q}{L}}$ and $|E(G)| = \Oh(|V(G)|)$, hence the reinitialization time can be bounded as $n^{\frac{q}{L}} \log n \cdot 2^{k^{\Oh(L)}}$.

Note that if $t$ epochs were started, then the lifetime of $\D$ contained $t-1$ full epochs. The number of reinitializations will also be equal to $t-1$. Hence, if $u$ denotes the number of updates to $\D$ so far, then the time required for all reinitializations of $\D$ can be bounded as $(t-1) \cdot n^{\frac{q}{L}}\log n \cdot 2^{k^{\Oh(L)}} \leq (t-1) \cdot \tau_q \cdot n^{\frac{1}{L}} \cdot (k \log n)^{c+1} \cdot 2^{k^{\Oh(L)}} \le u \cdot n^{\frac{1}{L}} \cdot (\log n)^{\Oh(1)} \cdot 2^{k^{\Oh(L)}}$. Hence, the amortized time required for all reinitializations of $\D$ can be bounded as $n^{\frac{1}{L}} \cdot (\log n)^{\Oh(1)} \cdot 2^{k^{\Oh(L)}}$ per an update to $\D$. 
As each update to $\D^{\mathrm{main}}$ causes $(k \log n)^{\Oh(L)}$ updates to all structures in total and there are at most $k^L$ data structures, the total amortized time required for all reinitializations per a single update to $\D^{\mathrm{main}}$ is $n^{\frac{1}{L}} \cdot (\log n)^{\Oh(L)} \cdot 2^{k^{\Oh(L)}} \cdot k^{\Oh(L)} = n^{\frac1L} \cdot (\log n)^{\Oh(L)} \cdot 2^{(\frac{L}{\eps})^{\Oh(L)}} = n^{\frac1L} \cdot 2^{(\frac{L}{\eps})^{\Oh(L)} + \Oh(L \log \log n)}$.
\end{proof}

With these lemmas, we can present the final complexity analysis:
\begin{lemma}
	\label{lem:time3}
	 $L$ may be set so that the amortized time of an update to $\D^{\mathrm{main}}$ is $n^{\Oh\left( \frac{\log \log \log n}{\log \log n \cdot \eps} \right)} = n^{o\left( \frac{1}{\eps} \right)}$.
\end{lemma}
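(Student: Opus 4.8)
The plan is to combine the running-time bounds established so far and then optimize over $L$. By \cref{lem:time1} the cost of one update to $\D^{\mathrm{main}}$, excluding reinitializations, is at most $n^{1/L}\cdot 2^{(L/\eps)^{\Oh(L)} + \Oh(L\log\log n)}$, and by \cref{lem:time2} the amortized cost of all reinitializations per update to $\D^{\mathrm{main}}$ has the very same form; hence the amortized update time to $\D^{\mathrm{main}}$ is
\[
  n^{1/L}\cdot 2^{(L/\eps)^{\Oh(L)} + \Oh(L\log\log n)}.
\]
Writing $\log$ for $\log_2$, this equals $n$ raised to the exponent $\frac1L + \frac{(L/\eps)^{\Oh(L)}}{\log n} + \Oh\!\big(\frac{L\log\log n}{\log n}\big)$. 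We may assume $\eps\le1$ (for $\eps\ge1$ the query can return $p=0$) and that $n$ exceeds a threshold $N_\eps$ (for $n\le N_\eps$, a naive structure recomputing a solution from scratch on each update via \cref{lem:csp-baker} has $\Oh_\eps(1)$ amortized update time, irrelevant to the asymptotic claim). It then remains to choose an integer $L=L(n)\ge2$ making each of the three summands of the exponent $\Oh\!\big(\frac{\log\log\log n}{\eps\log\log n}\big)$.

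The tension to resolve is that $n^{1/L}$ wants $L$ large, whereas $2^{(L/\eps)^{\Oh(L)}}$ --- which reflects the iterated blow-up of the CSP domain sizes through the $L$ compression levels, cf.\ \cref{lem:g-bound} --- grows so violently that $L$ cannot exceed roughly $\eps\cdot\frac{\log\log n}{\log\log\log n}$. I would therefore set
\[
  L\ \coloneqq\ 1+\left\lceil \delta_0\cdot\eps\cdot\frac{\log\log n}{\log\log\log n}\right\rceil,
\]
where $\delta_0>0$ is a sufficiently small absolute constant, depending only on the $\Oh(\cdot)$-constants in \cref{lem:time1,lem:time2}; for $n\ge N_\eps$ this is an integer $\ge2$. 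With this choice $\log(L/\eps)=\Oh(\log\log\log n)$, hence $L\cdot\log(L/\eps)=\Oh(\delta_0\eps\log\log n)$ and therefore $(L/\eps)^{\Oh(L)}=2^{\Oh(\delta_0\eps\log\log n)}\le(\log n)^{1/2}$ provided $\delta_0$ is small enough (using $\eps\le1$); also $L\log\log n=\Oh\!\big(\eps(\log\log n)^2/\log\log\log n\big)$.

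It is now a matter of plugging these estimates in. The first summand is $\frac1L=\Oh\!\big(\frac{\log\log\log n}{\eps\log\log n}\big)$ by the choice of $L$, and it dominates: the second is at most $(\log n)^{-1/2}$, which is $o\!\big(\frac{\log\log\log n}{\log\log n}\big)$ because $\frac12\log\log n$ far exceeds $\log\log\log n$; and the third is $\Oh\!\big(\frac{\eps(\log\log n)^2}{\log\log\log n\cdot\log n}\big)$, which is $o\!\big(\frac{\log\log\log n}{\eps\log\log n}\big)$ because $\log n$ far exceeds $(\log\log n)^3$. Hence the whole exponent is $\Oh\!\big(\frac{\log\log\log n}{\eps\log\log n}\big)$, so the amortized update time is $n^{\Oh(\log\log\log n/(\eps\log\log n))}$; and since $\frac{\log\log\log n}{\log\log n}\to0$ as $n\to\infty$, the exponent is of the form $\frac{o(1)}{\eps}$, so this equals $n^{o(1/\eps)}$, as claimed.

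The hardest part, such as it is, will simply be to carry out this three-way balancing cleanly --- in particular to verify that both overhead factors $2^{(L/\eps)^{\Oh(L)}}$ and $2^{\Oh(L\log\log n)}$ remain not merely subpolynomial but in fact below $n^{\Oh(\log\log\log n/(\eps\log\log n))}$ for the chosen $L$ --- together with the routine bookkeeping around integrality (of $L$, and of the epoch lengths $\tau_q=\floor{n^{(q-1)/L}/(k\log n)^c}$, which are at least $1$ for $q\ge2$ since $n^{1/L}$ is super-polylogarithmic in $n$ for this $L$) and around bounded $n$. This is exactly the ``careful choice of the number of layers'' flagged in the overview.
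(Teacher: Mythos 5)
Your proposal is correct and follows essentially the same route as the paper: choose $L \approx \delta \cdot \eps \cdot \frac{\log\log n}{\log\log\log n}$ with $\delta$ a sufficiently small absolute constant, observe that then $2^{(L/\eps)^{\Oh(L)}+\Oh(L\log\log n)}$ is subsumed (being $2^{(\log n)^{\Oh(\eps\delta)}}$, i.e.\ at most roughly $2^{\sqrt{\log n}}$) while $n^{1/L}=n^{\Oh(\log\log\log n/(\eps\log\log n))}$ dominates. Your extra bookkeeping (integrality of $L$, the threshold $N_\eps$, positivity of the $\tau_q$) only tightens details the paper leaves implicit and handles via its later case distinction on $n$ versus $2^{2^{1/\eps^2}}$.
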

\begin{proof}
As both \cref{lem:time1,lem:time2} guarantee $n^{\frac1L} \cdot 2^{(\frac{L}{\eps})^{\Oh(L)} + \Oh(L \log \log n)}$ time for the corresponding computations, we get that the amortized time of an update to $\D^{\mathrm{main}}$ is equal to $n^{\frac1L} \cdot 2^{(\frac{L}{\eps})^{\Oh(L)} + \Oh(L \log \log n)}$ as well.

Let us set $L =  \frac{\log \log n}{\log \log \log n} \cdot \eps \cdot \delta$ for some absolute constant $\delta \in (0, 1)$ independent on $\eps$. For such a choice we have that $\Oh(L \log \log n) \leq \left( \frac{L}{\eps} \right)^{\Oh(L)}$, so the final complexity can be bounded by 
\begin{align*}
n^{\frac1L} \cdot 2^{\left(\frac{L}{\eps}\right)^{\Oh(L)}} & = n^{\frac{\log \log \log n}{\log \log n \cdot \eps \cdot \delta}} \cdot 2^{\left(\frac{\log \log n}{\log \log \log n}\cdot \delta\right)^{\Oh\left(\frac{\log \log n}{\log \log \log n} \cdot \eps \cdot \delta\right)}} \\ 
& = 2^{\frac{\log n \cdot \log \log \log n}{\log \log n \cdot \eps \cdot \delta}} \cdot 2^{2^{\Oh(\log \log n \cdot \eps \cdot \delta)}} \\
& = 2^{\frac{\log n \cdot \log \log \log n}{\log \log n \cdot \eps \cdot \delta}} \cdot 2^{(\log n)^{\Oh(\eps \cdot \delta)}}.
\end{align*}
If $\delta$ is sufficiently small, then $\Oh(\eps \cdot \delta)$ can be bounded from above by $\frac{1}{2}$. Then, the second factor will be dominated by the first one and the final complexity is then bounded by $n^{\Oh\left( \frac{\log \log \log n}{\log \log n \cdot \eps} \right)} = n^{o\left( \frac{1}{\eps} \right)}$.
\end{proof}

However, the time complexity of $f(\eps) \cdot n^{o(1)}$ would be preferable over $n^{o \left( \frac{1}{\eps} \right)}$, so as the last step, let us explain how to arrive at it. We will distinguish two cases, based on whether $n > 2^{2^{\frac{1}{\eps^2}}}$. 
\begin{enumerate}
	\item[Case 1:] $n > 2^{2^{\frac{1}{\eps^2}}}$.
	
	Then we have $\log \log n > \frac{1}{\eps^2} \Rightarrow \eps > \sqrt{\log \log n}$, so $n^{\Oh\left( \frac{\log \log \log n}{\log \log n \cdot \eps} \right)} = n^{\Oh\left( \frac{\log \log \log n}{\sqrt{\log \log n }} \right)} = n^{o(1)}$. 
	
	\item[Case 2:] $n \le 2^{2^{\frac{1}{\eps^2}}}$.
	
	In this case, instead of using our final algorithm, after every single update we can simply use a brute-force method provided by \cref{lem:csp-baker} and approximately solve the instance in time $n \cdot 2^{\Oh \left(\frac{1}{\eps} \right)} = 2^{2^{\Oh\left(\frac{1}{\eps^2}\right)}}$.
\end{enumerate} 
In any case, the time complexity of an~update can be bounded as $2^{2^{\Oh\left(\frac{1}{\eps^2}\right)}} \cdot n^{\Oh\left( \frac{\log \log \log n}{\sqrt{\log \log n }} \right)} = f(\eps) \cdot n^{o(1)}$.
Then, the initialization of the data structure on an~$n$-vertex graph $G \in \Cc$ can be performed in time $n^{1 + o(1)}$: we create an~instance of the data structure for an~edgeless $n$-vertex graph and add edges to it one by one.
Since $|E(G)| \leq \Oh(n)$ for any $G \in \Cc$, the bound on the initialization time follows.
This concludes the proof of \cref{thm:main} in the setting of {\sc Maximum Weight Independent Set}.



\section{Minimum Weight Dominating Set}

In this section, we describe a~dynamic approximation scheme for {\sc Minimum Weight Dominating Set}.
As in the case of the {\sc Maximum Weight Independent Set}, we assume that a~given apex-minor-free class of graphs $\Cc$ is fixed.
Moreover, throughout this section we denote by $\eps > 0$ the parameter $\eps$ fixed in the initialization of the data structure.
Consider the dynamic setting of \MinGeneralizedDom.
%
We assume that a~data structure for Generalized Domination is initialized with integers $s, d \geq 1$ and a~dynamic $(s,d)$-decent instance of \MinGeneralizedDom.
We consider the following types of updates to instances of \MinGeneralizedDom:

\begin{itemize}
  \item $\AddVertex(u, D_u, \cost_u)$: adds an~isolated vertex, say $u$, to the instance, together with a domain $D_u$ of size at most $d$ and a cost function $\cost_u \,\colon\, D_u \to \R_{\ge 0} \cup \{+\infty\}$.
  For each state $s \in D_u$, we initialize $\supply_u(s) = \emptyset$ and $\demand_u(s) = \emptyset$.
  
  \item $\AddEdge(u, v, D_u^{\supply}, D_u^{\demand}, D_v^{\supply}, D_v^{\demand})$: adds an~edge $e$ with endpoints $u$ and $v$; for each endpoint $w \in \{u, v\}$, the edge $e$ is added to each set $\supply_w(x)$ for $x \in D_w^{\supply}$ and to each set $\demand_w(x)$ for $x \in D_w^{\demand}$.
    
  \item $\RemoveEdge(e)$: removes an~edge $e$ from the graph, removing it also from the corresponding demand and supply sets.
  
  \item $\UpdateCost(u, \cost'_u)$: replaces $\cost_u$ for a~vertex $u \in V(G)$ with a~new function $\cost'_u \,\colon\, D_u \to \R_{\ge 0} \cup \{+\infty\}$.
\end{itemize}

As in the case with the independent set, we do not support vertex removals.


We now adapt the statement of \cref{thm:main} to the language of instances of \MinGeneralizedDom.

\begin{lemma}
\label{lem:domset-general-main}
Let $s, d \in \N$ be absolute constants and choose $\delta > 0$.
  Then there exists a~data structure storing a~dynamic $(s, d)$-decent instance $I = (G, D, \cost, \supply, \demand)$ of \MinGeneralizedDom under the assumption that $G \in \Cc$ holds at every point of time.
  The data structure maintains a~nonnegative real $p$ satisfying $(1 - \delta)\OPT \le p \le \OPT$, where $\OPT$ is the minimum possible cost of a~solution to $I$.
  The data structure is initialized in time $f(\delta) \cdot n^{1+o(1)}$, and each update takes time $f(\delta) \cdot n^{o(1)}$, where $f(\delta)$ is doubly-exponential in $\Oh(1/\delta^2)$.
\end{lemma}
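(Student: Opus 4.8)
The plan is to transcribe the development of Section~\ref{sec:intro}'s \MaxInd data structure (i.e.\ all of Section~3) almost line by line, replacing \CSPs by $(s,d)$-decent instances of \MinGeneralizedDom and replacing ``maximum revenue'' by ``minimum cost'' everywhere. Concretely there are four blocks to re-prove: \emph{(i)} an exact algorithm for a decent instance whose Gaifman graph has treewidth $w$, running in time $n\cdot d^{\Oh(w)}$ by dynamic programming over a tree decomposition (the analogue of \cref{lem:csp-bd-tw}); \emph{(ii)} a static Baker-type approximation that, given a decent instance with $G\in\Cc$, returns $p$ with $(1-\delta)\OPT\le p\le\OPT$ in time $n\cdot d^{\Oh(1/\delta)}$ (the analogue of \cref{lem:csp-baker}); \emph{(iii)} a data structure dynamically maintaining a \emph{compressed} decent instance $I^\star$ of size roughly proportional to the number of updates, with bounded domains and $\OPT(I^\star)=\OPT(I)$ (the analogue of \cref{lem:indset-dynamic-compression}); and \emph{(iv)} the recursive, $L$-level, epoch-based architecture of Sections~3.4--3.5, which is essentially problem-agnostic once \emph{(i)}--\emph{(iii)} are available.

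\textbf{Block (ii): under-approximating Baker.} Partition $V(G)$ into BFS residue classes $V_0,\dots,V_{k-1}$ modulo $k=\lceil 1/\delta\rceil$ (treating each connected component separately). As hinted in the introduction, the $i$-th universe is $I_i=\mathsf{Clear}(I;V_i)$: the layer $V_i$ is relieved but kept, so that a vertex of $V_i$ may still be used (at its own cost) to satisfy the demand of a neighbour outside $V_i$. By the Observation on $\mathsf{Clear}$, relieving and deleting internal edges only relaxes the instance, so $\OPT(I_i)\le\OPT$ for every $i$; the returned value is $\max_i p_i$ where $p_i$ is the exact optimum of $I_i$ computed by block~(i). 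Note that $G-V_i$ has treewidth $\Oh(k)$ by \cref{lem:csp-partitioning}, while $V_i$ itself is an independent, relieved ``free boundary''; this is the form in which the universe will be fed to block~(iii), which must therefore be stated so as to accept an instance that becomes bounded-treewidth after deleting a relieved set. The direction requiring work is $\max_i\OPT(I_i)\ge(1-\delta)\OPT$: fixing an optimal $\phi^\star$ for $I$ and the residue $i$ minimising $\cost(\phi^\star|_{V_i})$, and taking an optimal $\psi$ for $I_i$, one glues $\psi$ and $\phi^\star$ on the buffer $N[V_i]$ using the \emph{combination} operation granted by state-monotonicity --- on $V_i$ one takes the combination of $\phi^\star$ with $\psi$ to import the supplies the detached part relied on, and on $N(V_i)\setminus V_i$ one passes to the self-supplying state $s_v$ so that no demand points out of the buffer --- obtaining a valid solution of $I$ whose cost exceeds $\cost(\psi)$ by $\Oh(\cost(\phi^\star|_{N[V_i]}))$, which averaged over $i$ is $\Oh(\OPT/k)=\Oh(\delta\OPT)$. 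Bounded degree is what keeps $N[V_i]$ thin and the combinations finite.

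\textbf{Block (iii): dynamic compression of decent instances.} Reuse the elimination-forest machinery of \cref{lem:td-decomp,cl:diff-reaches}: on an elimination forest $F$ of the initial bounded-treewidth $G$ precompute, for every $u\in V(F)$, every valuation of $\Reach_F(u)$, and every choice of the supply/demand pattern on the edges from $\Reach_F(u)$ into $\desc_F[u]$, the minimum cost of a partial solution on $\desc_F[u]$ that is locally correct and consistent with that pattern; this table is finite precisely because the instance is $(s,d)$-decent. The compressed instance $I^\star=\Icur\{Z\}$ collapses each connected component of $\Gcur\setminus Z$, groups components by their neighbourhood $S$ (using \cref{cor:neighborhood-complexity-components}), and gives the collapsed vertex $v_S$ a domain indexed by the bounded supply/demand interaction it offers to $S$, with costs read off the precomputed tables; the relieving/$\mathsf{Clear}$ vocabulary is what formalises ``what a collapsed component demands from its boundary''. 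Maintenance mirrors \cref{cl:growing-z,cl:add-vtx-compr}, with one genuinely new point: whenever an update touches a vertex $u$ we pull into the stash $Z$ not only $u$ but also its $\le s$ neighbours, together with all their ancestor-closures in $F$ --- this is exactly why bounded degree is assumed, and it makes $Z$ grow by $\Oh(s\cdot w\log n)$ per update, each added vertex triggering $w^{\Oh(1)}$ updates to $I^\star$ computable in $d^{\Oh(w)}\cdot s^{\Oh(1)}$ time.

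\textbf{Block (iv) and assembly.} With \emph{(i)}--\emph{(iii)} in hand, block~\emph{(iv)} is a transcription of Sections~3.4--3.5: build the height-$L$, branching-$k$ tree of auxiliary data structures, each level-$q$ structure running the block-(ii) Baker scheme on its current instance at the start of every epoch of length $\tau_q=n^{(q-1)/L}/(k\log n)^{\Oh(1)}$, relaying updates through the compressed universes to its children, and aggregating the children's approximations by taking the \emph{minimum} (the covering analogue of \cref{lem:chaining-approximations}); since one compression step inflates domains from $d$ to $d^{\Oh(k)}$, the same choice $L=\Theta\!\big(\delta\cdot\frac{\log\log n}{\log\log\log n}\big)$ keeps domains subpolynomial and yields amortised update time $n^{\Oh(\log\log\log n/(\delta\log\log n))}$, which is converted to $f(\delta)\cdot n^{o(1)}$ with $f$ doubly-exponential in $\Oh(1/\delta^2)$ by the same two-case split ($n>2^{2^{1/\delta^2}}$ versus brute force via block~(ii) otherwise), and initialization in time $f(\delta)\cdot n^{1+o(1)}$ by inserting the $\Oh(n)$ edges one by one. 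I expect the main obstacle to be block~\emph{(ii)} together with the ``pull the neighbourhood into the stash'' step of block~\emph{(iii)}: getting the supply/demand bookkeeping across the stash boundary exactly right, so that relieving, collapsing, and the state-combination operation all interact consistently and $I^\star$ genuinely preserves $\OPT$, is where the state-monotonicity hypotheses and the bounded-degree assumption do all the work, whereas the recursive scheduling and the numerical parameter tuning are identical to the independent-set case.
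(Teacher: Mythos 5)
Your overall architecture matches the paper's (under-approximation via $\Clear$, exact DP on bounded treewidth, interaction-based compression on an elimination forest, the $L$-level epoch scheme), but two steps that carry the real weight are not correct as written. First, the gluing step of your block (ii). On the ring $N(V_i)\setminus V_i$ you switch to the self-supplying state $s_v$; its cost is only guaranteed to be \emph{finite}, not bounded by $\cost(\phi^\star|_{N[V_i]})$ or by anything related to $\psi$ (in the dominating-set encoding this amounts to forcibly buying every neighbour of the layer), and moreover $\demand_v(s_v)$ is unconstrained by decency, so an edge from $v$ to a vertex outside $N[V_i]$ may be demanded by $s_v$ but not supplied by $\psi$ there --- validity can break as well. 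So the claimed excess $\Oh(\cost(\phi^\star|_{N[V_i]}))$, i.e.\ the key inequality $\max_i \OPT(I_i)\ge(1-\Oh(\delta))\OPT$, is not established. The paper's fix is to use state-monotonicity on the ring too: on $V_i$ take the combination of $\phi^\star(v)$ with $\psi(v)$, and on $N(V_i)$ the combination of $\psi(v)$ with $\phi^\star(v)$; then the demand/supply case analysis goes through and the extra cost is exactly bounded by $\cost(\phi^\star|_{N[V_i]})$. (The paper also uses $4k$ BFS layers and $V_j=A_{4j+1}\cup A_{4j+2}$ so that the closed neighbourhoods $N[V_j]$ are pairwise \emph{disjoint}, which is what the pigeonhole and, crucially, the dynamic part below are stated for; your plain residue classes only give bounded overlap, which is tolerable statically but not what the dynamic invariant needs.)

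Second, your block (iv) is not a verbatim transcription of the independent-set sections, and the missing piece is precisely the new dynamic ingredient for domination: during an epoch the layers $V_i$ must themselves be \emph{dynamic}. The per-epoch guarantee needs $\OPT(\Clear(\Icur;V_i))\ge(1-\frac1k)\OPT(\Icur)$ for the \emph{current} graph, and edge insertions can make a single vertex adjacent to many layers, after which the averaging argument over the (no longer disjoint) sets $N[V_i]$ gives nothing; an epoch contains up to $n^{(q-1)/L}$ updates, so this cannot be absorbed. The paper handles it by removing every touched vertex from its layer $V_i$ --- reinstating its internal edges and its demands in the cleared universe, which costs $\Oh(s)$ updates and is the actual place where bounded degree is used at this level --- and proving that deleting the endpoints of an inserted edge from the layers preserves the pairwise disjointness of the closed neighbourhoods. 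Your substitute, pulling the $\le s$ neighbours of a touched vertex into the compression stash $Z$, addresses a non-problem (the precomputed tables for untouched components remain valid exactly as in the independent-set case, since any edge update already places both endpoints in $Z$) and does not repair the semantic relationship between $\Icur$ and $\Clear(\Icur;V_i)$. Finally, a smaller slip: the children's values are lower bounds, so the aggregation must take the \emph{maximum}, as you correctly do in block (ii); taking the minimum, as stated in block (iv), destroys the $(1-\delta)$ guarantee since some cleared universes may have much smaller optimum than $\OPT$.
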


Counterintuitively, instead of trying approximate \MinGeneralizedDom by finding an~approximate solution that would have a~slightly higher cost than optimal, the data structure of \cref{lem:domset-general-main} will maintain a~good \emph{lower bound} on the minimum possible cost of an~instance.
This is, however, still enough to show that the query $\QueryMWDS$ of \cref{thm:main} can be answered correctly: assume, for some absolute constant $\Delta \in \N$ that we are given a~dynamic graph $G$ with a~bound $\Delta$ on the maximum degree of a~vertex in $G$ and $\eps > 0$, and we wish to maintain a~nonnegative real $p'$ such that $\OPT \le p' \le (1 + \eps)\OPT$, where $\OPT$ is the minimum weight of a~dominating set of $G$.
  Set $\delta \coloneqq \frac{\eps}{1 + \eps}$ and instantiate the data structure of \cref{lem:domset-general-main}, initializing it with the $(\Delta, \Delta+1)$-decent instance $I$ of \MinGeneralizedDom constructed from $G$ and the parameter $\delta$.
  Each update to $G$ is translated to the appropriate update to $I$ and forwarded to the constructed data structure.
  Thus, the data structure is created in time $f(\delta) \cdot n^{1+o(1)}$ and each update to $G$ takes time $f(\delta) \cdot n^{o(1)}$.
  Since $1/\delta = 1 + 1/\eps$ and $f$ is doubly-exponential in $\Oh(1/\delta^2)$, $f$ is also doubly-exponential in $\Oh(1/\eps^2)$.

  The data structure from \cref{lem:domset-general-main} maintains a~nonnegative real $p$ such that $(1 - \delta)\OPT \le p \le \OPT$.
  Setting $p' \coloneqq \frac{p}{1 - \delta}$, we obtain that $\OPT \le p' \le (1-\delta)^{-1} \OPT = (1+\eps) \OPT$.
  Thus our implementation can return $p'$ as the over-approximation of the minimum weight of a~dominating set of $G$ that is at most a~factor of $\eps$ away from the optimum weight.

\subsection{Static variant of Generalized Domination}

Here, we show how to apply the Baker's technique to decent instances of \MinGeneralizedDom.

\begin{lemma}
  \label{lem:domset-bounded-tw}
  Let $s, d, w \ge 1$.
  Given on input an~$(s,d)$-decent instance of \MinGeneralizedDom $I = (G, D, \cost, \supply, \demand)$ with $n \ge 1$ vertices with the property that $\tw(G) \le w$, we can compute the minimum cost of a~solution to $I$ in worst-case time $\Oh(ns) \cdot d^{\Oh(w)}$.
\end{lemma}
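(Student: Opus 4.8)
The plan is to run a standard dynamic programming over a tree decomposition, exactly mirroring the proof of \cref{lem:csp-bd-tw}, but with the DP states now recording, for each vertex $u$ in the current bag, both the chosen state $\phi(u) \in D_u$ \emph{and} which of the edges in $\delta(u)$ incident to the ``future'' (not-yet-forgotten) part of the decomposition still carry an unsatisfied demand. First I would invoke Korhonen's algorithm~\cite{Korhonen21} to compute, in time $n \cdot 2^{\Oh(w)}$, a tree decomposition $(T,\bag)$ of $G$ of width $\Oh(w)$; by a routine transformation we may assume $(T,\bag)$ is a \emph{nice} tree decomposition with $\Oh(n)$ nodes. Since the instance is $(s,d)$-decent, every bag $\bag(t)$ contains at most $\Oh(w)$ vertices, each of degree at most $s$, so the total number of edges incident to $\bag(t)$ is $\Oh(ws)$.

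The DP table at a node $t$ is indexed by a pair $(\eta, M)$ where $\eta \in \prod_{u \in \bag(t)} D_u$ is a valuation of the bag, and $M \subseteq \bigcup_{u\in\bag(t)} \delta(u)$ is a set of ``pending demands'': edges $e=uv$ incident to some $u\in\bag(t)$ such that $e\in\demand_u(\eta(u))$ but $e$ has not yet been supplied by the processed side. The stored value $\mathrm{dp}[t][\eta][M]$ is the minimum cost $\sum \cost_u(\phi(u))$ over partial assignments $\phi$ to $V_t$ (the vertices in the subtree at $t$) that restrict to $\eta$ on $\bag(t)$, are locally correct on $V_t$ except possibly for the demands recorded in $M$, and have no pending demands \emph{other} than those in $M$ — that is, every demand of a vertex in $V_t$ along an edge whose other endpoint is also in $V_t$ is satisfied, and every such demand along an edge leaving $V_t$ is recorded in $M$. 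The transitions at \textbf{Leaf}, \textbf{Introduce vertex}, \textbf{Introduce edge}, \textbf{Forget}, and \textbf{Join} nodes are routine: introducing an edge $e=uv$ forces the compatibility check ($e\in\demand_u(\eta(u))\Rightarrow e\in\supply_v(\eta(v))$ or else $e$ stays in $M$, and symmetrically) and updates $M$; forgetting a vertex $u$ requires that no edge of $M$ is incident only to $u$ (all of $u$'s obligations must have been resolved before $u$ leaves the bag), exploiting the connectivity/consistency property of tree decompositions that once a vertex is forgotten it never returns; the join node merges two pending-demand sets by union while adding costs (and subtracting the doubly-counted bag costs). The answer is read off at the root with $M=\emptyset$.

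For the running time: each bag has at most $d^{\Oh(w)}$ valuations $\eta$ and at most $2^{\Oh(ws)}$ pending-demand sets $M$, and each of the $\Oh(n)$ nodes is processed in time polynomial in the table size times $\Oh(ws)$ for the bookkeeping, with the join node being the bottleneck at $\big(d^{\Oh(w)} 2^{\Oh(ws)}\big)^{\Oh(1)}$. This is $\Oh(n)\cdot d^{\Oh(w)}\cdot 2^{\Oh(ws)}$, which we would like to be $\Oh(ns)\cdot d^{\Oh(w)}$; this requires absorbing the $2^{\Oh(ws)}$ factor, which is only legitimate if $d \geq 2$ so that $2^{\Oh(ws)} \le d^{\Oh(ws)}$ — but note the claimed bound has exponent $\Oh(w)$, not $\Oh(ws)$, so with $s$ treated as part of the $\Oh(\cdot)$ in the exponent (since $s$ is an absolute constant in all our applications) this reads $d^{\Oh_s(w)}$, and a cleaner statement would keep the $s$ explicit in the exponent. \textbf{The main obstacle} is precisely this accounting of the pending-demand dimension: one must check that $2^{\delta(u)}$-valued supply/demand functions do not blow the state space beyond $d^{\Oh(w)}$, which works only because decency bounds $\deg(u)\le s$; and one must verify that the ``forget'' transition correctly enforces that a vertex's demands are all discharged before it disappears, using that in a tree decomposition every edge of $G$ lives in some bag and a forgotten vertex is gone for good. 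State-monotonicity is not needed here; only $(s,d)$-meagerness (i.e. the degree and domain bounds) enters the complexity.
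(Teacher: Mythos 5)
Your overall plan (tree decomposition via Korhonen plus bottom-up DP indexed by bag valuations) matches the paper, but the extra ``pending demands'' dimension $M$ is both unnecessary and the source of a genuine gap: with it, your state space per node is $d^{\Oh(w)}\cdot 2^{\Oh(ws)}$, so your running time is $\Oh(n)\cdot d^{\Oh(w)}\cdot 2^{\Oh(ws)}$, which does not match the claimed $\Oh(ns)\cdot d^{\Oh(w)}$; you notice this yourself and try to repair it by treating $s$ as a constant or by moving $s$ into the exponent, but the lemma is stated for arbitrary $s,d,w\ge 1$, so the proposal as written does not prove it. The point you are missing is that no pending information ever needs to be carried: for an edge $e=uv$, whether $e\in\demand_u(\phi(u))$ depends only on $\phi(u)$, and the only possible supplier of that demand is the other endpoint $v$ via $e\in\supply_v(\phi(v))$, which depends only on $\phi(v)$. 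Hence the constraint on each edge is a purely binary constraint between the two endpoint states, exactly as in \cref{lem:csp-bd-tw}, and it can be checked once at a bag containing both endpoints; nothing about the ``future'' part of the decomposition can rescue an unsatisfied demand later. This is precisely what the paper does: the DP value $P[t][x]$ is the minimum cost of a valuation of the subtree at $t$ that is \emph{locally correct} there and agrees with $x$ on $\bag(t)$, with the factor $\Oh(s)$ in the running time coming only from verifying the demand/supply condition on the at most $s$ edges per bag vertex, giving $\Oh(ns)\cdot d^{\Oh(w)}$. (State-monotonicity indeed plays no role here, as you say.)

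A secondary, smaller issue: even taken on its own terms, your bookkeeping for $M$ is shaky. Once both endpoints of an introduced edge have fixed states, letting the edge ``stay in $M$'' is meaningless (the demand can never be met later), and at a join node taking the union of the two children's pending sets can mark as pending an edge whose demand is already satisfied by the bag valuation, which your forget rule would then wrongly reject. Dropping $M$ altogether removes both the correctness wrinkle and the spurious $2^{\Oh(ws)}$ factor.
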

\begin{proof}[Proof of \cref{lem:domset-bounded-tw}]
  Compute a~tree decomposition $T$ of $G$ of width $\Oh(w)$ in time $n \cdot 2^{\Oh(w)}$, e.g., using the algorithm of Korhonen \cite{Korhonen21}.
  Then the problem can be modeled by a~straightforward dynamic programming scheme on tree decompositions; here, we only give the states of the dynamic programming process.

  For every node $t \in V(T)$ of the tree decomposition with a~bag $\bag(t) \subseteq V(G)$, let $X_t \coloneqq \prod_{v \in \bag(t)} D_v$.
  For each $t \in V(T)$ and $x \in X_t$, we compute the value $P[t][x]$: the minimum cost of a~locally correct valuation $\phi$ of the vertices of $G$ in the subtree of $t$ in $T$ with the property that for every $v \in \bag(t)$, we have that $\phi(v) = x(v)$.
  
  For each node $t \in V(T)$, there exist at most $d^{|\bag(t)|} \leq d^{\Oh(w)}$ different dynamic programming states.
  It is then straightforward to implement the entire dynamic programming scheme in time $d^{\Oh(w)} \cdot \Oh(sn)$; here, the additional factor $\Oh(s)$ in the time complexity comes from the need of the verification of the demands and the supplies for each edge of the graph.
\end{proof}

\begin{lemma}
  \label{lem:domset-baker-partitioning}
  Let $s, d, k \ge 1$. Consider an~$(s,d)$-decent instance of \MinGeneralizedDom $I = (G, D, \cost, \supply, \demand)$ of optimum cost $\OPT$.
  Let $V_0, \dots, V_{k-1} \subseteq V(G)$ be so that the sets $N_G[V_0], N_G[V_1], \dots, N_G[V_{k-1}]$ are pairwise disjoint.
  For each $j \in \{0, \dots, k - 1\}$, consider the instance $I_j \coloneqq \Clear(I; V_j)$.
  Then:
  \begin{itemize}
    \item for every $j \in \{0, \dots, k-1\}$, $I_j$ is $(s,d)$-decent and has a~solution of cost at most $\OPT$;
    \item there exists $j \in \{0, \dots, k-1\}$ such that the minimum-cost solution to $I_j$ has cost at least $(1 - \frac{1}{k}) \OPT$. 
  \end{itemize}
\end{lemma}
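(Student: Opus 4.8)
The plan is to treat the two bullet points separately. For the first, I would check that clearing preserves decency vertex by vertex: deleting edges can only shrink degrees and leaves all domains intact, so every vertex stays $(s,d)$-meager; and for state-monotonousness, I claim the same combination state that witnesses monotonousness in $I$ still works in $I_j = \Clear(I; V_j)$, because clearing only intersects each supply set with the surviving incident edges and zeroes out the demand sets of the relieved vertices of $V_j$ --- operations that commute with unions and preserve inclusions, and which leave the cost functions unchanged. For the same reason the distinguished full-supply finite-cost state $s_u$ survives clearing, so $I_j$ is a valid $(s,d)$-decent instance. That $I_j$ has a solution of cost at most $\OPT$ is then immediate from the Observation above: an optimal solution $\phi$ to $I$ is still a correct solution to $\Clear(I; V_j)$, and since clearing does not alter costs, its $I_j$-cost is exactly $\OPT$; in particular $\OPT(I_j) \le \OPT$, where $\OPT(I_j)$ denotes the minimum cost of a solution to $I_j$.

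For the second bullet, fix an optimal solution $\phi$ to $I$ and, for each $j$, an optimal solution $\phi_j$ to $I_j$. The heart of the argument is to lift $\phi_j$ back to a correct solution $\psi_j$ of $I$ while editing it only on $N_G[V_j]$, paying at most $\sum_{u \in N_G[V_j]} \cost_u(\phi(u))$ in extra cost. I would define $\psi_j(u) = \phi_j(u)$ for $u \notin N_G[V_j]$; $\psi_j(u)$ to be the combination of $\phi_j(u)$ with $\phi(u)$ for $u \in N_G(V_j)$; and $\psi_j(u)$ to be the combination of $\phi(u)$ with $\phi_j(u)$ for $u \in V_j$. The asymmetric choice of order is essential: on $N_G(V_j)$ it keeps the demand contained in $\demand_u(\phi_j(u))$, so that edges leaving $N_G[V_j]$ --- which survive clearing unchanged and are therefore already certified by $\phi_j$ in $I_j$ --- stay satisfied; on $V_j$ it keeps the demand contained in $\demand_u(\phi(u))$, so that edges inside $V_j$ --- which were deleted in $I_j$ and hence carry no information from $\phi_j$ --- are controlled by $\phi$. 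In both regimes the supply of $\psi_j(u)$ is a union and so only grows, which is what makes the other side of each constraint go through. A routine case analysis over the (at most five) edge types --- both endpoints in $V_j$; one in $V_j$ and one in $N_G(V_j)$; both in $N_G(V_j)$; one in $N_G(V_j)$ and one outside $N_G[V_j]$; both outside $N_G[V_j]$ --- confirms that $\psi_j$ satisfies every constraint of $I$, invoking the correctness of $\phi$ in $I$ for the edges touching $V_j$ and the correctness of $\phi_j$ in $I_j$ (valid precisely because those edges survive clearing with their supply and demand data intact) for the remaining ones.

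It then remains to assemble the estimates. The cost bound for the combination operation gives $\cost_I(\psi_j) \le \cost_{I_j}(\phi_j) + \sum_{u \in N_G[V_j]} \cost_u(\phi(u)) = \OPT(I_j) + \sum_{u \in N_G[V_j]} \cost_u(\phi(u))$, and since $\psi_j$ is a correct solution to $I$ we obtain $\OPT(I_j) \ge \OPT - \sum_{u \in N_G[V_j]} \cost_u(\phi(u))$. Summing over $j$ and using that $N_G[V_0], \dots, N_G[V_{k-1}]$ are pairwise disjoint, $\sum_{j=0}^{k-1} \sum_{u \in N_G[V_j]} \cost_u(\phi(u)) \le \sum_{u \in V(G)} \cost_u(\phi(u)) = \OPT$, so some index $j$ satisfies $\sum_{u \in N_G[V_j]} \cost_u(\phi(u)) \le \OPT/k$, and for that $j$ we conclude $\OPT(I_j) \ge (1 - 1/k)\OPT$, as desired.

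I expect the main obstacle to be exactly the construction of the lifted solution $\psi_j$: the two ways $\phi_j$ can fail to solve $I$ --- edges inside $V_j$ that vanished under clearing, and demands of relieved $V_j$-vertices along edges leaving $V_j$ --- push the combination order in opposite directions, and getting both to be repaired simultaneously while keeping the extra cost charged to $\phi$ restricted to $N_G[V_j]$ is the delicate point; everything else is bookkeeping.
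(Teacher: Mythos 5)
Your proposal is correct and follows essentially the same route as the paper: the identical lifting of a solution of $I_j$ to a solution of $I$ via state combinations, with the same asymmetric ordering (combine $\phi$ with the cleared solution on $V_j$, and the cleared solution with $\phi$ on $N(V_j)$), followed by the same pigeonhole argument over the pairwise disjoint closed neighborhoods. The only cosmetic difference is that the paper runs the lifting as a proof by contradiction for the single index $j$ with $\cost(\phi|_{N[V_j]})\le \OPT/k$, whereas you derive the bound $\OPT(I_j)\ge \OPT-\cost(\phi|_{N[V_j]})$ for every $j$ directly.
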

\begin{proof}
  By the properties of $V_j$-cleared instances, we immediately have that each $I_j$ has a~solution of cost at most $\OPT$.
  It remains to show that for some $j \in \{0, \dots, k-1\}$, some subinstance $I_j$ has the minimum cost of a~solution lower-bounded by $(1 - \frac1k) \OPT$.

  Let $\phi$ be some optimum solution to $I$.
  By the pigeonhole principle and the fact that the closed neighborhoods of all sets $V_j$ are pairwise disjoint, there exists some $j \in \{0, \dots, k-1\}$ such that $\cost(\phi|_{N[V_j]}) \leq \frac1k \OPT$.
  We claim that $\OPT_j$, the minimum cost of a~solution of $I_j$, is at least $(1 - \frac{1}{k})\OPT$.
  
  Suppose we have a~solution $\psi_j$ to $I_j$ of cost strictly less than $(1 - \frac{1}{k})\OPT$.
  Recall that $V(I_j) = V(I)$ and that each $\psi_j(v)$ for each $v \in V(I)$ is also an~element of the domain of $v$ in $I$.
  With that in mind, construct a~valuation $\phi'$ of $V(I)$ as follows:
  \begin{itemize}
    \item if $v \in V_j$, set $\phi'(v)$ to the combination of the state $\phi(v)$ with $\psi_j(v)$;
    \item if $v \in N(V_j)$, set $\phi'(v)$ to the combination of the state $\psi_j(v)$ with $\phi(v)$;
    \item otherwise, set $\phi'(v) \coloneqq \psi_j(v)$.
  \end{itemize}
  
  We now show that the construction of $\phi'$ contradicts our assumptions:
  \begin{claim}
    \label{cl:domset-baker-contradiction}
    $\phi'$ is a~valid solution to $I$ of cost strictly less than $\OPT$.
  \end{claim}
  \begin{claimproof}
    Suppose $\phi'$ is not a valid solution to $I$.
    Then there exists an~edge $e$ with endpoints $u$, $v$ such that $e \in \demand_v(\phi'(v))$, but $e \notin \supply_u(\phi'(u))$.
    By the definition, the edge cannot have one endpoint in $V_j$ and the other outside of $N[V_j]$.    
    We now consider cases, depending on the location of $u$ and $v$ with respect to $V_j$:
    \begin{itemize}
      \item Assume $v \in V_j$. By the state-monotonicity of $v$, we have $\demand_v(\phi'(v)) \subseteq \demand_v(\phi(v))$, so $e \in \demand_v(\phi(v))$.
      Then also $u \in N[V_j]$. By the state-monotonicity of $u$, we have that $\supply_u(\phi'(u)) \supseteq \supply_u(\phi(u))$, so $e \notin \supply_u(\phi(u))$.
      This is, however, a~contradiction: $e$ witnesses that $\phi$ was not a~valid solution to $I$ in the first place.
      
      \item Assume $v \notin V_j$. Then $e \in \demand_v(\psi_j(v))$: either $v \in N(V_j)$ and then this follows from the state-monotonicity of $v$ (so $\demand_v(\phi'(v)) \subseteq \demand_v(\psi_j(v))$), or $v \notin N[V_j]$ and then simply $\phi'(v) = \psi_j(v)$.
      Also, from the state-monotonicity of $u \in V(G)$ we have that $\supply_u(\phi'(u)) \supseteq \supply_u(\psi_j(u))$, so $e \notin \demand_u(\psi_j(u))$.
      Since $v \notin V_j$, we have $e \in E(I_j)$, so this yields a~contradiction: $e$ witnesses that $\psi_j$ was not a~valid solution to $I_j$ in the first place.
    \end{itemize}
  Hence $\phi'$ is a~correct solution to $I$.
  By the state-monotonicity of vertices of $I$, and by the facts that $\cost(\phi|_{N[V_j]}) \leq \frac1k \OPT$ and $\psi_j$ has cost less than $(1 - \frac{1}{k})\OPT$, we conclude that $\phi'$ has cost strictly smaller than $\OPT$.
  \end{claimproof}

  \Cref{cl:domset-baker-contradiction} yields a~contradiction.
  Thus $I_j$ has no solution of cost smaller than $(1 - \frac1k)\OPT$.
\end{proof}

\begin{lemma}
  \label{lem:domset-static-baker}
  Let $s, d \ge 1$, $\delta > 0$, and consider an~$(s,d)$-decent instance of $I \in \Cc$ of \MinGeneralizedDom with $n \ge 1$ vertices and optimum solution cost $\OPT$.
  Then we can compute a~nonnegative real $p$ such that $(1 - \delta)\OPT \le p \le \OPT$ in time $\Oh(ns) \cdot d^{\Oh(1/\delta)}$.
\end{lemma}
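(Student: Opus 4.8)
The plan is to mimic the proof of \cref{lem:csp-baker}, now pairing the bounded-treewidth solver \cref{lem:domset-bounded-tw} with the partitioning statement \cref{lem:domset-baker-partitioning}. Set $k \coloneqq \lceil 1/\delta \rceil$; we may assume $\delta < 1$, since otherwise $p \coloneqq 0$ already satisfies $(1-\delta)\OPT \le p \le \OPT$. The goal is to produce sets $V_0, \dots, V_{k-1} \subseteq V(G)$ with \emph{both} of the following properties: the closed neighborhoods $N_G[V_0], \dots, N_G[V_{k-1}]$ are pairwise disjoint (so that \cref{lem:domset-baker-partitioning} is applicable), and, for every $j$, the Gaifman graph of $I_j \coloneqq \Clear(I; V_j)$ — which is exactly $G$ with all edges internal to $V_j$ deleted — has treewidth $\Oh(k)$. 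Granting this, for each $j$ we compute the exact optimum cost $\OPT_j$ of the $(s,d)$-decent instance $I_j$ (its decency is part of \cref{lem:domset-baker-partitioning}) via \cref{lem:domset-bounded-tw}, and output $p \coloneqq \max_j \OPT_j$. The first item of \cref{lem:domset-baker-partitioning} gives $\OPT_j \le \OPT$ for all $j$, hence $p \le \OPT$; the second item gives some $\OPT_j \ge (1 - \tfrac{1}{k})\OPT \ge (1-\delta)\OPT$, hence $p \ge (1-\delta)\OPT$.

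It remains to build the sets $V_j$. Assume $G$ is connected — otherwise run the construction independently in each connected component, which is harmless since closed neighborhoods never cross components and treewidth can be bounded componentwise. Fix a BFS root $r$ and let $L_\ell \coloneqq \{v : \mathsf{dist}_G(r,v) = \ell\}$ be the BFS layers; recall $N_G[L_\ell] \subseteq L_{\ell-1} \cup L_\ell \cup L_{\ell+1}$. Working modulo $4k$, place in $V_j$ every layer $L_\ell$ with $\ell \equiv 4j$ or $\ell \equiv 4j+1 \pmod{4k}$, for $j = 0, \dots, k-1$; layers with $\ell \equiv 2, 3 \pmod{4}$ are assigned to no $V_j$. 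Then $N_G[V_j]$ meets only layers with $\ell \equiv 4j-1,\, 4j,\, 4j+1,\, 4j+2 \pmod{4k}$, and these $k$ residue windows of width four are pairwise disjoint modulo $4k$; this is the first property. For the second, observe that $V(G) \setminus V_j$ is a union of runs of at most $4k-2$ consecutive layers, each flanked on both sides by a width-two block $\{L_a, L_{a+1}\} \subseteq V_j$. Deleting all edges internal to $V_j$ removes, in particular, every edge between $L_a$ and $L_{a+1}$ as well as every edge inside a single layer of $V_j$; consequently $L_a$ keeps edges only towards $L_{a-1}$ and $L_{a+1}$ only towards $L_{a+2}$, so each such width-two block is split and every connected component of $G$ with the $V_j$-internal edges removed is a subgraph of the induced subgraph of $G$ on some $\le 4k$ consecutive BFS layers. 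By the first item of \cref{lem:csp-partitioning} the latter has treewidth $\Oh(k)$, and passing to a subgraph only decreases treewidth, so the Gaifman graph of $I_j$ has treewidth $\Oh(k)$.

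For the running time: constructing the BFS layering and the sets $V_j$ (and then each $I_j$, which only amounts to deleting some edges and emptying the demands on $V_j$) takes $\Oh(n + |E(G)|) = \Oh(n)$ time, using $|E(G)| = \Oh(n)$ for $G \in \Cc$. Solving each $I_j$ via \cref{lem:domset-bounded-tw} with $w = \Oh(k)$ costs $\Oh(ns) \cdot d^{\Oh(k)}$, and there are $k$ of them, for a total of $k \cdot \Oh(ns) \cdot d^{\Oh(k)} = \Oh(ns) \cdot d^{\Oh(1/\delta)}$, as required. The only non-routine part of the argument is the design of the sets $V_j$: they must have a small enough ``closed-neighborhood footprint'' that $k$ pairwise disjoint copies coexist — otherwise \cref{lem:domset-baker-partitioning} would only buy an approximation ratio worse than $1-\delta$ — while simultaneously being thick enough (blocks of at least two consecutive layers) that merely \emph{deleting the edges inside} $V_j$, which is all that $\Clear$ does to the graph since the vertices of $V_j$ cannot actually be removed, already cuts $G$ into pieces of treewidth $\Oh(k)$. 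The modulus-$4k$ pattern above is the simplest reconciliation of these two requirements.
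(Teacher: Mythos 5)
Your proposal is correct and follows essentially the same route as the paper's proof: group BFS layers modulo $4k$, let each $V_j$ consist of two adjacent residue classes per period (the paper uses $4j+1,4j+2$, you use $4j,4j+1$ -- a cosmetic shift), invoke \cref{lem:domset-baker-partitioning} via the pairwise-disjoint closed neighborhoods, bound the treewidth of each $\Clear(I;V_j)$ through \cref{lem:csp-partitioning}, solve each cleared instance exactly with \cref{lem:domset-bounded-tw}, and return the maximum. The only differences are presentational (you argue the treewidth bound and the disconnected case in slightly more detail), so there is nothing to fix.
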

\begin{proof}
  Assume without loss of generality that $\delta \in (0, 1)$, and set $k \coloneqq \left\lceil \frac{1}{\delta} \right\rceil$.
  Assume that the instance $I$ is connected; otherwise, solve for each connected component separately and return the sum of the results.
  Fix a~vertex $r \in V(I)$ and create in time $\Oh(n)$ the partitioning $V(I) = A_0 \cup \dots \cup A_{4k-1}$, where $A_i$ contains vertices $v$ such that $\mathsf{dist}(r, v) \equiv i \mod{4k}$.
  Then, for each $j \in \{0, \dots, k-1\}$, let $V_j = A_{4j+1} \cup A_{4j+2}$ and $I_j = \Clear(I; V_j)$.
  Note that $N[V_j] \subseteq A_{4j} \cup A_{4j+1} \cup A_{4j+2} \cup A_{4j+3}$, so the closed neighborhoods of the sets $V_j$ are pairwise vertex-disjoint and \cref{lem:domset-baker-partitioning} applies.
  That is, if $\OPT$ is the minimum-cost solution to $I$, then each $I_j$ is $(s,d)$-decent and has a~solution of cost at most $\OPT$, and there exists a~subinstance with minimum cost of a~solution at least $(1 - \frac1k) \OPT \geq (1 - \delta)\OPT$.
  
  Observe that for each $j \in \{0, \dots, k-1\}$, the Gaifman graph of $I_j$ has bounded treewidth: consider a~connected component $C$ of the Gaifman graph.
  The component cannot simultaneously contain vertices of $A_{4j+1}$ and $A_{4j+2}$, so the vertex set of $C$ is contained within some $4k$ consecutive BFS layers.
  In other words, for some $\ell \geq 0$, we have that $C \subseteq \{v \in V(G) \,\colon\, \ell \leq \mathsf{dist}(s, v) \leq \ell + 4k - 1\}$.
  Thus by \cref{lem:csp-partitioning}, $G[C]$ has treewidth at most $\Oh(k)$, so also the Gaifman graph of $I_j$ has treewidth at most $\Oh(k) = \Oh(\frac{1}{\delta})$.

  Let $\OPT$ be the minimum-cost solution to $I$.
  By \cref{lem:domset-baker-partitioning}, each $I_j$ is $(s,d)$-decent, the treewidth of the Gaifman graph of $I_j$ is at most $\Oh(\frac{1}{\delta})$, all instances have solutions of cost at most $\OPT$, and there exists an~instance with the minimum cost of a~solution at least $(1 - \frac{1}{k}\OPT) \geq (1 - \delta)\OPT$.
  Now, we solve each instance $I_j$ optimally using the algorithm from \cref{lem:domset-bounded-tw}.
  Let $p_j$ be the value returned by $I_j$ and let $p = \max(p_0, \dots, p_{k-1})$; then $p \in [(1 - \delta)\OPT, \OPT]$, so $p$ is as~required.
  It can be easily verified that the algorithm runs in time $n \cdot s \cdot d^{\Oh(1 / \delta)}$.
\end{proof}

\subsection{Compression for Generalized Domination}

We proceed to show how to translate the process of instance compression to the setting of \MinGeneralizedDom.

We begin by presenting how to encode interactions between a~set of vertices and its neighborhood in a~concise way. Consider an~instance $I = (G, D, \cost, \supply, \demand)$ and a~set of vertices $R \subseteq V(I)$.
Let also $S$ be the neighborhood of $R$, i.e., $S = N_G(R)$.
For every valuation $\phi \in \prod_{u \in R} D_u$ of $R$, we define the \emph{interaction} of $R$ with $S$:
\[
  \mathsf{interaction}_{R,S}(\phi) \in \left(2^{\{\Supply, \Demand\}}\right)^{E(R, S)},
\]
so that
$\interaction_{R,S}(\phi)$ is a~function $x$ mapping each edge connecting $R$ with $S$ to a~subset of $\{\Supply, \Demand\}$ as follows: let $e \in \delta(S)$, $e = uv$, where $u \in R$ and $v \in S$.
Then we have $\Supply \in x(e)$ if and only if $e \in \supply_u(\phi_u))$, and $\Demand \in x(e)$ if and only if $e \in \demand_u(\phi(u))$.

In other words, we record, for every edge $e$ connecting a~vertex $u \in R$ with a~neighbor $v \in S$, whether $u$ provides supply on $e$ (so that in any solution extending $\phi$ to $I$, $v$ can safely demand that supply) and whether $u$ demands supply on $e$ (so that in a~solution extending $\phi$ to $I$, $v$ has to provide that supply).
Intuitively, $\interaction_{R,S}(\phi)$ stores the entire interaction of $R$ with the neighborhood in a~concise way.
The following lemma formalizes this intuition:
\begin{lemma}
  \label{lem:domset-interaction-substitution}
  Let $R \subseteq V(I)$ and $S = N(R)$.
  Assume $\phi$ is a~correct solution to $I$ and $\psi$ is a~locally correct solution on $R$ so that $\interaction_{R, S}(\phi|_R) = \interaction_{R, S}(\psi)$.
  Then, the following valuation $\phi'$ of $V(I)$:
  \[ \phi'(u) = \begin{cases}
     \psi(u) & \text{if }u \in R, \\
     \phi(u) & \text{if }u \notin R
  \end{cases} \]
  is also a~correct solution to $I$.
\end{lemma}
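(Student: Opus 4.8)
The plan is to verify the defining property of a correct solution to $I$ edge by edge: for every edge $e \in E(I)$ with endpoints $u,v$, we must check that $e \in \demand_u(\phi'(u))$ implies $e \in \supply_v(\phi'(v))$, and symmetrically. I would split the argument into three cases according to how the endpoints of $e$ meet $R$.

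First, if both $u,v \notin R$, then $\phi'(u)=\phi(u)$ and $\phi'(v)=\phi(v)$, so the required implications hold immediately because $\phi$ is a correct solution to $I$. Second, if both $u,v \in R$, then $\phi'(u)=\psi(u)$ and $\phi'(v)=\psi(v)$, so the implications hold because $\psi$ is locally correct on $R$ to $I$ (note that $e$ is an edge with both endpoints in $R$, which is exactly the class of edges controlled by local correctness on $R$). Third, the interesting case is $u \in R$, $v \notin R$ (the case $u \notin R$, $v \in R$ is symmetric): since $v$ is adjacent to $u \in R$ we have $v \in N(R) = S$, hence $e \in E(R,S)$ and the value $\interaction_{R,S}(\cdot)$ records information about $e$.

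For this third case I would unfold the hypothesis $\interaction_{R,S}(\phi|_R) = \interaction_{R,S}(\psi)$ at the coordinate $e$: it states that $e \in \supply_u(\psi(u)) \iff e \in \supply_u(\phi(u))$ and $e \in \demand_u(\psi(u)) \iff e \in \demand_u(\phi(u))$. Now suppose $e \in \demand_u(\phi'(u)) = \demand_u(\psi(u))$; then $e \in \demand_u(\phi(u))$, and since $\phi$ is correct, $e \in \supply_v(\phi(v)) = \supply_v(\phi'(v))$, as needed. Conversely suppose $e \in \demand_v(\phi'(v)) = \demand_v(\phi(v))$; since $\phi$ is correct, $e \in \supply_u(\phi(u))$, and by the interaction equality $e \in \supply_u(\psi(u)) = \supply_u(\phi'(u))$, as needed. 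This exhausts all edges, so $\phi'$ is a correct solution to $I$.

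\textbf{Main obstacle.} There is no genuine difficulty here; the only thing that needs care is bookkeeping: making sure the case split on the position of the endpoints of $e$ is exhaustive (in particular that an edge with one endpoint in $R$ necessarily has its other endpoint in $S$, so that the interaction function is actually defined on it), and tracking the two directions of the biconditionals in the definition of $\interaction_{R,S}$ so that each implication is derived from the correct one.
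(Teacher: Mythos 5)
Your proposal is correct and follows essentially the same argument as the paper's proof: a case split on whether the endpoints of an edge lie in $R$, with the two trivial cases handled by the correctness of $\phi$ and the local correctness of $\psi$, and the cross-boundary case handled by unfolding the interaction equality at the relevant edge. The only difference is presentational — you argue directly while the paper argues by contradiction — and the bookkeeping in both versions is identical.
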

\begin{proof}
  Assume otherwise; in this case, there exists an~edge $e \in E(G)$ between $u$ and $v$ so that $e \in \demand_v(\phi'(v))$, yet $e \notin \supply_u(\phi'(u))$.
  It cannot be that $u, v \in R$ since $\phi'|_R = \psi$ is locally correct on $R$; and also it cannot be that $u, v \notin R$ as $\phi'|_{V(I) \setminus R} = \phi|_{V(I) \setminus R}$ and $\phi$ is a~correct solution to $I$.
  Hence, exactly one of the endpoints of $e$ is in $R$.
  
  If $u \in R$, then $v \in S$.
  Since $\phi'(u) = \psi(u)$, we have that $e \notin \supply_u(\psi(u))$ and therefore  $\Supply \notin \interaction_{R,S}(\psi)(e)$.
  So also $\Supply \notin \interaction_{R,S}(\phi|_R)(e)$, implying $e \notin \supply_u(\phi(u))$.
  But $\phi'(v) = \phi(v)$, so $e \in \demand_v(\phi'(v))$ implies $e \in \demand_v(\phi(v))$.
  This is a~contradiction since $e$ witnesses that $\phi$ was not a~correct solution to $I$ in the first place.

  The other case is that $v \in R$ and $u \in S$. From $\phi'(v) = \psi(v)$ we have $e \in \demand_v(\psi(v))$, so $\Demand \in \interaction_{R,S}(\psi)(e)$, so $\Demand \in \interaction_{R,S}(\phi|_R)(e)$ and $e \in \demand_v(\phi(v))$. Also, since $\phi'(u) = \phi(u)$, then we have $e \notin \supply_u(\phi(u))$ -- a~contradiction as $e$ witnesses that $\phi$ was not a~correct solution to $I$.
%
\end{proof}

We are now ready to give the definition of a~compressed instance.


\begin{definition}
  Let $I = (G, D, \cost, \supply, \demand)$ be an~$(s,d)$-decent instance of \MinGeneralizedDom and $Y \subseteq V(I)$.
  Then, we say that a~\emph{compressed instance} is an~instance $I\{Y\} = (G', D', \cost', \supply', \demand')$ created from $I$ by the following series of steps:
  
  \begin{enumerate}
    \item Enumerate all connected components $C_1, \dots, C_\ell \in \cc(G \setminus Y)$.
    Assume that, for some $\ell_1 \in \{0, \dots, \ell\}$, the neighborhood of each component $C_1, \dots, C_{\ell_1}$ intersects $Y$, and the neighborhood of each component $C_{\ell_1 + 1}, \dots, C_{\ell}$ is disjoint from $Y$ (that is, each $C_{\ell_1 + 1}, \dots, C_{\ell}$ is a~connected component of $G$ not adjacent to $Y$).
    
    \item For each $i \in \{1, \dots, \ell_1\}$, let $S_i = N(C_i) \subseteq Y$. Collapse $C_i$ to a~single vertex $u_i$ with the domain
    \[ D_{u_i} \coloneqq \left(2^{\{\Supply, \Demand\}}\right)^{E(C_i, S_i)}. \]
    
    For each $x \in D_{u_i}$, we set $\cost_{u_i}(x)$ to the minimum cost of a~valuation $\phi \in \prod_{u \in C_i} D_u$, locally correct on $C_i$, so that $x = \interaction_{C_i, S_i}(\phi)$.
    If no such valuation exists, we set $\cost_{u_i}(x) = +\infty$.
    
    Finally, set the supply and the demand of any given state $x$ as follows:
    \begin{align*}
    \supply'_{u_i}(x) &= \{e \in E(C_i, S_i) \,\colon\, \Supply \in x(e)\}, \\
    \demand'_{u_i}(x) &= \{e \in E(C_i, S_i) \,\colon\, \Demand \in x(e)\}.
  \end{align*}
  
  \item Collapse $R \coloneqq C_{\ell_1 + 1} \cup \dots \cup C_\ell$ to a~single isolated vertex $u_{\ominus}$ with the one-state domain $D_{u_{\ominus}} = \{\ominus\}$; the cost $\cost_{u_{\ominus}}(\ominus)$ is defined as the minimum-cost valuation of $R$ that is locally correct on $R$ (note that a~valuation locally correct on $R$ with finite cost always exists).
    \end{enumerate}
\end{definition}

Intuitively, each component $C_i \in \cc(G \setminus Y)$ with nonempty neighborhood $S$ on $Y$ is collapsed into a single vertex $u_i$; note that, in contrast to the compression for Independent Set, different components with the same neighborhood $S$ are collapsed to separate vertices.
The collapsed vertex encodes in its state all possible interactions of $C_i$ with $S_i$.
Note that by \cref{lem:domset-interaction-substitution}, it is enough to record, for each possible interaction, the minimum-cost valuation of $C_i$ with this interaction on $S_i$.
On the other hand, the components not adjacent to $Y$ do not interact with $Y$ in any way, so the union $R$ of these components can be replaced with a~single vertex representing the minimum-cost locally correct valuation of $R$.

Observe that the construction above is correct in the sense that in $I\{Y\}$, every vertex $u$ has a~state $s_u$ with finite cost and full supply $\supply_u(s_u) = \delta(u)$: for $u \in Y \cup \{\ominus\}$ this is obvious.
Then, for a~vertex $u_i$ with $i \in \{1, \dots, \ell_1\}$, there exists a~finite-cost valuation $\phi_i$, locally correct on $C_i$, mapping each vertex $u \in C_i$ to $s_u$.
Moreover, for this valuation, we have $\Supply \in \interaction_{C_i, S_i}(\phi_i)(e)$ for all $e \in E(C_i, S_i)$.

We now prove a~series of properties of compressed instances.
We begin by proving that compression does not change the minimum cost of an~instance:

\begin{lemma}
  For every instance $I$ of \MinGeneralizedDom and $Y \subseteq V(I)$, the instances $I$ and $I\{Y\}$ have the same minimum cost of a~solution.
\end{lemma}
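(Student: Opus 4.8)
The plan is to prove the two inequalities $\OPT \le \OPT'$ and $\OPT \ge \OPT'$ separately, where $\OPT$ and $\OPT'$ denote the minimum costs of solutions to $I$ and to $I\{Y\}$; this mirrors the proof of \cref{lem:compression-preserves-revenue}, with the product-of-domains bookkeeping there replaced by the $\interaction$ function. Throughout I would use that $V(I)$ is partitioned into $Y$, the components $C_1,\dots,C_{\ell_1}$ of $G\setminus Y$ whose neighborhood meets $Y$, and the set $R=C_{\ell_1+1}\cup\dots\cup C_\ell$ of components disjoint from $Y$; and that in $I\{Y\}$ every vertex of $Y$ keeps its domain, cost, supply and demand unchanged, with only the edges into the $C_i$'s rerouted to the collapsed vertices $u_i$.

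For $\OPT' \le \OPT$, I would take an optimum solution $\phi$ to $I$ and define $\phi'$ on $I\{Y\}$ by $\phi'|_Y = \phi|_Y$, by $\phi'(u_i) = \interaction_{C_i,S_i}(\phi|_{C_i})$ for each $i\le\ell_1$ (a legal state, since $\phi|_{C_i}$ is locally correct on $C_i$), and by $\phi'(u_\ominus)=\ominus$. To see $\phi'$ is valid, the only edges to check are those joining $u_i$ to $S_i$, and for such an edge $e$ with endpoint $w\in C_i$, membership of $e$ in $\demand'_{u_i}(\phi'(u_i))$ (resp.\ $\supply'_{u_i}(\phi'(u_i))$) unwinds, via the definitions of $\demand'_{u_i},\supply'_{u_i}$ and of $\interaction$, to membership of $e$ in $\demand_w(\phi(w))$ (resp.\ $\supply_w(\phi(w))$), so validity of $\phi'$ on $e$ follows from validity of $\phi$ on $e$; the vertex $u_\ominus$ is isolated. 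Since $\cost'_{u_i}(\phi'(u_i)) \le \cost(\phi|_{C_i})$ and $\cost'_{u_\ominus}(\ominus) \le \cost(\phi|_R)$ by the defining minimality of these costs, summing over the partition yields $\cost(\phi') \le \cost(\phi) = \OPT$.

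For the reverse inequality I would take an optimum solution $\psi'$ to $I\{Y\}$ — which exists and has finite cost, as mapping every vertex to its distinguished state $s_u$ is valid and finite. I set $\psi|_Y=\psi'|_Y$; for each $i\le\ell_1$, finiteness of $\cost'_{u_i}(\psi'(u_i))$ gives a locally correct valuation $\zeta_i$ of $C_i$ with $\interaction_{C_i,S_i}(\zeta_i)=\psi'(u_i)$ of that cost, and I put $\psi|_{C_i}=\zeta_i$; and I put $\psi|_R$ equal to a minimum-cost locally correct valuation of $R$. Every edge of $I$ lies inside $Y$, inside some $C_i$, inside $R$, or between some $C_i$ and $S_i$ (distinct components of $G\setminus Y$ are non-adjacent, and $R$ is non-adjacent to $Y$); the first three cases follow from local correctness of the pieces, and the last reduces again, through the definitions of $\interaction$, $\demand'_{u_i}$, $\supply'_{u_i}$, to validity of $\psi'$ on that edge of $G'$. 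Summing over the partition gives $\cost(\psi)=\cost(\psi')=\OPT'$, hence $\OPT \le \OPT'$.

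The part demanding care — essentially \cref{lem:domset-interaction-substitution} specialized to this setting — is the correctness check across the $Y$–$u_i$ interface in both directions: one must make fully explicit that ``$e$ carries $\Demand$ in $\interaction_{C_i,S_i}(\cdot)$'' is the same statement as ``the $C_i$-endpoint of $e$ demands $e$ under the underlying valuation of $C_i$'' (and likewise for $\Supply$), and match this against the supply/demand of the $Y$-side endpoint. No new idea beyond a careful case analysis on where the endpoints of a hypothetical violated edge sit is needed; the only other point I would state explicitly is that $\OPT'<\infty$, so the witnessing valuations $\zeta_i$ in the second direction genuinely exist.
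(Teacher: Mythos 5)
Your proposal is correct and follows essentially the same route as the paper: both directions are proved by the same constructions (pushing a solution of $I$ forward via $\interaction_{C_i,S_i}$ and $\ominus$, and pulling back an optimum of $I\{Y\}$ via minimum-cost locally correct witness valuations of the components and of $R$), with the cost comparison coming from the defining minimality of the compressed costs. Your explicit case analysis at the $Y$--$u_i$ interface and the remark that $\OPT'<\infty$ merely spell out what the paper's proof leaves as a ``straightforward case study.''
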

\begin{proof}
  Let $\OPT$ be a~minimum cost of a~solution to $I = (G, D, \cost, \supply, \demand)$ and $\OPT'$ be a~minimum cost of a~solution to $I\{Y\} = (G', D', \cost', \supply', \demand')$.
  Also let $\alpha \,\colon\,V(I) \to V(I\{Y\})$ be a~function mapping vertices $v \in V(I)$ onto elements of $V(I\{Y\})$ to which $v$ was collapsed during the compression.
  Note that $\alpha(v) = v$ for $v \in Y$.
  
  Let $\phi$ be any~minimum-cost solution to $I$.
  We construct a~correct solution $\phi'$ to $I\{Y\}$ as follows.
  \begin{itemize}
    \item If $u \in Y$, then set $\phi'(u) \coloneqq \phi(u)$.
    \item If $u \notin Y$ and $u \neq u_{\ominus}$, then let $C = \alpha^{-1}(u) \in \cc(G \setminus Y)$, $S = N_G(C) \subseteq Y$, and set $\phi'(u) \coloneqq \interaction_{C, S}(\phi|_C)$.
    \item Set $\phi'(u_{\ominus}) = \ominus$.
  \end{itemize}
  
  The correctness of $\phi'$ follows easily from the definition.
  Note that for every $u \in Y$, we naturally have $\cost'_u(\psi(u)) = \cost_u(\phi(u))$.
  For $u \notin Y$ with $u \neq u_{\ominus}$, let $C = \alpha^{-1}(u) \in \cc(G \setminus Y)$ and $S = N_G(C)$.
  Since $\phi|_C$ is a~locally correct valuation of $C$ with interaction $\phi'(u)$ and $\cost'_u(\phi'(u))$ is the minimum cost of such a~valuation, we have $\cost'_u(\psi(u)) \leq \cost_u(\phi|_C)$.
  By the same token, $\phi|_{\alpha^{-1}(u_{\ominus})}$ is a~locally correct valuation of $\alpha^{-1}(u_{\ominus})$ and $\cost'_{u_{\ominus}}(\ominus)$ is the minimum cost of such a~valuation; thus, $\cost'_{u_{\ominus}}(\ominus) \leq \cost_u(\phi|_{\alpha^{-1}(u_{\ominus})})$.
  Hence the cost of $\phi'$ in $I\{Y\}$ is at most equal to the cost of $\phi$ in $I$.
  Thus $\OPT' \le \OPT$.
  
  \smallskip
  Now let $\psi'$ be any minimum-cost solution to $I\{Y\}$.
  Recover a~valid solution $\psi$ to $I$ as follows.
  \begin{itemize}
    \item If $u \in Y$, then set $\psi(x) \coloneqq \psi'(x)$.
    \item If $u \notin Y$ and $u \neq u_{\ominus}$, then let $C = \alpha^{-1}(u) \in \cc(G \setminus Y)$ and $S = N(C) \subseteq Y$.
    Let $\zeta_C \in \prod_{v \in C} D_v$ be the minimum-cost locally correct valuation of $C$ with $\interaction_{C, S}(\zeta_C) = \psi'(u)$, and set $\psi(x) \coloneqq \zeta_C(x)$ for $x \in C$. 
    \item Similarly, if $u = u_{\ominus}$, then let $R = \alpha^{-1}(u_{\ominus})$, and let $\zeta_R \in \prod_{v \in R}$ be the minimum-cost locally correct valuation of $R$; set $\psi(x) \coloneqq \zeta_R(x)$ for $x \in R$.
  \end{itemize}
  It can be verified through a~straightforward case study that $\phi$ is a~valid solution to $I$, and $\phi$ has the same total cost as $\psi$ in $I\{Y\}$.
  Therefore $\OPT \le \OPT'$.
\end{proof}

We proceed to show that, under suitable conditions, compressed instances are decent and belong to $\Cc$ if the original instance belonged to $\Cc$:

\begin{lemma}
  If $I \in \Cc$ and $Y \subseteq V(I)$, then $I\{Y\} \in \Cc$.
\end{lemma}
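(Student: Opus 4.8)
The plan is to mirror the proof of \cref{lem:compression-stays-in-class} almost verbatim: I would argue that the Gaifman graph $G'$ of $I\{Y\}$ is, up to a single harmless isolated vertex, a minor of $G$, and then conclude from the fact that $\Cc$ is minor-closed.

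Concretely, recall how $G'$ is obtained from $G$. The vertices of $Y$ and the edges of $G[Y]$ are kept intact. Each connected component $C_i$ of $G \setminus Y$ whose neighborhood meets $Y$ (the components $C_1,\dots,C_{\ell_1}$) is collapsed to a single vertex $u_i$; since $G[C_i]$ is connected, this collapse is exactly the contraction of $G[C_i]$ to one vertex, which is a legal minor operation. It may create parallel edges between $u_i$ and $S_i = N_G(C_i)$, but this is irrelevant: the statement $G \in \Cc$ concerns the underlying simple graph, and minors are insensitive to edge multiplicities. The remaining components $C_{\ell_1+1},\dots,C_\ell$ are, by assumption, connected components of $G$ with no neighbor in $Y$, and their union $R$ is collapsed to a single isolated vertex $u_\ominus$; when $R \neq \emptyset$ this is again realized by minor operations, namely: contract one of the components $C_{\ell_1+1},\dots,C_\ell$ to the vertex $u_\ominus$ and delete all vertices of the remaining such components. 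Since these components are pairwise non-adjacent and non-adjacent to $Y \cup C_1 \cup \dots \cup C_{\ell_1}$, the result of these operations is precisely $G'$. Hence $G'$ is a minor of $G$, so $G' \in \Cc$.

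The only point that calls for a separate word --- and, in my view, the only place where any care is needed --- is the degenerate case $R = \emptyset$, i.e.\ when every connected component of $G \setminus Y$ has a neighbor in $Y$ (for instance when $V(G) = Y$). Then $G'$ is $G[Y]$ possibly augmented by one extra isolated vertex $u_\ominus$, which need not literally be a minor of $G$. Reading ``collapse the empty union'' as a no-op (so that $u_\ominus$ is simply not created), or else invoking the assumption --- satisfied by all classes of interest, such as planar graphs --- that $\Cc$ is closed under adding isolated vertices, disposes of this case as well. The remaining minor argument is entirely routine.
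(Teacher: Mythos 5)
Your proof is correct and follows the same contraction/deletion argument that the paper uses (the paper's own proof is a one-line cross-reference to \cref{lem:compression-stays-in-class}). You are more careful than the paper, though: the degenerate case $R=\emptyset$ is a real subtlety, since the definition of $I\{Y\}$ always introduces the isolated vertex $u_\ominus$, and in that case the Gaifman graph of $I\{Y\}$ is indeed \emph{not} literally a minor of the Gaifman graph of $I$; the paper's claim needs exactly the convention or observation you supply (omit $u_\ominus$ when $R=\emptyset$, or note that the relevant apex-minor-free classes are closed under adding an isolated vertex).
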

\begin{proof}
  As in \cref{lem:compression-stays-in-class}, we can show that the Gaifman graph of $I\{Y\}$ is a~minor of the Gaifman graph of $I$.
  Thus, $I\{Y\} \in \Cc$.
\end{proof}

\begin{lemma}
  Let $s, d, t \ge 1$.
  Consider an~$(s,d)$-decent instance $I = (G, D, \cost, \supply, \demand)$ and $Y \subseteq V(I)$.
  If for every connected component $C \in \cc(G \setminus Y)$, we have $|N(C)| \le t$, then
  $I\{Y\} = (G', D', \cost', \supply', \demand')	$ is $(st, d + 4^{st})$-decent.
\end{lemma}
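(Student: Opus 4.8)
The plan is to check the two defining conditions of $(st,d+4^{st})$-decency — $(st,d+4^{st})$-meagerness and state-monotonicity — for every vertex of $I\{Y\}$, handling separately the three kinds of vertices that occur: the vertices of $Y$ (untouched by the compression except for edge rerouting), the vertices $u_i$ obtained by collapsing a component $C_i \in \cc(G \setminus Y)$ with $N(C_i)\cap Y \neq \emptyset$, and the single vertex $u_\ominus$.

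For meagerness I would argue as follows. A vertex $v \in Y$ satisfies $\deg_{G'}(v) = \deg_G(v) \le s \le st$, since collapsing only reroutes, but neither creates nor destroys, edges incident to $v$; and its domain is unchanged, so $|D'_v| = |D_v| \le d$. The vertex $u_\ominus$ is isolated and has a one-element domain. For $u_i$, writing $S_i = N(C_i)$, every edge at $u_i$ in $G'$ is a rerouted edge of $E(C_i, S_i)$, so $\deg_{G'}(u_i) = |E(C_i, S_i)| \le \sum_{v \in S_i} \deg_G(v) \le s\,|S_i| \le st$, using the hypothesis $|N(C)| \le t$; and $|D'_{u_i}| = \bigl|(2^{\{\Supply,\Demand\}})^{E(C_i,S_i)}\bigr| = 4^{|E(C_i,S_i)|} \le 4^{st}$. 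Thus in every case $\deg_{G'}(\cdot) \le st$ and the domain has size at most $d + 4^{st}$.

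For state-monotonicity, the case $v \in Y$ is immediate: the cost function of $v$ is unchanged and its supply and demand functions are transported along the canonical bijection between the edges of $G$ and the edges of $G'$ incident to $v$, so the state combination witnessing state-monotonicity of $v$ in $I$ still works in $I\{Y\}$. The vertex $u_\ominus$ has a single state $\ominus$ of finite nonnegative cost and empty supply and demand, and the combination of $\ominus$ with itself is $\ominus$. The substantial case is $u_i$; given an ordered pair $x_1, x_2 \in D'_{u_i}$, I would split into two subcases. If $\cost'_{u_i}(x_1) = +\infty$ or $\cost'_{u_i}(x_2) = +\infty$, I define the combination $x$ directly by declaring, for each $e \in E(C_i, S_i)$, that $\Supply \in x(e) \iff \Supply \in x_1(e)\cup x_2(e)$ and $\Demand \in x(e) \iff \Demand \in x_1(e)$; then $\supply'_{u_i}(x) = \supply'_{u_i}(x_1) \cup \supply'_{u_i}(x_2)$ and $\demand'_{u_i}(x) = \demand'_{u_i}(x_1)$ straight from the definitions of the supply and demand of a collapsed vertex, while the cost inequality is vacuous. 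If both costs are finite, I pick locally correct valuations $\phi_1, \phi_2$ of $C_i$ realizing $x_1, x_2$ with $\cost(\phi_j) = \cost'_{u_i}(x_j)$, and let $\phi$ be their pointwise combination, i.e.\ $\phi(v)$ is the combination of $\phi_1(v)$ with $\phi_2(v)$ granted by state-monotonicity of $v$ in $I$. Exactly as in the proof of \cref{cl:domset-baker-contradiction}, the inclusion $\demand_v(\phi(v)) \subseteq \demand_v(\phi_1(v))$ and the equality $\supply_v(\phi(v)) = \supply_v(\phi_1(v)) \cup \supply_v(\phi_2(v))$ show that $\phi$ is again locally correct on $C_i$, and they also give $\cost(\phi) \le \cost(\phi_1) + \cost(\phi_2)$. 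Setting $x \coloneqq \interaction_{C_i,S_i}(\phi)$, the same two facts yield $\supply'_{u_i}(x) = \supply'_{u_i}(x_1) \cup \supply'_{u_i}(x_2)$ and $\demand'_{u_i}(x) \subseteq \demand'_{u_i}(x_1)$, and since $\phi$ is a competitor in the minimisation defining $\cost'_{u_i}(x)$ we get $\cost'_{u_i}(x) \le \cost(\phi) \le \cost'_{u_i}(x_1) + \cost'_{u_i}(x_2)$.

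I expect the only real obstacle to be the care needed in the $u_i$ subcase: one must remember that a collapsed state can have cost $+\infty$ (forcing the two-subcase split above), and one must verify that the pointwise combination of two locally correct valuations of $C_i$ is still locally correct — which is precisely where state-monotonicity of the original vertices is invoked, in the same manner as in \cref{cl:domset-baker-contradiction}. Everything else, in particular the meagerness bounds, is routine once one notes that collapsing $C_i$ converts the at most $s$ edges at each of the at most $t$ vertices of $N(C_i)$ into the entire edge set incident to $u_i$.
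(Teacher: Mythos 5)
Your proof is correct and follows essentially the same decomposition as the paper's: handle vertices of $Y$ by observing their incident edges and domains are unchanged, handle $u_\ominus$ trivially, bound $|E(C_i,S_i)|\le s\,|S_i|\le st$ for the meagerness of collapsed vertices, and establish state-monotonicity of $u_i$ by pointwise combination of two locally correct witnesses when both costs are finite. The one place you differ is the $+\infty$ subcase: the paper there takes $x$ to be the constant-$\{\Supply\}$ state, which gives $\supply'_{u_i}(x)=\delta(u_i)$ and hence only $\supply'_{u_i}(x)\supseteq\supply'_{u_i}(x_1)\cup\supply'_{u_i}(x_2)$, not the equality the definition literally asks for; your choice of $x$ (pointwise union of supplies, demand of $x_1$) achieves the equality exactly, so your treatment is in fact slightly more careful than the paper's on that point, while being harmless for all downstream uses of state-monotonicity, which rely only on the $\supseteq$ direction.
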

\begin{proof}
  First observe that the set of edges incident to vertices of $Y$ does not change under the compression; thus $(s,d)$-meager vertices of $Y$ remain $(s,d)$-meager.
  For the same reason, each state-monotonous vertex of $Y$ remains such.
  
  Now consider a~vertex $u \in I\{Y\} \setminus (Y \cup \{u_{\ominus}\})$ created as a~result of a~collapse of a~component $C \in \cc(G \setminus Y)$ with neighborhood $S \subseteq Y$.
  By the definition of $u$, we see that $|D_u| = 4^{|E(C, S)|}$.
  Observe that $|E(C, S)| \leq |S| \cdot s$ as every vertex of $S$ is $(s,d)$-meager and thus is incident to at most $s$ edges.
  Also, $|S| \leq t$ from the assumption.
  Hence, $|D_u| \leq 4^{st}$.
  Also, it is easy to verify that the degree of $u$ in $I\{Y\}$ is equal to $|E(C, S)| \leq st$.
  
  It remains to verify that $u$ is state-monotonous.
  Recall that $D'_u = \left(2^{\{\Supply, \Demand\}}\right)^{|E(C, S)|}$.
  Consider two states $x_1, x_2 \in D'_u$, aiming to prove that there exists a~combination $x$ of $x_1$ with $x_2$.
  If $\cost'_u(x_1) + \cost'_u(x_2) = +\infty$, then it is enough to put as $x$ the constant function always returning $\{\Supply\}$; i.e., the state $x$ for which $\supply'_u(x) = \delta_{G'}(u)$ and $\demand'_u(x) = \emptyset$.
  Otherwise, there exist locally correct valuations $\phi_1$, $\phi_2$ of $C$, of cost exactly $\cost'_u(x_1)$ and $\cost'_u(x_2)$, respectively, so that $\interaction_{C, S}(\phi_1) = x_1$ and $\interaction_{C, S}(\phi_2) = x_2$.
  Construct a~new valuation $\phi$ of $C$ as follows: for each $v \in C$, let $\phi(v)$ be the combination of $\phi(x_1)$ with $\phi(x_2)$, and let $x \coloneqq \interaction_{C, S}(\phi)$.
  The following claims are straightforward and follow from the verification with the definitions.
  \begin{claim}
    $\phi$ is locally correct on $C$ and has cost at most $\cost'_u(x_1) + \cost'_u(x_2)$.
  \end{claim}
  \begin{claim}
    Let $e \in E(C, S)$. Then $\Supply \in x(e)$ if and only if $\Supply \in x_1(e)$ or $\Supply \in x_2(e)$. Moreover, if $\Demand \in x(e)$, then $\Demand \in x_1(e)$.
  \end{claim}
  It follows that $x$ is a~combination of $x_1$ with $x_2$.
  Since $x_1$, $x_2$ were chosen arbitrarily, we conclude that $u$ is state-monotonous.
  
  Finally, the vertex $u_{\ominus}$ is isolated and contains only one state in its domain; hence it is state-monotonous and $(1,1)$-decent.
\end{proof}

\subsection{Dynamic maintenance of compressions}

We now introduce the analog of \cref{lem:indset-dynamic-compression} in the setting of Generalized Domination.
Note that the proof of the following lemma follows closely that of \cref{lem:indset-dynamic-compression}, however we provide it here in full for completeness.

\begin{lemma}
  \label{lem:domset-dynamic-compression}
  Let $w, n, s, d \in \N$.
  One can construct a~data structure that supports the following operations:
  \begin{itemize}
    \item \textbf{Initialize} the data structure with an~$n$-vertex $(s,d)$-decent instance of \MinGeneralizedDom $I = (G, D, \cost, \supply, \demand)$, where $G \in \Cc$ and $\tw(G) \leq w$.
    \item \textbf{Update} the instance $I$ using one of the following update types: $\AddVertex$, $\AddEdge$, $\RemoveEdge$, $\UpdateCost$.
    It is guaranteed that after the update, we have that $G \in \Cc$ and the instance $I$ after the update is $(s,d)$-decent.
  \end{itemize}
  The initialization of the data structure is performed in time $2^{\Oh(sw)} \cdot d \cdot n \log n$.
  Afterwards, the data structure additionally maintains an~instance $I^\star = (G^\star, D^\star, \cost^\star, \supply^\star, \demand^\star)$ of \MinGeneralizedDom with the following properties:
  \begin{enumerate}[label=(\alph*)]
    \item \label{item:dom-compress-in-class} $G^\star \in \Cc$;
    \item \label{item:dom-compress-revenue} the minimum cost of a~solution to $I^\star$ is equal to that of $I$;
    \item \label{item:dom-compress-domain} $I^\star$ is $(\Oh(sw), d + 4^{\Oh(sw)})$-decent;
    \item \label{item:dom-compress-size} after a~sequence of $t \geq 0$ updates to $I$, we have $|V(I^\star)| \leq t \cdot s \cdot w^{\Oh(1)} \log n$.
    Moreover, on each update to $I$, the instance $I^\star$ can be updated in time $2^{\Oh(sw)} \cdot \log n$ and causes at most $s \cdot w^{\Oh(1)} \log n$ updates to $I^\star$.
  \end{enumerate}
\end{lemma}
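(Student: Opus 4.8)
The plan is to mirror the proof of \cref{lem:indset-dynamic-compression} almost line by line, replacing the independent-set dynamic-programming tables with tables that record, for each subtree of an elimination forest, the cheapest locally correct valuation realizing each possible \emph{interaction} pattern on the boundary edges (in the sense of $\interaction_{R,S}$).

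On initialization I would compute a width-$O(w)$ tree decomposition of $G$ via Korhonen's algorithm and convert it, using \cref{lem:td-decomp}, into an elimination forest $F$ of $G$ of height $O(w\log n)$ with $|\Reach_F(u)| = O(w)$ and $G[\desc_F[u]]$ connected for every $u$. I would then precompute, for each $u \in V(F)$ and each pattern $x \in \left(2^{\{\Supply,\Demand\}}\right)^{E(\desc_F[u],\,\Reach_F(u))}$, the value $T[u][x]$ equal to the minimum cost of a valuation of $\desc_F[u]$ that is locally correct on $\desc_F[u]$ and whose interaction with $\Reach_F(u)$ equals $x$ (and $+\infty$ if there is none). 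Since every vertex of $\Reach_F(u)$ has degree at most $s$ and $|\Reach_F(u)| = O(w)$, there are at most $4^{O(sw)} = 2^{O(sw)}$ such patterns. As $\Reach_F(c) \subseteq \Reach_F(u)\cup\{u\}$ for every child $c$ of $u$ and there is no edge between distinct child-subtrees of $u$, $T[u][\cdot]$ assembles from the $T[c][\cdot]$ of the children by a straightforward bottom-up DP that branches on the state of $u$ (at most $d$ choices) and, for each child, minimizes over the patterns compatible with $u$'s chosen state on the $u$--$\desc_F[c]$ edges; the resulting boundary pattern at $u$ is read off from $u$'s state together with the children's patterns. This costs $2^{O(sw)}\cdot d\cdot n$ time, so together with the decomposition the initialization is $2^{O(sw)}\cdot d\cdot n\log n$.

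Afterwards I would maintain exactly the data of the independent-set proof: the set $A$ of vertices added to $I$, the set $B$ of vertices touched by updates, the prefix-plus-cloud $Z = A\cup\anc_F[B\setminus A]$, and an instance $I^\star$ that coincides with $\Icur\{Z\}$ up to isolated single-state zero-cost vertices (so initially $I^\star = \Iinit\{\emptyset\}$, a single vertex). Properties \ref{item:dom-compress-in-class}, \ref{item:dom-compress-revenue} and \ref{item:dom-compress-domain} then follow from the three compression lemmas proved above (using that every component $C$ of $\Gcur\setminus Z$ is a subtree $\desc_F[u]$ hanging off an appendix of $Z$, so $|N(C)| = |\Reach_F(u)| = O(w)$). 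The analog of \cref{cl:growing-z} is literally the same: one update grows $Z$ by $O(w\log n)$ vertices of $F$. The analog of \cref{cl:add-vtx-compr} is where the domination-specific work sits: absorbing a single appendix $z$ into $Z$ means (i) deleting the collapsed vertex representing $\desc_F[z]$ — or, when $\desc_F[z]$ belonged to the empty-neighborhood vertex $u_\ominus$, instead subtracting $T[z][\emptyset]$ from $\cost_{u_\ominus}(\ominus)$; (ii) inserting $z$ with its (at most $s$) genuine edges to $\Reach_F(z)\subseteq Z$, carrying over its supply and demand; and (iii) inserting one fresh collapsed vertex per child $c$ of $z$, with domain $\left(2^{\{\Supply,\Demand\}}\right)^{E(\desc_F[c],\,\Reach_F(c))}$, cost function exactly the precomputed $T[c][\cdot]$ (valid because $\desc_F[c]$ and its incident edges are untouched while $c\notin Z$), supply and demand read off the interaction, and edges to $\Reach_F(c)\subseteq\Reach_F(z)\cup\{z\}$. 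Processing the $O(w\log n)$ appendices absorbed per update in order of increasing $F$-depth (so that $Z\setminus A$ stays a prefix) then yields the claimed $s\cdot w^{O(1)}\log n$ updates to $I^\star$ per update to $I$ and the size bound $|V(I^\star)|\le t\cdot s\cdot w^{O(1)}\log n$, all computed in $2^{O(sw)}\log n$ time.

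The step I expect to be the main obstacle is exactly the accounting in (iii): because the Generalized Domination compression does \emph{not} merge components with equal neighborhoods (unlike the independent-set compression, where \cref{cor:neighborhood-complexity-components} bounds the number of fresh vertices), I must argue that absorbing $z$ still produces only boundedly many fresh vertices and edges. The key point is that decency supplies a degree bound: since $G[\desc_F[z]]$ is connected and there is no edge between distinct child-subtrees of $z$, the vertex $z$ is adjacent to every one of its child-subtrees, hence has at most $\deg(z)\le s$ children in $F$; and the $O(sw)$ edges leaving $\desc_F[z]$ toward $\Reach_F(z)\cup\{z\}$ are partitioned among these children, so only $O(sw)$ new edges arise. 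The remaining steps — verifying that the interaction-DP transition is correctly stated (morally this is \cref{lem:domset-interaction-substitution}, which certifies that keeping only the cheapest valuation per interaction loses nothing), that $T[c][\cdot]$ survives updates, and that the emulated vertex deletions and the relieving/$\Clear$ bookkeeping around $u_\ominus$ preserve all four invariants — are routine adaptations of the independent-set argument.
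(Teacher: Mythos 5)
Your proposal is correct and follows essentially the same route as the paper's proof: the same elimination forest from \cref{lem:td-decomp}, the same interaction-indexed table $T[\cdot][\cdot]$, the same $A/B/Z$ invariants, and the same appendix-absorption procedure with the root case handled by adjusting the cost of the single vertex collapsing the $Z$-non-adjacent components. The only (harmless) deviation is in bounding the work per absorbed appendix: you bound the number of children of $z$ by $\deg(z)\le s$ via connectivity of $G[\desc_F[z]]$ and the absence of edges between child subtrees, whereas the paper bounds it by $\Oh(sw)$ using $\Reach_F(c)\subseteq\Reach_F(z)\cup\{z\}$ together with the degree bound on those $\Oh(w)$ vertices; both yield the stated $s\cdot w^{\Oh(1)}\log n$ updates and $2^{\Oh(sw)}\log n$ time per update.
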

\begin{proof}[Proof of \cref{lem:domset-dynamic-compression}]
  As in the case of the independent set, preprocess $G$ in time $2^{\Oh(w)} \cdot n \log n$ and produce an~elimination forest $F$ of $G$ with $V(F) = V(G)$ with the following properties:
  \begin{itemize}
    \item $F$ has height $\Oh(w \log n)$;
    \item for each $u \in V(F)$, we have $|\Reach_F(u)| = \Oh(w)$;
    \item for each $u \in V(F)$, the graph $G[\desc_F[u]]$ is connected.
  \end{itemize}  
  For convenience, for each $u \in V(G)$, let $C_u = \desc_F[u]$, $S_u = \Reach_F(u) = N(C_u)$ and $X_u = \left(2^{\{\Supply, \Demand\}}\right)^{E(C_u, S_u)}$.
  Observe that each element of $\interaction_{C_u, S_u}$ is an~element of $X_u$.
  Note that $|S_u| \le \Oh(w)$ and $I$ is $(s,d)$-decent, so $|E(C_u, S_u)| \leq \Oh(sw)$ and therefore $|X_u| \le 2^{\Oh(sw)}$.
  
  Observe also that the nodes in the elimination forest have bounded degrees:
  \begin{claim}
    \label{cl:domset-tree-bd-deg}
    Every node $u \in V(F)$ has at most $\Oh(sw)$ children in $F$.
  \end{claim}
  \begin{claimproof}
    Enumerate the children of $u$ in $F$: $c_1, \dots, c_t$.
    For each $c_i$, we have $\Reach_F(c_i) \subseteq \Reach_F(u) \cup \{u\}$.
    Each such $c_i$ is thus a~neighbor of some vertex in $\Reach_F(u) \cup \{u\}$.
    The claim follows since each vertex of $G$ has degree at most $s$ and that $|\Reach_F(u) \cup \{u\}| \leq \Oh(w)$.
  \end{claimproof}
  
  On initialization of the data structure, compute the following dynamic programming table:
  
  \begin{itemize}
    \item For each $u \in V(F)$ and every interaction $x \in X_u$, compute $T[u][x]$: the minimum possible cost of a~locally correct partial solution $\phi \in \prod_{x \in C_u} D_x$ such that $\interaction_{C_u, S_u}(\phi) = x$; or $+\infty$ if no such partial solution exists.
  \end{itemize}
  
  The table $T[\cdot][\cdot]$ has at most $n \cdot 2^{\Oh(sw)}$ states and can be computed in time $nd \cdot 2^{\Oh(sw)}$.
  Hence, the total preprocessing time is upper-bounded by $2^{\Oh(sw)} \cdot nd \log n$.
  Also observe that the table $T[\cdot][\cdot]$ suffices to compute the minimum-cost solution for $I$ at the time of initialization: the cost is equal to the sum over $T[r][\emptyset]$, ranging over all roots $r$ of trees of $F$.
  
  As before, in the beginning we set $A = B = Z = \emptyset$.\\
  Let $\Iinit = (\Ginit, D^{\mathrm{init}}, \cost^{\mathrm{init}}, \supply^{\mathrm{init}}, \demand^{\mathrm{init}})$ be the initial instance of \CSPs. In the sequel, by $\Icur = (\Gcur, D^{\mathrm{cur}}, \cost^{\mathrm{cur}}, \supply^{\mathrm{cur}}, \demand^{\mathrm{cur}})$ we denote the current snapshot of the instance  in the data structure; initially, $\Icur \coloneqq \Iinit$.
  
  Recall from the setting of the Independent Set that throughout the life of the data structure, we maintain the following invariants:
  \begin{itemize}
    \item $A = V(\Icur) \setminus V(\Iinit)$ is the set of vertices added to $I$ since the instantiation of the data structure.
    \item $B \subseteq V(\Icur)$ is the set of vertices that were part of any update to $I$ so far (i.e., $v \in B$ if $v$ was added to $I$, the cost of $v$ was changed, or an~edge incident to $v$ was added or removed).
    \item $Z = A \cup \anc_F[B \setminus A]$. In other words, $Z$ contains all vertices of $A$ and the ancestors in $F$ of all vertices of $B \setminus A$.
    \item $I^\star$ is equivalent to $\Icur\{Z\}$.
  \end{itemize}
  
  Now, the costs of the vertices in the compressed instance can be inferred from the entries of $T[\cdot][\cdot]$:
  \begin{itemize}
    \item The connected components of $\Gcur \setminus Z$ adjacent to $Z$ are exactly the components of the form $\desc_F[y]$, where $y \notin Z$ is not a~root of any tree in $F$ and $\parent_F(y) \in Z$.
    In the compression, for each $y$, we introduce one vertex $u_y \in I\{Z\} \setminus Z$ with the domain $D_{u_y} = X_y$ and the cost function $\cost_{u_y} \,\colon\, D_{u_y} \to \R_{\ge 0} \cup \{+\infty\}$, where $\cost_{u_y}(x) = T[y][x]$ for $x \in D_{u_y}$.
    \item The connected components of $\Gcur \setminus Z$ non-adjacent to $Z$ are exactly the components of the form $\desc_F[r]$, where $r \notin Z$ is a~root of a~tree in $F$.
    All these connected components are collapsed to a~single vertex $u_{\eps}$ with a~single state $\eps$, of cost that is the sum over $T[r][\emptyset]$ for all roots $r \notin Z$ of trees in $F$.
  \end{itemize}

  Recall the following claim from the setting of the Independent Set.
  Note that since the sets $A$, $B$, $Z$ are constructed as in that setting, the claim also holds here.  
  \begin{claim}
    \label{cl:domset-z-growth}
    On each update to $I$, the set $Z$ grows by at most $\Oh(w \log n)$ vertices.
  \end{claim}
  
  Two instances $I_1$, $I_2$ of \MinGeneralizedDom are \emph{equivalent} if the instances are isomorphic after the removal of all isolated vertices with the constant-zero cost function.
  We now sketch a~single update to $Z$ -- an~addition of an~appendix of $Z$.
  
  \begin{claim} \label{cl:domset-add-vtx-compr}
	Let two sets $Z_1, Z_2 \subseteq V(\Gcur)$ be such that $A \subseteq Z_1 \subseteq Z_2 \subseteq V(\Gcur)$, $|Z_2| = |Z_1| + 1$ and $Z_1 \setminus A$ and $Z_2 \setminus A$ are prefixes of $F$ with $Z_2 - Z_1 \subseteq V(F)$. Then, an instance equivalent to $\Icur\{Z_2\}$ can be obtained from an instance equivalent to $\Icur\{Z_1\}$ through a sequence of $sw^{\Oh(1)}$ updates.
	Moreover, this sequence can be computed in time $2^{\Oh(sw)}$.
  \end{claim}
  \begin{claimproof}
    Let $z \in V(F)$ be such that $Z_2 = Z_1 \cup \{z\}$.
	Let $\Cfrak_1$ be the set of connected components of $\Gcur \setminus Z_1$, $\Cfrak_2$ be the set of connected components of $\Gcur \setminus Z_2$ and $c_1, \ldots, c_t$ be the set of children of $z$ in $F$. We have that $\Cfrak_1 \setminus \Cfrak_2 = \{\Gcur[\desc_F[z]]\}$ and $\Cfrak_2 \setminus \Cfrak_1 = \{\Gcur[\desc_F[c_1]], \ldots, \Gcur[\desc_F[c_t]]\}$.
	
	Let $S = \Reach_F(z)$. The compressed instance $\Icur\{Z_1\}$ contains a vertex $v_S$ representing the union of all connected components of $\Gcur \setminus Z_1$ whose neighborhoods are exactly $S$; and $\Gcur[\desc_F[z]]$ is one of such components. To obtain $\Icur\{Z_2\}$ from $\Icur\{Z_1\}$, we need to:
	\begin{enumerate}
	  \item Remove the contribution of $\Gcur[\desc_F[z]] \in \Cfrak_1 \setminus \Cfrak_2$ from the compressed instance.
	  If $z$ is non-root, then the vertex $u_z$ corresponding to the collapse of $\desc_F[z]$ should be removed from the instance; this is emulated by removing all edges incident to $u_z$ and replacing the cost of $u_z$ with the zero function; this can be done using $\Oh(w)$ updates to the compressed instance.
	  Otherwise, $\desc_F[z]$ is collapsed to the special vertex $u_{\eps}$ (possibly with other connected components non-adjacent to $Z$).
      Then, $u_{\eps}$ has only one state $\eps$, and its cost is adjusted by simply subtracting $T[z][\emptyset]$.
	  \item Add the vertex $z$ to the compressed instance.
	  Since the set of neighbors of $z$ in $Z_1$ is exactly $\Reach_F(z)$, this requires one vertex addition and $\Oh(w)$ edge additions.
	  \item Include the contribution of the connected components $\Gcur[\desc_F[c_1]], \dots, \Gcur[\desc_F[c_t]] \in \Cfrak_2 \setminus \Cfrak_1$ in the compressed instance.
	  Since $t \leq \Oh(sw)$ by \cref{cl:domset-tree-bd-deg}, we simply iterate all components and for each, we add the corresponding vertex $u_{c_i}$ to the instance (with the costs of the states sourced from $T$), along with at most $\Oh(w)$ edges incident to each vertex.
	  Thus, the number of updates to the compressed instance is $\Oh(sw^2)$.
	  For each fresh vertex, we iterate over all $|X_{c_i}| = 2^{\Oh(sw)}$ states $u_{c_i}$ can attain, so the total time of the update is $\Oh(sw^2) \cdot 2^{\Oh(sw)} = 2^{\Oh(sw)}$. \hfill\qedhere
	\end{enumerate}
  \end{claimproof}
  
  As before, the implementation of the data structure directly follows from \cref{cl:domset-z-growth,cl:domset-add-vtx-compr}.
  Since on each update to $\Icur$, the set $Z$ is updated $\Oh(w \log n)$ times, the total number of updates performed on $I^\star$ is $\Oh(w \log n) \cdot sw^{\Oh(1)} = sw^{\Oh(1)} \log n$. Therefore, the total update time is $2^{\Oh(sw)} \log n$.
\end{proof}


\subsection{Full algorithm}

We are now ready to give the exposition of the data structure proving \cref{lem:domset-general-main}.
The rest of this section describes the data structure $\D^{\mathrm{main}}$ maintaining a~$(s, d)$-decent dynamic instance of \MinGeneralizedDom $I^{\mathrm{main}} = (G^{\mathrm{main}}, D^{\mathrm{main}}, \cost^{\mathrm{main}}, \supply^{\mathrm{main}}, \demand^{\mathrm{main}})$.
Similarly to the case of \CSP, we fix an~integer $L \in \N$ and set $k \coloneqq \left\lceil L/\delta \right\rceil$.
We construct a~recursive, $L$-level data structure comprised of auxiliary data structures maintaining subinstances of \MinGeneralizedDom; the only data structure at level $L$ maintains $I^{\mathrm{main}}$, and each data structure $\D$ at level $q \in \{2, 3, \dots, L\}$ stores a~collection of $k$ children data structures at level $q - 1$.

A~data structure $\D$ at level $q$ maintaining an~instance $I = (G, D, \cost, \supply, \demand)$ preserves the following invariants:

\begin{enumerate}[label=(I\arabic*)]
  \item \label{item:domset-struct-size} $G \in \Cc$ and $|V(G)| \leq n^{q / L}$;
  \item \label{item:domset-struct-domain} $I$ is $(\shat(q), \dhat(q))$-decent, for some functions $\shat$, $\dhat$ to be exactly specified later;
  \item \label{item:domset-struct-sol} $\D$ maintains a~nonnegative real $p$ satisfying $(1 - \delta \cdot \frac{q}{L})\OPT \le p \le \OPT$, where $\OPT$ is the minimum cost of a~solution to $I$.
\end{enumerate}

Each data structure at level $1$ maintains an~instance $I$ of size at most $n^{1 / L}$ and computes an~approximate lower bound on the cost of the solution within the $\frac{\delta}{L}$ fraction of $\OPT$, by rerunning the Baker's technique on each update.
By \cref{lem:domset-static-baker}, each update to such a~data structure takes time $n^{1/L} \cdot \shat(1) \cdot \dhat(1)^{\Oh(L/\delta)}$.

Now consider a~data structure $\D$ at level $q \ge 2$ maintaining a~$(\widehat{s}(q), \widehat{d}(q)$)-decent instance $I = (G, D, \cost, \supply, \demand)$ of \MinGeneralizedDom. We define a~variable $\Icur = (\Gcur, D^{\mathrm{cur}}, \cost^{\mathrm{cur}}, \supply^{\mathrm{cur}}, \demand^{\mathrm{cur}})$ tracking the current snapshot of $I$.
The lifetime of $\D$ is partitioned into \emph{epochs}: sequences of $\tau_q$ updates to $I$, with $\tau_q$ to be specified later.
The first epoch begins when $\D$ is initialized; and a~new epoch begins each time $\D$ processes $\tau_q$ updates to $I$.

Let $\Iold = (\Gold, D^{\mathrm{old}}, \cost^{\mathrm{old}}, \supply^{\mathrm{old}}, \demand^{\mathrm{old}})$ be the snapshot of $I$ at the start of each epoch; that is, at the start of an~epoch, we set $\Iold \coloneqq \Icur$.
We partition $V(\Gold)$ into layers $L_0, \dots, L_{4k-1}$ as follows: if $\Gold$ is connected, choose a~vertex $r \in V(G)$ and assigning a~vertex $v \in V(G)$ to $L_j$ if and only if $\mathsf{dist}_{\Gold}(r, v) \equiv j \mod{4k}$.
If $\Gold$ is disconnected, produce a~partitioning into layers for each connected components, and let $L_j$ be the union over $j$th layers for each connected component.

We now define pairwise disjoint dynamic sets $V_0, \dots, V_{k - 1}$.
For each $i \in \{0, \dots, k - 1\}$, let $V^{\mathrm{old}}_i \coloneqq L_{4i+1} \cup L_{4i+2}$ be the initial contents of $V_i$.
Thanks to this choice, all $V_i$-cleared subinstances of $I$ have small treewidth:

\begin{lemma}
  For each $i \in \{0, \dots, k - 1\}$, the treewidth of the Gaifman graph of $\Clear(\Iold; V_i)$ is bounded by $\Oh(k)$.
\end{lemma}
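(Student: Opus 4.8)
The plan is to reuse, essentially verbatim, the treewidth sub-argument already carried out inside the proof of \cref{lem:domset-static-baker}, now applied to the snapshot $\Iold$ and to the set $V_i$ (which here denotes its initial contents $L_{4i+1}\cup L_{4i+2}$). First I would observe that, by the definition of $\Clear$, the Gaifman graph of $\Clear(\Iold;V_i)$ is just $\Gold$ with every edge having \emph{both} endpoints inside $V_i$ deleted — the relieving step rewrites demand functions but leaves the edge set untouched — so this graph, call it $G_i$, is a subgraph of $\Gold$ on the same vertex set. Hence it suffices to bound $\tw(G_i)$.

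Next I would reduce to the connected case: if $\Gold$ is disconnected, the layers $L_j$ are defined componentwise, $\Clear$ acts within each component, and treewidth is the maximum over components, so it is enough to handle one component of $\Gold$ with its BFS layering from a root $r$. Writing $d(v)=\mathsf{dist}_{\Gold}(r,v)$, every edge of $\Gold$ joins vertices whose $d$-values differ by at most one, and an edge between a vertex at distance $d$ and one at distance $d+1$ is deleted in $G_i$ precisely when $d\equiv 4i+1\pmod{4k}$ (so that the lower endpoint lies in $L_{4i+1}$ and the upper in $L_{4i+2}$, i.e.\ both lie in $V_i$). Therefore $G_i$ has no edge crossing from a distance-$d$ vertex to a distance-$(d+1)$ vertex whenever $d\equiv 4i+1\pmod{4k}$, so the set of $d$-values attained by any connected component $C$ of $G_i$ lies between two successive distances congruent to $4i+1$ modulo $4k$ (or in the initial segment $[0,4i+1]$), hence in an interval of length at most $4k$; concretely $C\subseteq\{v : \ell\le d(v)<\ell+4k\}$ for some $\ell\ge 0$.

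Finally I would invoke the first item of \cref{lem:csp-partitioning}, with the value $4k$ playing the role of $k$ there, to conclude $\tw(\Gold[\{v : \ell\le d(v)<\ell+4k\}])=\Oh(4k)=\Oh(k)$; since $G_i[C]$ is a subgraph of this induced subgraph, $\tw(G_i[C])=\Oh(k)$, and taking the maximum over the connected components of $G_i$ gives $\tw(G_i)=\Oh(k)$, as claimed. I do not expect a genuine obstacle here — the argument is essentially routine; the only points needing care are (a) recording that $\Clear$ only deletes intra-$V_i$ edges and so cannot increase treewidth, and (b) the elementary bookkeeping that deleting the ``walls'' at distances $\equiv 4i+1\pmod{4k}$ chops every component into a window of at most $4k$ consecutive BFS layers, which is exactly why the layering in the static lemma was taken modulo $4k$ rather than modulo $k$.
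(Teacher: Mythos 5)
Your proof is correct and follows essentially the same route as the paper: observe that clearing only deletes edges internal to $V_i$, argue that every connected component of the resulting Gaifman graph is confined to at most $4k$ consecutive BFS layers because the crossing edges between $L_{4i+1}$ and $L_{4i+2}$ are gone, and then invoke the first item of \cref{lem:csp-partitioning}. Your explicit remarks (a) and (b) just spell out details the paper leaves implicit, so there is nothing to add.
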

\begin{proof}
  Let $C$ be a~connected component of the Gaifman graph of $\Clear(\Iold; V_i)$.
  Note that $C$ cannot simultaneously contain vertices at distance $4k\ell + (4i + 1)$ and $4k\ell + (4i + 2)$, for any $\ell \in \N$.
  Therefore, $C$ is contained within the set of vertices of $\Clear(\Iinit; V_i)$ at distance at least $\ell$ and at most $\ell + 4k - 1$, for some $\ell \in \N$.
  By \cref{lem:csp-partitioning}, $C$ has treewidth bounded by $\Oh(k)$.
  Therefore, the entire Gaifman graph of $\Clear(\Iold; V_i)$ has treewidth bounded by $\Oh(k)$.
\end{proof}

Note that the sets $V_i$ are, contrary to the case of \CSP, dynamic -- when processing some updates, we may decide to shrink some of these sets.
Given that, let $i \in \{0, \dots, k - 1\}$, let $\Vcur_i$ track the current contents of $V_i$.
We maintain the following invariants about each set $\Vcur_i$:

\begin{itemize}
  \item For each $i \in \{0, \dots, k - 1\}$, we have that $\Vcur_i \subseteq V^{\mathrm{old}}_i$.
  \item The sets $N_{\Gcur}[\Vcur_0], \dots, N_{\Gcur}[\Vcur_{k-1}]$ are pairwise disjoint.
\end{itemize}

By the construction of $V_i$, the invariants are satisfied at the time of the initialization: we have $\Gcur = \Ginit$, so for each $i \in \{0, \dots, k - 1\}$ it holds that $N_{\Gcur}[\Vcur_i] \subseteq L_{4i} \cup L_{4i+1} \cup L_{4i+2} \cup L_{4i + 3}$.
Note that from the invariant it follows that the sets $\Vcur_i$ are pairwise disjoint.

Next, we maintain $k$ universes $\Icur_0, \dots, \Icur_{k-1}$, where $\Icur_i = \Clear(\Icur; \Vcur_i)$.
For each universe, we initialize a~data structure from \cref{lem:domset-dynamic-compression}, maintaining an~instance $I^\star_i$ that is a~compression of $\Icur_i$ with the same minimum cost of a~solution.
Finally, we initialize $k$ child data structures $\D_0, \dots, \D_{k-1}$ at level $q-1$, where each $\D_i$ stores the compressed instance $I^\star_i$ and maintains a~$(1 - \delta\frac{q-1}{L})$-approximation to the minimum cost of a~solution to $I^\star_i$.
As previously, this initialization is recursive: each child data structure maintains another collection of $k$ child data structures at level $q - 2$ each, etc., until construction $k^{q-1}$ data structures at level $1$ in total.

Assume each child data structure $\D_i$ maintains a~good lower bound $p_i$ on the minimum cost of a~solution to $\Icur_i$; then we can also preserve a~good lower bound on the minimum cost of a~solution to $I$ by keeping the maximum of the lower bounds:

\begin{lemma}
  \label{lem:dom-approximations}
  Let $\OPT_i$ be the minimum cost of a~solution to the instance $\Icur_i$, and let $p_i$ be so that $(1 - \delta\frac{q-1}{L})\OPT_i \le p_i \le \OPT_i$. Let $p = \max(p_0, \dots, p_{k-1})$.
  Then $(1 - \delta\frac{q}{L})\OPT \le p \le \OPT$, where $\OPT$ is the minimum cost of a~solution to $I$.
\end{lemma}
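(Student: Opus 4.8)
The plan is to mirror the argument of \cref{lem:chaining-approximations} from the Independent Set section, but adapted to the "under-approximation" regime and to the fact that the sets $V_i$ are cleared rather than deleted. First I would fix an optimum solution $\phi$ to $\Icur$ of cost $\OPT$. The sets $\Vcur_0, \dots, \Vcur_{k-1}$ are pairwise disjoint and, crucially, their \emph{closed} neighborhoods $N_{\Gcur}[\Vcur_0], \dots, N_{\Gcur}[\Vcur_{k-1}]$ are pairwise disjoint by the maintained invariant. Hence $\sum_{i=0}^{k-1} \cost(\phi|_{N_{\Gcur}[\Vcur_i]}) \le \cost(\phi) = \OPT$, so by averaging there is some index $i$ with $\cost(\phi|_{N_{\Gcur}[\Vcur_i]}) \le \OPT/k \le \delta\OPT/L$ (using $k = \lceil L/\delta\rceil$).

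For the lower bound $p \ge (1 - \delta\frac{q}{L})\OPT$, I would show that for this particular $i$, the minimum cost $\OPT_i$ of a solution to $\Icur_i = \Clear(\Icur; \Vcur_i)$ satisfies $\OPT_i \ge (1 - \frac1k)\OPT \ge (1 - \frac\delta L)\OPT$. This is exactly the statement of \cref{lem:domset-baker-partitioning}: since the closed neighborhoods of the $\Vcur_j$ are pairwise disjoint, that lemma applies with $V_j \coloneqq \Vcur_j$, and it guarantees that there exists \emph{some} index whose cleared subinstance has optimum at least $(1 - \frac1k)\OPT$; moreover, re-reading its proof, the index it produces is precisely one where $\cost(\phi|_{N[V_j]}) \le \frac1k\OPT$, which is the index $i$ we chose. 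So $\OPT_i \ge (1 - \frac\delta L)\OPT$. Then, using the hypothesis $p_i \ge (1 - \delta\frac{q-1}{L})\OPT_i$ and $p \ge p_i$, we get
\[
  p \;\ge\; \left(1 - \delta\tfrac{q-1}{L}\right)\OPT_i \;\ge\; \left(1 - \delta\tfrac{q-1}{L}\right)\left(1 - \tfrac\delta L\right)\OPT \;\ge\; \left(1 - \delta\tfrac{q}{L}\right)\OPT,
\]
where the last inequality is the elementary bound $(1-a)(1-b) \ge 1 - (a+b)$ for $a,b \ge 0$ applied with $a = \delta\frac{q-1}{L}$ and $b = \frac\delta L$.

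For the upper bound $p \le \OPT$, I would note that each $\Icur_i = \Clear(\Icur; \Vcur_i)$ has minimum solution cost $\OPT_i \le \OPT$, by the first bullet of \cref{lem:domset-baker-partitioning} (clearing only relieves vertices and removes edges, which cannot increase the optimum — indeed the remark right after the definition of relieving states this). Combined with $p_i \le \OPT_i$ from the hypothesis, we get $p_i \le \OPT$ for every $i$, hence $p = \max_i p_i \le \OPT$. The only subtle point — and the step I'd expect to need the most care — is making sure the index produced by \cref{lem:domset-baker-partitioning} can indeed be taken to coincide with the index satisfying the averaging bound on $\cost(\phi|_{N[V_i]})$; but this is immediate from inspecting its proof, which derives the lower bound precisely from that averaging choice. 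Everything else is routine.
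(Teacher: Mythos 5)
Your proposal is correct and follows essentially the same route as the paper: invoke the invariant that the closed neighborhoods of the $\Vcur_i$ are pairwise disjoint, apply \cref{lem:domset-baker-partitioning} to the cleared subinstances to get $\OPT_j \le \OPT$ for all $j$ and $\OPT_j \ge (1-\frac1k)\OPT$ for some $j$, and chain this with the hypothesis on $p_j$ via $(1-a)(1-b)\ge 1-(a+b)$. The only (harmless) extra step is your insistence that the good index coincide with the one from the averaging argument --- the paper simply uses the existential conclusion of \cref{lem:domset-baker-partitioning}, which already suffices since $p \ge p_j$ for whichever index the lemma provides.
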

\begin{proof}
  Recall that the closed neighborhoods of the sets $\Vcur_0, \dots,\allowbreak \Vcur_{k-1}$ are pairwise disjoint.
  Thus, \cref{lem:domset-baker-partitioning} applies to the cleared subinstances $\Icur_0, \dots, \Icur_{k-1}$ and we get that:
  \begin{itemize}
    \item for every $j \in \{0, \dots, k-1\}$, we have that $\OPT_j \le \OPT$;
    \item there exists $j \in \{0, \dots, k-1\}$ such that $\OPT_j \ge (1 - \frac1k)\OPT$.
  \end{itemize}
  Thus, by our assumptions, for every $j \in \{0, \dots, k-1\}$ we have $p_j \le \OPT$, so also $p \le \OPT$.
  Moreover, there exists $j \in \{0, \dots, k-1\}$ with $\OPT_j \ge (1 - \frac1k)\OPT$.
  For this value of $j$, we have $p_j \ge (1 - \delta\frac{q-1}{L})\OPT_j$, so in total,
  $p_j \ge (1 - \delta\frac{q-1}{L})(1 - \delta\frac{1}{L})\OPT \ge (1 - \delta\frac{q}{L})\OPT$.
  As $p \ge p_j$, we conclude that $p \ge (1 - \delta\frac{q}{L})\OPT$.
\end{proof}

We now show how the data structure is updated.
First, we describe the removal of a~vertex $v \in \Vcur_i$ from $\Vcur_i$.
In this setting, the cleared subinstance $\Icur_i$ must be updated: each of the edges of $\Icur$ with one endpoint in $v$ and the other endpoint in $\Vcur_i \setminus \{v\}$ must be introduced to $\Icur_i$.
Moreover, since $v$ ceases to be relieved in $\Icur_i$, the function $\demand_v$ in $\Icur_i$ must be amended; this is done by removing and reinserting all already existing edges incident to $v$ with the updated demand sets.
Since $\Icur$ is $(\shat(q), \dhat(q))$-decent, the entire process involves $\Oh(\shat(q))$ updates to $\Icur_i$.
Each of these updates, in turn, causes the data structure from \cref{lem:domset-dynamic-compression} to apply a~number of updates to $I^\star_i$, which are then relayed to $\D_i$.

Now, we implement an~update to $\Icur$ (i.e, $\AddVertex$, $\AddEdge$, $\RemoveEdge$, or $\UpdateCost$).
For every vertex $v$ involved in the update (a~fresh vertex, an~endpoint of an~added or removed edge, or a~vertex with its cost updated), whenever $v$ belongs to some set $\Vcur_i$, we first remove $v$ from $\Vcur_i$ as described above.
Afterwards, we apply the update to each subinstance $\Icur_i$; again, the data structure from \cref{lem:domset-dynamic-compression} then issues a~number of updates to $I^\star_i$, which we relay to $\D_i$.
Finally, $\D$ recomputes a~lower bound on the minimum cost of a~solution to $\Icur$ by querying each $\D_i$ for the lower bound $p_i$ on the minimum cost of a~solution to $I^\star_i$ and returning the maximum value.

It remains to verify the satisfaction of the invariants.
First, we verify the invariants regarding the sets $\Vcur_0, \dots, \Vcur_{k-1}$.
The only non-trivial property is the preservation of the disjointness of the closed neighborhoods of the sets after the addition of an edge to $\Icur$:
\begin{lemma}
  Let $G$ be a~graph and $V_0, \dots, V_{k-1} \subseteq V(G)$ be so that $N_G[V_0], \dots, N_G[V_{k-1}]$ are pairwise disjoint.
  Let also $u, v \in V(G)$, and consider a~graph $G'$ equal to $G$ with an~edge $uv$ added, and $V'_i = V_i \setminus \{u, v\}$ for $i \in \{0, \dots, k-1\}$.
  Then $N_{G'}[V'_0], \dots, N_{G'}[V'_{k-1}]$ are pairwise disjoint.
\end{lemma}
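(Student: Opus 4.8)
The plan is to prove the contrapositive-free statement directly: suppose for contradiction that some vertex $w$ belongs to both $N_{G'}[V'_a]$ and $N_{G'}[V'_b]$ for distinct $a, b \in \{0, \dots, k-1\}$. Then there are vertices $x_a \in V'_a$ and $x_b \in V'_b$ with $w \in N_{G'}[x_a]$ and $w \in N_{G'}[x_b]$. First I would observe that $V'_i \subseteq V_i$ for all $i$, so in particular $x_a \in V_a$ and $x_b \in V_b$, and moreover $x_a, x_b \notin \{u, v\}$ by the definition of $V'_i$. The key case distinction is on whether the witnessing adjacencies use the new edge $uv$ or not.

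The main step is to argue that, in fact, neither membership $w \in N_{G'}[x_a]$ nor $w \in N_{G'}[x_b]$ can ``secretly rely'' on the new edge in a way that escapes $G$. Concretely: if $w = x_a$ or $wx_a \in E(G)$, then already $w \in N_G[x_a]$. The only new possibility in $G'$ is that $wx_a$ is the freshly added edge, i.e. $\{w, x_a\} = \{u, v\}$; but $x_a \notin \{u,v\}$, so this cannot happen. Hence $w \in N_G[x_a]$, and symmetrically $w \in N_G[x_b]$. This shows $w \in N_G[V_a] \cap N_G[V_b]$, contradicting the assumed pairwise disjointness of the closed neighborhoods in $G$. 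Since $a \ne b$ were arbitrary, all the sets $N_{G'}[V'_0], \dots, N_{G'}[V'_{k-1}]$ are pairwise disjoint.

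I expect no real obstacle here: the statement is essentially a bookkeeping observation, and the only subtlety worth spelling out is that removing both endpoints $u$ and $v$ of the new edge from every $V_i$ is precisely what guarantees that the new edge is never incident to any vertex of any $V'_i$, so it can never create a new element of a closed neighborhood $N_{G'}[V'_i]$. Thus $N_{G'}[V'_i] = N_G[V'_i] \subseteq N_G[V_i]$ for every $i$, and the disjointness is inherited from $G$. The proof is three or four lines once this observation is isolated.
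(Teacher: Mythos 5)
Your argument is correct and essentially matches the paper's: both reduce to the containment $N_{G'}[V'_i] \subseteq N_G[V_i]$ and then inherit disjointness from $G$. Your version is marginally cleaner---you observe directly that since $u,v \notin V'_i$, no vertex of $V'_i$ is incident to the new edge, so $N_{G'}[V'_i] = N_G[V'_i]$; the paper instead case-splits on whether the witness vertex $w \in N_{G'}[V'_i] \setminus N_G[V_i]$ is in $\{u,v\}$, arriving at the same containment.
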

\begin{proof}
  Let $i \in \{0, \dots, k-1\}$ and suppose there exists a~vertex $w \in N_{G'}[V'_i] \setminus N_G[V_i]$.
  Equivalently, $N_{G'}[w]$ intersects $V'_i$, but $N_G[w]$ is disjoint from $V_i$.
  If $w \notin \{u, v\}$, then $N_{G'}[w] = N_G[w]$ and $V'_i \subseteq V_i$ -- a~contradiction.
  Now assume without loss of generality that $w = u$.
  Then $N_{G'}[w] \subseteq N_G[w] \cup \{v\}$, but $V'_i \subseteq V_i \setminus \{v\}$ -- again a~contradiction.
  We conclude that $N_{G'}[V'_i] \subseteq N_G[V_i]$ for every $0 \le i < k$, so if the sets $N_G[V_0], \dots, N_G[V_{k-1}]$ were pairwise disjoint, then so are $N_{G'}[V'_0], \dots, N_{G'}[V'_{k-1}]$.
\end{proof}

Since each update to $\Icur$ causes at most $\Oh(\shat(q))$ updates to each $\Icur_i$, it follows that  during one epoch, the size of each $I^\star_i$ does not grow above $\tau_q \cdot (\shat(q) \cdot k)^{\Oh(1)} \log n$ (\cref{lem:domset-dynamic-compression}\ref{item:dom-compress-size}).
Later, we will choose $\tau_q$ so that this value is significantly less than $n^{(q-1)/L}$ so as to ensure the satisfaction of invariant \ref{item:indset-struct-size} by the children data structures; the fact that the Gaifman graph of $I^\star_i$ belongs to $\Cc$ follows from \cref{lem:domset-dynamic-compression}\ref{item:dom-compress-in-class}.
The invariant \ref{item:domset-struct-domain} is satisfied due to \cref{lem:domset-dynamic-compression}\ref{item:dom-compress-domain}, provided we set $\shat$ and $\dhat$ according to the bounds provided by \cref{lem:domset-dynamic-compression}.
Also, since the revenue of an optimum solution to $I^\star_i$ is equal to that of $\Icur_i$ (\cref{lem:domset-dynamic-compression}\ref{item:dom-compress-revenue}), each child data structure $\D_i$ maintains a~nonnegative real $p_i$ satisfying the preconditions of \cref{lem:dom-approximations}.
Therefore, the value $p \coloneqq \max(p_0, \dots, p_{k-1})$ is an~$(1 - \delta \frac{q}{L})$-approximation of the optimum revenue in $\Icur$, which proves the satisfaction of invariant \ref{item:domset-struct-sol} by~$\D$.
Finally, as previously, when an~epoch in $\D$ ends, $\D$ is reinitialized with $\Iold \coloneqq \Icur$ and a~new epoch starts, causing the destruction and the recursive reinitialization of the children data structures $\D_i$.

\subsection{Setting the parameters and time complexity analysis}

Here, we perform the complexity analysis of the data structure for \MinGeneralizedDom. That is, we prove the following statement:

\begin{lemma}
  \label{lem:domset-time-analysis}
  Suppose that the values $s$, $d$ are absolute constants.
  Then the parameters $L$, $\shat(\cdot)$, $\dhat(\cdot)$ and $\tau_{\cdot}$ can be chosen so that the data structure for \MinGeneralizedDom performs each update in time $f(\delta) \cdot n^{o(1)}$, where $f(\delta)$ is doubly-exponential in $\Oh(1 / \delta^2)$.
\end{lemma}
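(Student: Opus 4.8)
The plan is to reproduce the amortized-complexity analysis of \Cref{ssec:indset-parameters} essentially verbatim, with two modifications: I replace the single domain-size function $g(\cdot)$ of \Cref{lem:g-bound} by the pair of decency parameters $\shat(\cdot)$ and $\dhat(\cdot)$, and I account for the extra work that the dynamic sets $\Vcur_i$ create (there is no counterpart in the independent-set case): removing a vertex from some $\Vcur_i$ costs $\Oh(\shat(q))$ further updates to the cleared subinstance, since all its incident edges must be re-inserted with the vertex's demand sets restored, and there are at most $\Oh(\shat(q))$ of them by decency.

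First I would pin down the recursion of the decency parameters. Set $\shat(L)=s$ and $\dhat(L)=d$, which are absolute constants. A level-$q$ structure, $q\ge 2$, forms $k$ universes $\Clear(\Icur;\Vcur_i)$; these remain $(\shat(q),\dhat(q))$-decent and, by \Cref{lem:csp-partitioning}, have Gaifman graphs of treewidth $\Oh(k)$, so by \Cref{lem:domset-dynamic-compression}\ref{item:dom-compress-domain} the compressed instances handed to the level-$(q-1)$ children are $\bigl(\Oh(\shat(q)\cdot k),\ \dhat(q)+4^{\Oh(\shat(q)\cdot k)}\bigr)$-decent. It therefore suffices to put $\shat(q-1)=\Oh(\shat(q)\cdot k)$ and $\dhat(q-1)=\dhat(q)+4^{\Oh(\shat(q)\cdot k)}$; an induction identical to the one in \Cref{lem:g-bound} then yields $\shat(q)=s\cdot k^{\Oh(L)}$ and, since $s$ is a constant, $\dhat(q)\in 2^{k^{\Oh(L)}}$ for all $q\in\{1,\dots,L\}$, establishing invariant \ref{item:domset-struct-domain}.

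Next I would bound the propagation. Exactly as in \Cref{lem:updates-count}, one update to a level-$q$ structure produces $\Oh(\shat(q))$ updates to each of its $k$ universes, and by \Cref{lem:domset-dynamic-compression}\ref{item:dom-compress-size} each universe update causes at most $\shat(q)\cdot k^{\Oh(1)}\log n$ updates to the corresponding compressed instance, hence to the child; thus the per-level multiplier is $\shat(q)^2\cdot k^{\Oh(1)}\log n\le k^{\Oh(L)}\log n$, so a single update to $\D^{\mathrm{main}}$ generates at most $k^{\Oh(L^2)}\cdot(\log n)^{\Oh(L)}$ updates across the tree of $\Oh(k^L)$ structures, each carrying $(k\log n)^{\Oh(1)}$ bookkeeping overhead. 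Each level-$1$ structure re-runs \Cref{lem:domset-static-baker} per update in time $n^{1/L}\cdot\shat(1)\cdot\dhat(1)^{\Oh(L/\delta)}=n^{1/L}\cdot 2^{k^{\Oh(L)}}$ (using $k=\lceil L/\delta\rceil\ge L/\delta$). Multiplying and simplifying exactly as in \Cref{lem:time1}—the $2^{\Oh(L^2\log k)}$ contribution of the $\shat(q)^2$ factors is swallowed by $2^{k^{\Oh(L)}}$—the non-reinitialization cost of an update to $\D^{\mathrm{main}}$ is $n^{1/L}\cdot 2^{(L/\delta)^{\Oh(L)}+\Oh(L\log\log n)}$. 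Setting $\tau_q=n^{(q-1)/L}/(k\log n)^{c'}$ for a suitable constant $c'$ keeps every compressed instance below $n^{(q-1)/L}$ vertices (invariant \ref{item:domset-struct-size}), and since a reinitialization of a level-$q$ structure costs $2^{\Oh(\shat(q)\cdot k)}\cdot\dhat(q)\cdot n^{q/L}\log n\le 2^{k^{\Oh(L)}}\cdot n^{q/L}\log n$ by \Cref{lem:domset-dynamic-compression}, the computation of \Cref{lem:time2} amortizes all (re)initializations to within the same bound.

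Finally, I would set $L=\delta_0\cdot\delta\cdot\frac{\log\log n}{\log\log\log n}$ for a sufficiently small absolute constant $\delta_0\in(0,1)$, so that, exactly as in \Cref{lem:time3}, $n^{1/L}$ dominates and the remaining factors are $n^{o(1)}$, yielding amortized update time $n^{\Oh(\log\log\log n/(\delta\log\log n))}$; and then I would mirror the case analysis closing \Cref{ssec:indset-parameters} to upgrade this to $f(\delta)\cdot n^{o(1)}$: if $n>2^{2^{1/\delta^2}}$ then $\log\log n>1/\delta^2$ forces the exponent down to $\Oh(\log\log\log n/\sqrt{\log\log n})=o(1)$ irrespective of $\delta$, while if $n\le 2^{2^{1/\delta^2}}$ I would simply invoke \Cref{lem:domset-static-baker} on each update, which (as $s,d$ are absolute constants) runs in time $n\cdot s\cdot d^{\Oh(1/\delta)}=2^{2^{\Oh(1/\delta^2)}}$. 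In either case an update takes time $2^{2^{\Oh(1/\delta^2)}}\cdot n^{o(1)}=f(\delta)\cdot n^{o(1)}$, and the claimed initialization bound follows by inserting the $\Oh(n)$ edges of $G$ one by one into the edgeless $n$-vertex instance. The main obstacle I anticipate is the update-count/running-time step above: one must verify carefully that the interplay of $\Clear$ and compression keeps $\shat(q)$ within $k^{\Oh(L)}$, and that the additional $\Oh(\shat(q))$ update factor from shrinking the $\Vcur_i$—absent in the independent-set analysis—leaves both the total update count and the domain bound $\dhat(q)$ inside the range in which the final choice of $L$ keeps the amortized update time subpolynomial.
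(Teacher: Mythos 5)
Your proposal follows the paper's own proof essentially step for step: the same recursion $\shat(q-1)=\Oh(\shat(q)k)$, $\dhat(q-1)=\dhat(q)+4^{\Oh(\shat(q)k)}$ giving $\shat(q)\in k^{\Oh(L)}$ and $\dhat(q)\in 2^{k^{\Oh(L)}}$, the same per-level fan-out $\shat(q)^2 k^{\Oh(1)}\log n$ and total of $k^{\Oh(L^2)}(\log n)^{\Oh(L)}$ updates, the same bounds $n^{1/L}\cdot 2^{(L/\delta)^{\Oh(L)}+\Oh(L\log\log n)}$ for both the update work and the amortized (re)initializations, the same choice of $L$, and the same closing case distinction on $n$ versus $2^{2^{1/\delta^2}}$. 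The only slip is the epoch length: since the per-update fan-out involves $\shat(q)^{\Oh(1)}=k^{\Theta(L)}$, one must take $\tau_q=n^{(q-1)/L}/(k^L\log n)^{c}$ (as the paper does) rather than $n^{(q-1)/L}/(k\log n)^{c'}$ with an absolute constant $c'$, or invariant \ref{item:domset-struct-size} could fail; this changes nothing downstream, as the resulting $k^{\Oh(L^2)}(\log n)^{\Oh(1)}$ factors are absorbed exactly as you observe.
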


The proof is an~adaptation of \cref{ssec:indset-parameters} to the setting of \MinGeneralizedDom with slightly different bounds in several parts of the analysis.

\begin{proof}[Proof of \cref{lem:domset-time-analysis}]
  Recall that all constants depending on $\Cc$ are absolute constants.
  
  First, we bound the values of $\shat(\cdot)$ and $\dhat(\cdot)$.
  
  \begin{claim}
    \label{cl:domset-sd-parameters}
    One can set functions $\shat$ and $\dhat$ so that the invariant \ref{item:domset-struct-domain} is satisfied for all constructed data structures, and moreover $\shat(q) \in k^{\Oh(L)}$ and $\dhat(q) \in 2^{k^{\Oh(L)}}$ for all $q \in \{1, \dots, L\}$.
  \end{claim}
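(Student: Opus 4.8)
The plan is to mimic the proof of \cref{lem:g-bound}, but tracking the two decency parameters simultaneously. Define $\shat(L) \coloneqq s$ and $\dhat(L) \coloneqq d$; since $I^{\mathrm{main}}$ is $(s,d)$-decent by assumption, invariant \ref{item:domset-struct-domain} holds at level $L$. For the inductive step, fix $q \in \{2,\dots,L\}$ and suppose a data structure $\D$ at level $q$ maintains a $(\shat(q),\dhat(q))$-decent instance $\Icur$. Each universe $\Icur_i = \Clear(\Icur;\Vcur_i)$ is obtained from $\Icur$ by deleting edges and relieving vertices, and neither operation increases any degree or any domain size, so $\Icur_i$ is again $(\shat(q),\dhat(q))$-decent; moreover, by the treewidth lemma stated above its Gaifman graph has treewidth $\Oh(k)$. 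Hence \cref{lem:domset-dynamic-compression} is invoked with parameters $s\gets\shat(q)$, $d\gets\dhat(q)$, $w\gets\Oh(k)$, and by its property \ref{item:dom-compress-domain} the compressed instance $I^\star_i$ handed to the child $\D_i$ is $\big(\Oh(\shat(q)\cdot k),\ \dhat(q)+4^{\Oh(\shat(q)\cdot k)}\big)$-decent. Therefore it suffices to pick an absolute constant $C$ absorbing every hidden constant (including the class-dependent ones) and set
\[
  \shat(q-1) \coloneqq C\,k\,\shat(q), \qquad \dhat(q-1) \coloneqq \dhat(q) + 4^{C\,k\,\shat(q)}.
\]
Level-$1$ data structures receive instances produced by a compression at level $2$, so they too are $(\shat(1),\dhat(1))$-decent; thus invariant \ref{item:domset-struct-domain} holds for every constructed data structure, as required.

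It then remains to solve these recurrences. Unrolling the first gives $\shat(q) = s\,(Ck)^{L-q}$; since $s$ and $C$ are absolute constants and $\delta\in(0,1)$ forces $k\ge 2$, we have $(Ck)^{L-q}\le (Ck)^L \le k^{\Oh(L)}$, so $\shat(q)\in k^{\Oh(L)}$ for all $q$. Feeding this into the second recurrence, each increment satisfies $\dhat(q-1)-\dhat(q) = 4^{Ck\shat(q)} \le 4^{k\cdot k^{\Oh(L)}} = 2^{k^{\Oh(L)}}$, so telescoping from $L$ down to $q$ yields $\dhat(q)\le d + L\cdot 2^{k^{\Oh(L)}}$. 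Since $d=\Oh(1)$ and $L\le k$ (because $k=\lceil L/\delta\rceil$ and $\delta\le 1$), both $d$ and the factor $L$ are swallowed into the exponent, giving $\dhat(q)\in 2^{k^{\Oh(L)}}$.

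There is no genuine obstacle here beyond careful bookkeeping of constants; the one point worth highlighting is that $\dhat$ grows only \emph{additively}, since the freshly collapsed vertices receive new domains of size $4^{\Oh(\shat(q)k)}$ that do not compound with the old domain sizes — this is in contrast with the multiplicative blow-up $g(q-1)=g(q)^{\Oh(k)}$ in the independent-set case, and it is exactly what keeps $\dhat$ a single (rather than an $L$-fold iterated) exponential, matching the $2^{k^{\Oh(L)}}$ bound. For use in the subsequent complexity analysis one should also record that, with $k=\lceil L/\delta\rceil$, these bounds read $\shat(q)\le(L/\delta)^{\Oh(L)}$ and $\dhat(q)\le 2^{(L/\delta)^{\Oh(L)}}$.
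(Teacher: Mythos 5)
Your proof is correct and follows essentially the same approach as the paper's: the same choice of base case $\shat(L)=s$, $\dhat(L)=d$, the same recurrences $\shat(q-1)=Ck\,\shat(q)$ and $\dhat(q-1)=\dhat(q)+4^{Ck\,\shat(q)}$ derived from \cref{lem:domset-dynamic-compression}\ref{item:dom-compress-domain}, and the same unrolling. Your extra bookkeeping --- checking that $\Clear$ preserves decency, telescoping the $\dhat$-recurrence explicitly, and noting that the additive blow-up is what keeps $\dhat$ a single exponential in contrast to the \CSP case --- is sound and somewhat more detailed than the paper's terse ``straightforward induction.''
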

  \begin{claimproof}
    Recall that \ref{item:dom-compress-domain} states that the data structures at level $q$ maintain $(\shat(q), \dhat(q))$-decent instances of \MinGeneralizedDom.
    By the assumptions, it is enough to set $\shat(L) = s$ and $\dhat(L) = d$, i.e., $\shat(L)$ and $\dhat(L)$ are absolute constants.
    Assuming that a~given data structure $\D$ at level $q > 1$ maintains a~$(\shat(q), \dhat(q))$-decent instance $I$ of \MinGeneralizedDom, $\D$ stores a~collection of children data structures at level $q-1$, each maintaining a~compression of $I$ produced by {lem:domset-dynamic-compression}; from \cref{lem:domset-dynamic-compression}\ref{item:dom-compress-domain} we infer that each such compression is $(\Oh(\shat(q)w), \dhat(q) + 4^{\Oh(\shat(q)w)})$-decent, where $w \in \Oh(k)$.
    Choose an~absolute constant $C > 0$ so that for every $s', d' \geq 1$, a~compression of a~$(s', d')$-decent instance is $(C \cdot s'k, d' + 4^{C \cdot s'k})$-decent.
    Then it is enough to set $\shat(q - 1) = \shat(q) \cdot Ck$ and $\dhat(q - 1) = \dhat(q) + 4^{C \cdot \shat(q) k}$.
    A~straightforward induction implies that $\shat(q) = s \cdot (Ck)^{L - q}$.
    Therefore, assuming $k \ge 2$, we have $\shat(q) \in k^{\Oh(L)}$ and $\dhat(q) \in 2^{k^{\Oh(L)}}$ for all $q \in \{1, \dots, L\}$.
  \end{claimproof}
  
  Each auxiliary data structure at level $q$ spawns $k$ auxiliary data structures at level $q - 1$ for $q \ge 2$; and each auxiliary data structure at level $q = 1$ is a~leaf, i.e., it does not spawn any children data structures.
  Hence, there are at most $k^L$ auxiliary data structures at level $1$.
  Therefore:
  
  \begin{claim}
    \label{cl:domset-number-of-updates}
    Each update to $\D^{\mathrm{main}}$ causes at most $k^{\Oh(L^2)} (\log n)^{\Oh(L)}$ updates throughout all data structures.
  \end{claim}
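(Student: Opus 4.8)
\begin{claimproof}
The plan is to unroll the recursion level by level, tracking at each level the blow-up in the number of propagated updates. The key point to keep in mind is that this per-level blow-up is $k^{\Oh(L)}\log n$ rather than $k^{\Oh(1)}\log n$ as in the independent-set case (\cref{lem:updates-count}), because of the extra $\shat(q)$ factors introduced both by the vertex-removal cascade on the sets $\Vcur_i$ and by the update bound of \cref{lem:domset-dynamic-compression}.

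Concretely, first fix a single update to $\D^{\mathrm{main}}$, and for $1\le q\le L$ let $N_q$ denote the total number of updates received by all level-$q$ data structures as a consequence of this one update; thus $N_L=1$. First I would bound the blow-up from level $q$ to level $q-1$. Take a data structure $\D$ at level $q\ge 2$ maintaining a $(\shat(q),\dhat(q))$-decent instance, and consider a single update to $\D$. At most $\Oh(1)$ vertices are involved in the update, and for each such vertex $v$ and each set $\Vcur_i$ containing $v$, removing $v$ from $\Vcur_i$ triggers $\Oh(\shat(q))$ updates to the universe $\Icur_i$ (reinserting the edges between $v$ and $\Vcur_i\setminus\{v\}$, and re-adding the edges incident to $v$ with corrected demand sets), after which the update itself is relayed to each of the $k$ universes. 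Hence a single update to $\D$ results in $\Oh(k\cdot\shat(q))$ updates across $\Icur_0,\dots,\Icur_{k-1}$. By \cref{lem:domset-dynamic-compression}\ref{item:dom-compress-size}, applied with $w\in\Oh(k)$ and $s=\shat(q)$, each update to a universe causes at most $\shat(q)\cdot k^{\Oh(1)}\log n$ updates to the corresponding $I^\star_i$, which are relayed to the child $\D_i$ at level $q-1$. Therefore a single update to $\D$ generates at most $\shat(q)^{\Oh(1)}\cdot k^{\Oh(1)}\cdot\log n$ updates to level-$(q-1)$ data structures, and invoking \cref{cl:domset-sd-parameters} to substitute $\shat(q)\in k^{\Oh(L)}$, this is at most $k^{\Oh(L)}\log n$.

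Since $N_q$ counts all updates received across level $q$ and each such update produces at most $k^{\Oh(L)}\log n$ updates at level $q-1$, we obtain $N_{q-1}\le N_q\cdot k^{\Oh(L)}\log n$ for every $q\in\{2,\dots,L\}$. Unrolling from $N_L=1$ gives $N_q\le(k^{\Oh(L)}\log n)^{L-q}$ for all $q$, and in particular $N_1\le(k^{\Oh(L)}\log n)^{L-1}$. Finally, the total number of updates throughout all data structures is $\sum_{q=1}^{L}N_q\le L\cdot(k^{\Oh(L)}\log n)^{L-1}=k^{\Oh(L^2)}(\log n)^{\Oh(L)}$, as claimed. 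The only delicate point is the bookkeeping of the $\shat(q)$ factors: one has to use \cref{cl:domset-sd-parameters} to see that $\shat(q)^{\Oh(1)}$ is still only $k^{\Oh(L)}$, so that after $L$ levels the exponent of $k$ grows to $\Oh(L^2)$ and no worse.
\end{claimproof}
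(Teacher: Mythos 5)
Your proof is correct and takes essentially the same approach as the paper: bound the per-level blow-up by combining the $\Oh(\shat(q))$ cascading updates to each universe with the $\shat(q)\cdot k^{\Oh(1)}\log n$ factor from \cref{lem:domset-dynamic-compression}\ref{item:dom-compress-size}, then substitute $\shat(q)\in k^{\Oh(L)}$ via \cref{cl:domset-sd-parameters} and compound over $L$ levels. Your explicit $N_q$ recursion is just a formalization of the paper's ``by straightforward induction,'' and the resulting bound $k^{\Oh(L^2)}(\log n)^{\Oh(L)}$ matches.
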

  \begin{claimproof}
    By \cref{lem:domset-dynamic-compression}\ref{item:dom-compress-size}, each update to a~universe $\Icur_i$ causes at most $\shat(q) w^{\Oh(1)} \log n$ updates propagated to instances at level $q - 1$.
    Since each instance at level $q > 1$ has $k$ associated universes $\Icur_i$ and each update to $\Icur$ causes at most $\Oh(\shat(q))$ updates to each universe $\Icur_i$, we conclude that a~single update to an~instance maintained by a~data structure at level $q$ causes at most
    \begin{equation}
    \label{eq:domset-number-of-updates}
    \shat(q)^2 w^{\Oh(1)} \log n
    \end{equation}
    updates to be propagated to the children data structures at level $q - 1$.
    Recall that $w \in \Oh(k)$ and that by \cref{cl:domset-sd-parameters} we have $\shat(q) \in k^{\Oh(L)}$.
    Therefore, (\ref{eq:domset-number-of-updates}) is bounded by
    \[ k^{\Oh(L)} \log n. \]
    By straightforward induction, a~single update to $\D^{\mathrm{main}}$ (the data structure at level $L$) causes at most
    \[ \Oh\left(\left(k^{\Oh(L)} \log n\right)^L\right) \leq k^{\Oh(L^2)} (\log n)^{\Oh(L)}. \]
    updates to all data structures.
  \end{claimproof}  
  
  \begin{claim}
    \label{cl:domset-time-without-reinitializations}
    The total time required to update all data structures for a~single update to $\D^{\mathrm{main}}$, excluding the time of all required reinitializations, is bounded by $n^{1/L} \cdot 2^{(\frac{L}{\delta})^{\Oh(L)} + \Oh(L \log \log n)}$.
  \end{claim}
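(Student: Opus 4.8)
The plan is to follow the template of the proof of \cref{lem:time1}, replacing each ingredient by its Generalized Domination counterpart. There are two sources of running time to track: (i)~the recomputation of the approximate solution inside each level-$1$ data structure, performed once per update that reaches it, via \cref{lem:domset-static-baker}; and (ii)~the per-update bookkeeping overhead at levels $q \ge 2$ --- relaying updates through the compression data structures of \cref{lem:domset-dynamic-compression} and maintaining the dynamic sets $V_0, \dots, V_{k-1}$ (including removing the $\Oh(1)$ vertices touched by an~update from these sets). I would bound these two contributions separately and then add them.

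For (i): a level-$1$ data structure maintains an~$(\shat(1), \dhat(1))$-decent instance on at most $n^{1/L}$ vertices and reruns Baker's technique with accuracy parameter $\delta/L$, so by \cref{lem:domset-static-baker} one recomputation costs $n^{1/L} \cdot \shat(1) \cdot \dhat(1)^{\Oh(L/\delta)}$. Substituting $\shat(1) \in k^{\Oh(L)}$ and $\dhat(1) \in 2^{k^{\Oh(L)}}$ from \cref{cl:domset-sd-parameters}, this equals $n^{1/L} \cdot 2^{k^{\Oh(L)} \cdot \Oh(L/\delta)}$. By \cref{cl:domset-number-of-updates}, a single update to $\D^{\mathrm{main}}$ triggers at most $k^{\Oh(L^2)} (\log n)^{\Oh(L)}$ updates throughout all data structures, so in particular the total time spent on level-$1$ recomputations is at most $k^{\Oh(L^2)} (\log n)^{\Oh(L)} \cdot n^{1/L} \cdot 2^{k^{\Oh(L)} \cdot \Oh(L/\delta)}$.

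For (ii): processing a single update inside a data structure at level $q \ge 2$ first removes $\Oh(1)$ vertices from the sets $V_i$ --- each removal causing $\Oh(\shat(q))$ updates to the cleared subinstances $\Icur_i$ --- and then relays the update to each of the $k$ universes; by \cref{lem:domset-dynamic-compression}\ref{item:dom-compress-size} every update to a universe is processed in time $2^{\Oh(\shat(q) w)} \log n$ with $w \in \Oh(k)$, which by \cref{cl:domset-sd-parameters} is $2^{k^{\Oh(L)}} \log n$; the extra multiplicative factors $\Oh(\shat(q)) = k^{\Oh(L)}$, $\Oh(k)$, and the $(k \log n)^{\Oh(1)}$ cost of table indexing and forest navigation are all absorbed into the exponent. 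Hence each of the $k^{\Oh(L^2)} (\log n)^{\Oh(L)}$ updates produced by one update to $\D^{\mathrm{main}}$ is handled in time $2^{k^{\Oh(L)}} \log n$, for a total overhead of $k^{\Oh(L^2)} (\log n)^{\Oh(L)} \cdot 2^{k^{\Oh(L)}} \log n$.

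Finally I would add the two contributions and simplify using $k = \ceil{L/\delta}$: then $k^{\Oh(L)} = (L/\delta)^{\Oh(L)}$, so $2^{k^{\Oh(L)} \cdot \Oh(L/\delta)} = 2^{(L/\delta)^{\Oh(L)}}$; moreover $(\log n)^{\Oh(L)} = 2^{\Oh(L \log \log n)}$ and $k^{\Oh(L^2)} = 2^{\Oh(L^2 \log(L/\delta))} \le 2^{(L/\delta)^{\Oh(L)}}$. Collecting, the total is $n^{1/L} \cdot 2^{(L/\delta)^{\Oh(L)} + \Oh(L \log \log n)}$, as claimed. I do not expect a real obstacle; the only point of care, compared with the independent-set analysis, is that the per-level $\shat(q)^2$ blow-up in the update count compounds to the factor $k^{\Oh(L^2)}$ rather than $(k \log n)^{\Oh(L)}$, but this remains dominated by $2^{(L/\delta)^{\Oh(L)}}$, so the stated bound is unaffected --- the main task is bookkeeping to confirm that all the $\Oh(\cdot)$'s inside the exponents stay within $(L/\delta)^{\Oh(L)}$ along the recursion.
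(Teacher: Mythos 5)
Your proposal is correct and follows essentially the same path as the paper's proof: bound the per-update recomputation cost at level~$1$ via \cref{lem:domset-static-baker} with $\shat(1), \dhat(1)$ from \cref{cl:domset-sd-parameters}, bound the bookkeeping overhead at higher levels, multiply by the update count from \cref{cl:domset-number-of-updates}, and simplify using $k = \ceil{L/\delta}$. Your side remark that the $\shat(q)^2$ blow-up yields $k^{\Oh(L^2)}$ rather than $(k\log n)^{\Oh(L)}$ but is still absorbed by $2^{(L/\delta)^{\Oh(L)}}$ is exactly the point the paper also relies on.
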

  \begin{claimproof}
    Focus first on the time required to recompute the solutions to the instances at level $1$.
    Each data structure at level $1$ recomputes the solution from scratch using \cref{lem:domset-static-baker} with an~instance with at most $n^{1/L}$ vertices, the bound on the maximum degree $s = \shat(1)$, the bound on the domain size $d = \dhat(1)$ and the approximation guarantee $\frac{\delta}{L}$.
    Therefore, each update at level $1$ is processed in time $n^{1/L} \cdot \shat(1) \cdot \dhat(1)^{\Oh(L / \delta)} \leq n^{1 / L} \cdot 2^{k^{\Oh(L)} \cdot \frac{L}{\delta}}$.
    At a~data structure at level $q > 1$, it can be verified that each update is processed in time $2^{\Oh(\shat(q) k)} \cdot (\log n)^{\Oh(1)}$ (excluding the time spent at data structures at lower levels).
    Since $2^{\Oh(\shat(q) k)} \cdot (\log n)^{\Oh(1)} \leq 2^{k^{\Oh(L)}} \cdot (\log n)^{\Oh(1)}$, we conclude that for any fixed auxiliary data~structure, every update is processed in time at most $n^{1/L} \cdot (\log n)^{\Oh(1)} \cdot 2^{k^{\Oh(L)} \cdot \frac{L}{\delta}}$.
        
    Using the bound from \cref{cl:domset-number-of-updates} on the total number of updates to all data structures, we infer that, for a~single update to $\D^{\mathrm{main}}$, the total recomputation time in the data structures is bounded by $k^{\Oh(L^2)} (\log n)^{\Oh(L)} n^{1/L} (\log n)^{\Oh(1)} 2^{k^{\Oh(L)} \cdot \frac{L}{\delta}}$.
    Recall that we set $k = \left\lceil \frac{L}{\delta} \right\rceil$, so we have that $k^{\Oh(L^2)} = 2^{\Oh(L^2 \log \frac{L}{\delta})}$, $(\log n)^{\Oh(L)} = 2^{\Oh(L \log \log n)}$, and $2^{k^{\Oh(L)} \cdot \frac{L}{\delta}} = 2^{\left(\frac{L}{\delta}\right)^{\Oh(L)}}$.
    Therefore, we can bound the total update time, excluding the reinitializations, by 
    \[n^{1/L} \cdot 2^{\Oh(L^2 \log \frac{L}{\delta} + L \log \log n + (\frac{L}{\delta})^{\Oh(L)})} \leq
    n^{1/L} \cdot 2^{(\frac{L}{\delta})^{\Oh(L)} + \Oh(L \log \log n)}.\]
  \end{claimproof}
  
  Now we set the epoch lengths and bound the amortized time of all reinitializations per a~single update to $\D^{\mathrm{main}}$.
  Let $\tau_q$ denote the epoch length for data structures on level $q$ for some $2 \le q \le L$.
  Recall that the epoch length for a~data structure is measured in the number of updates to this particular instance (as opposed to $\D^{\mathrm{main}}$). Recall also from \cref{item:dom-compress-size} of \cref{lem:domset-dynamic-compression} that each update to a~data structure $\D$ at level $q$ generates $(\shat(1) k \log n)^{\Oh(1)} \leq k^{\Oh(L)} (\log n)^{\Oh(1)}$ updates to the children structures.
  Fix now a~constant $c > 0$ such that this number of updates is actually bounded by $(k^L \log n)^c$, and set $\tau_q \coloneqq n^{(q - 1) / L} / (k^L \log n)^c$.
  Then, \cref{lem:domset-dynamic-compression}\ref{item:dom-compress-size} implies invariant \ref{item:domset-struct-size}: children data structures at level $q-1$ are guaranteed to maintain instances of size $n^{(q - 1)/L}$.
  
  We now bound the amortized time complexity of initializations and reinitializations across all data structures.
  
  \begin{claim}
    \label{cl:domset-time-only-reinitializations}
    The amortized time of all initializations and reinitializations per a~single update to $\D^{\mathrm{main}}$ is bounded by $n^{1/L} \cdot 2^{(\frac{L}{\delta})^{\Oh(L)} + \Oh(L \log \log n)}$.
  \end{claim}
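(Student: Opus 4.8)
The plan is to mirror the proof of \cref{lem:time2} (the independent-set analogue), substituting the domain/degree bounds $g(q)$ by the decency parameters $\shat(q),\dhat(q)$ bounded in \cref{cl:domset-sd-parameters}, and the per-update propagation factor by the one in \cref{cl:domset-number-of-updates}. As there, the lifespan of any auxiliary data structure $\D$ at level $q<L$ coincides with a single epoch of its parent; once the parent reinitializes, $\D$ is discarded and recreated from scratch. Upon such a creation $\D$ holds a one-vertex compressed instance (recall $\Icur\{\emptyset\}$ collapses all of $\Gcur$ into the single vertex $u_\ominus$), so the creation costs $\Oh(1)$; the only costly events are the reinitializations $\D$ performs at the boundaries of its own epochs, and these are what we must amortize.

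First I would bound the cost of one reinitialization of a data structure $\D$ at level $q\ge 2$ maintaining an instance $I$ with $|V(I)|\le n^{q/L}$ (invariant \ref{item:domset-struct-size}). Such a reinitialization runs the Baker partitioning of $I$ and instantiates the structure of \cref{lem:domset-dynamic-compression} on each of the $k$ universes $\Clear(\Icur;V_i)$, which are $(\shat(q),\dhat(q))$-decent and of treewidth $\Oh(k)$; by \cref{lem:domset-dynamic-compression} each instantiation costs $2^{\Oh(\shat(q)\cdot k)}\cdot\dhat(q)\cdot n^{q/L}\log n$ time. Using \cref{cl:domset-sd-parameters} ($\shat(q)\in k^{\Oh(L)}$, $\dhat(q)\in 2^{k^{\Oh(L)}}$), and observing that the recursive respawning of the $\Oh(k^{q-1})$ descendant data structures only creates one-vertex instances, the whole reinitialization of $\D$ costs at most $n^{q/L}\cdot\log n\cdot 2^{k^{\Oh(L)}}$.

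Next I would amortize. If $u$ updates have been relayed to $\D$ since its creation, then $\D$ has completed at least $t-1$ full epochs of $\tau_q$ updates, hence has reinitialized at most $t-1\le u/\tau_q$ times, so the reinitialization time spent inside this incarnation of $\D$ is at most $\frac{u}{\tau_q}\cdot n^{q/L}\log n\cdot 2^{k^{\Oh(L)}}$. Plugging in $\tau_q=n^{(q-1)/L}/(k^L\log n)^c$ makes the ratio $n^{q/L}/\tau_q$ equal to $n^{1/L}(k^L\log n)^c$, and since $k^{cL}\le 2^{k^{\Oh(L)}}$ this is $u\cdot n^{1/L}\cdot(\log n)^{\Oh(1)}\cdot 2^{k^{\Oh(L)}}$ — an amortized cost of $n^{1/L}(\log n)^{\Oh(1)}2^{k^{\Oh(L)}}$ per update relayed to $\D$, uniformly over all levels. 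Summing this over all data structures and invoking \cref{cl:domset-number-of-updates}, which bounds by $k^{\Oh(L^2)}(\log n)^{\Oh(L)}$ the number of updates one update to $\D^{\mathrm{main}}$ generates across the whole tree, the total amortized reinitialization time per update to $\D^{\mathrm{main}}$ is $n^{1/L}\cdot k^{\Oh(L^2)}(\log n)^{\Oh(L)}\cdot 2^{k^{\Oh(L)}}$. With $k=\lceil L/\delta\rceil$ we have $k^{\Oh(L^2)}=2^{\Oh(L^2\log(L/\delta))}$, $(\log n)^{\Oh(L)}=2^{\Oh(L\log\log n)}$, and $2^{k^{\Oh(L)}}=2^{(L/\delta)^{\Oh(L)}}$; since $L^2\log(L/\delta)$ is dominated by $(L/\delta)^{\Oh(L)}$, all of this collapses to $n^{1/L}\cdot 2^{(L/\delta)^{\Oh(L)}+\Oh(L\log\log n)}$, as claimed.

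The only loose ends are the $\Oh(1)$-cost one-vertex (re)creations of the descendant data structures — their total number is bounded by the total number of epochs started across the tree and is easily absorbed into the stated bound — and the single initialization of $\D^{\mathrm{main}}$ on the full $n$-vertex instance, which costs $2^{k^{\Oh(L)}}n\log n$ and is accounted for in the (one-time) initialization time of the whole data structure rather than per update. I do not anticipate a genuine obstacle: the argument is essentially a line-by-line transcription of the independent-set analysis, and the only thing requiring care is tracking the exponents precisely through the final simplification so that they land exactly on $2^{(L/\delta)^{\Oh(L)}+\Oh(L\log\log n)}$.
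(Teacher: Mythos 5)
Your proof is correct and follows essentially the same route as the paper's: bound a single reinitialization at level $q\ge 2$ by $n^{q/L}\log n\cdot 2^{k^{\Oh(L)}}$ via \cref{lem:domset-dynamic-compression} and \cref{cl:domset-sd-parameters}, charge it against the $\tau_q$ updates of the preceding epoch to get an amortized per-relayed-update cost of $n^{1/L}(\log n)^{\Oh(1)}2^{k^{\Oh(L)}}$, then multiply by the $k^{\Oh(L^2)}(\log n)^{\Oh(L)}$ bound from \cref{cl:domset-number-of-updates} and simplify. The observations about $\Oh(1)$-time recreation of descendants and the separate accounting of $\D^{\mathrm{main}}$'s initial construction match the paper's remarks.
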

  \begin{claimproof}
    Consider an~initialization or a~reinitialization of a~data structure $\D$, maintaining an~instance $I$ of \MinGeneralizedDom, at some level $q \ge 2$.
    The (re)initialization takes time $|V(I)| \log |V(I)| \cdot 2^{k^{\Oh(L)}}$ by \cref{lem:domset-dynamic-compression} and \cref{cl:domset-sd-parameters}.
    Thanks to the invariants, we have $|V(G)| \le n^{q/L}$ and $|E(G)| = O(|V(G)|)$, hence the (re)initialization time is bounded by $n^{q/L} \log n \cdot 2^{k^{\Oh(L)}}$.
    Note that the (re)initialization of $\D$ causes all children data structures to be destructed and reset to constant-sized instances; however, this can be done in amortized constant time per instance.
    
    Consider the state of $\D$ after some $u$ updates; let $t = \left\lceil u / \tau_q \right\rceil$ be the index of the current epoch of $\D$, so that the lifetime of $\D$ is assumed to contain $t - 1$ full epochs so far.
    Then, the total time of all reinitializations of $\D$ so far is bounded by $(t - 1) \cdot n^{q/L} \log n \cdot 2^{k^{\Oh(L)}} \leq (t - 1) \cdot \tau_q \cdot n^{1/L} \cdot (k^L \log n)^{c+1} \cdot 2^{k^{\Oh(L)}} \leq u \cdot n^{1/L} (\log n)^{\Oh(1)} \cdot 2^{k^{\Oh(L)}}$.
    Hence, the amortized time for the reinitializations of $\D$ can be bounded by $n^{1/L} \cdot (\log n)^{\Oh(1)} \cdot 2^{k^{\Oh(L)}}$ per an~update to $\D$.
    
    As each update to $\D^{\mathrm{main}}$ causes $k^{\Oh(L^2)} (\log n)^{\Oh(L)}$ updates to all auxiliary data structures in total and there are at most $k^L$ auxiliary data structures, we conclude that the total amortized time required for all reinitializations per a~single update to $\D^{\mathrm{main}}$ is bounded by \[k^{\Oh(L^2)} \cdot n^{1/L} \cdot (\log n)^{\Oh(L)} \cdot 2^{k^{\Oh(L)}} \leq n^{1/L} (\log n)^{\Oh(L)} \cdot 2^{(\frac{L}{\delta})^{\Oh(L)}} = n^{1/L} \cdot 2^{(\frac{L}{\delta})^{\Oh(L)} + \Oh(L \log \log n)}.\]
  \end{claimproof}
  
  Note that \cref{cl:domset-time-without-reinitializations,cl:domset-time-only-reinitializations} provide the same time complexity bounds as \cref{lem:time1,lem:time2} in the setting of \CSP.
  Therefore, the following final time complexity analysis is shown by the same argument as \cref{lem:time3}, hence we omit its proof.
  \begin{claim}
    $L$ may be set so that the amortized time of an~update to $\D^{\mathrm{main}}$ is $n^{\Oh\left(\frac{\log \log \log n}{\log \log n \cdot \delta}\right)} = n^{o\left(\frac{1}{\delta}\right)}$.
  \end{claim}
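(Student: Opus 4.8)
The plan is to follow the proof of \cref{lem:time3} essentially verbatim, since \cref{cl:domset-time-without-reinitializations,cl:domset-time-only-reinitializations} give exactly the same bound $n^{1/L}\cdot 2^{(L/\delta)^{\Oh(L)}+\Oh(L\log\log n)}$ on, respectively, the per-update processing time excluding reinitializations and the amortized reinitialization cost, per update to $\D^{\mathrm{main}}$ — just as \cref{lem:time1,lem:time2} did in the independent-set setting. Adding the two contributions, the amortized update time of $\D^{\mathrm{main}}$ is again of the form $n^{1/L}\cdot 2^{(L/\delta)^{\Oh(L)}+\Oh(L\log\log n)}$, and it remains only to choose $L$ so as to balance the two parts of the exponent.

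First I would set $L\coloneqq \frac{\log\log n}{\log\log\log n}\cdot \delta\cdot c_0$ for a small absolute constant $c_0\in(0,1)$ independent of $\delta$; this makes $L$ super-constant (so the recursion is genuinely deep), and small enough that — as already observed in \cref{cl:domset-sd-parameters} — the domain sizes $\dhat(q)\in 2^{k^{\Oh(L)}}$ stay subpolynomial in $n$. For this choice one checks $\Oh(L\log\log n)\le (L/\delta)^{\Oh(L)}$, so the bound simplifies to $n^{1/L}\cdot 2^{(L/\delta)^{\Oh(L)}}$. Then I would substitute: on one hand $n^{1/L}=2^{\frac{\log n\cdot\log\log\log n}{\log\log n\cdot\delta\cdot c_0}}$; on the other hand $\frac{L}{\delta}=c_0\cdot\frac{\log\log n}{\log\log\log n}$, so $(L/\delta)^{\Oh(L)}=2^{\Oh(L\log(L/\delta))}=2^{\Oh(\delta c_0\log\log n)}=(\log n)^{\Oh(\delta c_0)}$, whence the second factor equals $2^{(\log n)^{\Oh(\delta c_0)}}$. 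Choosing $c_0$ small enough that the hidden constant times $\delta c_0$ is at most $\tfrac12$, the second factor is at most $2^{\sqrt{\log n}}=2^{o(\log n)}$, which is dominated by the first factor $n^{\Theta(\log\log\log n/(\log\log n\cdot\delta))}$. Therefore the amortized update time is $n^{\Oh\left(\frac{\log\log\log n}{\log\log n\cdot\delta}\right)}$, and since $\frac{\log\log\log n}{\log\log n}=o(1)$ this is $n^{o(1/\delta)}$, as claimed.

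The work here is purely arithmetical bookkeeping, and the only point that needs care — the same point as in \cref{lem:time3} — is verifying that with this particular $L$ the additive $\Oh(L\log\log n)$ term in the exponent is absorbed by $(L/\delta)^{\Oh(L)}$, and that the tower $2^{(L/\delta)^{\Oh(L)}}$ collapses to $2^{(\log n)^{\Oh(\delta c_0)}}$ and thus to $n^{o(1)}$ for a sufficiently small absolute constant $c_0$. No obstacle beyond these routine estimates is anticipated.
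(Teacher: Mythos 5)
Your proposal is correct and follows exactly the argument the paper intends: the paper explicitly omits the proof of this claim, noting that Claims~\cref{cl:domset-time-without-reinitializations,cl:domset-time-only-reinitializations} give the same bound $n^{1/L}\cdot 2^{(L/\delta)^{\Oh(L)}+\Oh(L\log\log n)}$ as in the independent-set analysis and that the final step is identical to Lemma~\ref{lem:time3}. You have reproduced that argument verbatim, carefully renaming the auxiliary constant ($\delta$ in Lemma~\ref{lem:time3}) to $c_0$ so as to avoid a clash with the accuracy parameter $\delta$ used throughout the Dominating Set section, which is exactly the right bookkeeping.
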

  Finally, as before, we use the following trick to separate $\delta$ from $n$ in the final time complexity.
  If $\log \log n > \frac{1}{\delta^2}$, then $\delta > \sqrt{\log \log n}$, hence $n^{\Oh\left(\frac{\log \log \log n}{\log \log n \cdot \delta}\right)} = n^{\Oh\left(\frac{\log \log \log n}{\sqrt{\log \log n}}\right)} = n^{o(1)}$.
  In the opposite case, we have $n \le 2^{2^{1 / \delta^2}}$; in this case, on every update, we rerun the static algorithm from \cref{lem:domset-static-baker} in time $n s d^{\Oh(1 / \delta)} \leq 2^{2^{\Oh(1 / \delta^2)}}$.
  In any case, the time complexity of an~update is bounded by $f(\delta) \cdot n^{\Oh\left(\frac{\log \log \log n}{\sqrt{\log \log n}}\right)}$, where $f(\delta) \in 2^{2^{\Oh(1 / \delta^2)}}$.
  This finishes the proof.
\end{proof}

%
%
%
%

\bibliographystyle{plain}
\bibliography{references}

\appendix

\end{document}